\documentclass[runningheads]{llncs}

\usepackage[T1]{fontenc}
\usepackage{graphicx}
\usepackage{hyperref}
\usepackage{color}

\spnewtheorem{observation}{Observation}{\bfseries}{\itshape}
\spnewtheorem{rrule}{Reduction Rule}{\bfseries}{\itshape}
\usepackage{amsmath} %because amsmath needs to be before cleveref ...
\usepackage{cleveref}
\Crefname{observation}{Observation}{Observations}
\Crefname{rrule}{Reduction Rule}{Reduction Rules}
\Crefname{figure}{Fig.}{Figs.}
\Crefname{claim}{Claim}{Claims}

\usepackage{amsthm}
\usepackage{amsmath}
\usepackage{amssymb}
\usepackage[shortlabels]{enumitem}

\usepackage{todonotes}
\usepackage{xspace}
\usepackage{tikz-cd}
\usepackage{booktabs}
\usepackage{comment}
\usepackage{algorithm}
\usepackage[noend]{algpseudocode}
\usepackage{bm}
\usepackage{tikz}
\usepackage{thmtools} %restatable
\usetikzlibrary{decorations.pathmorphing}
\usetikzlibrary{calc}
\usetikzlibrary{positioning, arrows.meta, shapes.geometric, calc, backgrounds}
\usepackage{fullpage}

\newcommand{\NP}{{\sf NP}\xspace}
\newcommand{\coNP}{{\sf coNP}\xspace}

\newcommand{\FPT}{{\sf FPT}\xspace}

\newcommand{\poly}{\ensuremath{\operatorname{poly}}}
\newcommand{\prefix}{\pi}
\newcommand{\suffix}{\sigma}
\newcommand{\total}{\tau}
\newcommand{\combined}{\gamma}

\newcommand{\vcn}{\ensuremath{\operatorname{vcn}}\xspace} % Vertex cover number
\newcommand{\tw}{\ensuremath{\operatorname{tw}}\xspace} % Tree-width
\newcommand{\fen}{\ensuremath{\operatorname{fen}}\xspace} % feedback edge set
\usepackage{etoolbox}
\usepackage{placeins}
\ifdefined\ShortVersion
\newcommand{\sv}[1]{#1}
\newcommand{\lv}[1]{}
\newcommand{\appendixText}{}
\newcommand{\toappendix}[1]{\gappto{\appendixText}{{#1}}}
\else
\newcommand{\sv}[1]{}
\newcommand{\lv}[1]{#1}
\newcommand{\appendixText}{}
\newcommand{\toappendix}[1]{#1}
\fi
\newcommand{\appmark}{$\star$}

\newcommand{\appsection}[2]{\section{#1}\label{#2}\toappendix{\sv{\section{Omitted material from \Cref{#2}: #1}}}}

\newcommand{\appsubsection}[2]{\subsection{#1}\label{#2}\toappendix{\sv{\subsection{Omitted material from \Cref{#2}: #1}}}}

\newenvironment{apprestatable}[2]{\sv{\restatable[\appmark]{#1}{#2}}\lv{\restatable{#1}{#2}}}{\endrestatable}

\begin{document}

\newcommand{\MST}{\ensuremath{\operatorname{MST}}\xspace} %min spanning tree
\newcommand{\STEINER}{\ensuremath{\operatorname{STEINER}}} 
\newcommand{\MSTG}{{\sc MST-Game}\xspace} %
\newcommand{\MSTGs}{{\sc MSTG}\xspace}
\newcommand{\MSTGcore}{{\sc MSTG Core Non-Membership}\xspace}
\newcommand{\kMSTGcore}{{\sc $k$-MSTG Core Non-Membership}\xspace}
\newcommand{\supp}{\ensuremath{\operatorname{supp}}} %support
\newcommand{\opt}{\ensuremath{\operatorname{opt}}}
\newcommand{\DP}{\ensuremath{\operatorname{DP}}}

\Crefname{rrule}{Reduction Rule}{Reduction Rules}
\Crefname{claim}{Claim}{Claims}

\title{Core stability recognition for minimum-cost spanning tree games: Parameterized perspective}
%
%\titlerunning{Abbreviated paper title}
% If the paper title is too long for the running head, you can set
% an abbreviated paper title here
%

\author{Michal {Dvořák}\orcidID{0000-0002-5048-773X} \and
Ioannis {Kakatelis}\orcidID{0009-0003-5335-2309} \and
Dušan {Knop}\orcidID{0000-0003-2588-5709}}

\authorrunning{Dvořák et al.}
% First names are abbreviated in the running head.
% If there are more than two authors, 'et al.' is used.
%
\institute{Czech Technical University, Prague, Czech Republic \email{\{michal.dvorak,ioannis.kakatelis,dusan.knop\}@fit.cvut.cz}}

\maketitle              % typeset the header of the contribution

\begin{abstract}
 Minimum-cost spanning tree game (MSTG) is a cooperative game played on an undirected edge-weighted graph $(G,w)$ representing the network, where each vertex corresponds to a player and each edge has an associated cost~$w$. A distinguished vertex $s \in V(G)$ represents the supply or source. For any coalition of players $S$, the characteristic cost function $c(S)$ is defined as the minimum cost of a spanning tree with respect to $w$, connecting exactly the vertices in $S \cup \{s\}$. In this paper we study the computational complexity of deciding core membership for MSTG. In general, deciding whether a given allocation is in the core is \textsf{coNP}-hard~(Faigle et al.~\cite{faigle97coNPhard}). We study the core recognition problem under the name \MSTGcore. We extend the hardness to graphs which are very close to being planar. On the positive side, we present several algorithmic results within the framework of parameterized complexity. We show that \MSTGcore is fixed-parameter tractable when parameterized by the support size of the allocation. Turning into structural parameters of graphs, we show that the problem admits an \FPT algorithm parameterized by treewidth and signed neighborhood diversity. Last but not least, we investigate kernelization. While in general graphs, under standard complexity-theoretical assumptions, \MSTGcore does not admit a polynomial kernel parameterized by the vertex cover number, we design a cubic kernel in planar graphs. Furthermore, in general graphs, we obtain quadratic kernel for signed neighborhood diversity and linear kernel for the parameter feedback edge number.

\keywords{minimum-cost spanning tree game \and core stability \and parameterized complexity \and  kernelization \and data reduction rules}
\end{abstract}

\clearpage
\section{Introduction}
The minimum-cost spanning tree (\MST) problem is a classical and fundamental problem in network optimization, with broad applications in communication, transportation, and computer networks~\cite{papadimitriou98,schrijver03}. Given a connected edge-weighted graph, where vertices may represent cities or users and edges represent possible connections with associated costs, the objective is to select a subset of edges that connects all vertices while minimizing the total cost. Efficient algorithms to solve the \MST problem are well known~\cite{chung96,jarnik30,kruskal56,prim57}.

The minimum-cost spanning tree game (MSTG) extends the MST into a cooperative game-theoretic framework~\cite{Moulin95,rothe2024economics}. In MSTG, we are given an undirected edge-weighted graph $(G,w)$ representing the network, where each vertex corresponds to a player and each edge has an associated cost~$w$. A distinguished vertex $s \in V(G)$ represents the supply or source. For any coalition of players $S$, the characteristic cost function $c(S)$ is defined as the minimum cost of a spanning tree with respect to $w$, connecting exactly the vertices in $S \cup \{s\}$. Thus, MSTG models situations in which agents must jointly establish connections to access a common resource or service, sharing the total cost of the resulting network.

Minimum-cost spanning tree games arise naturally in cost allocation problems, where a joint infrastructure or service -- such as a communication network or energy distribution system -- must be built collectively. MST games were introduced by Claus and Kleitman~\cite{ClausK1973} and later extensively studied~\cite{GranotH81,bird1976cost,megiddo1978computational}. MSTGs are therefore a key example of cost-sharing cooperative games, where players form coalitions to minimize joint expenses, and the central question concerns how the total cost should be allocated among participants in a way that satisfies fairness and stability criteria.

In this paper, we focus on the \emph{core} of MSTGs, which represents the set of imputations for which no coalition has an incentive to deviate. An \emph{imputation} is an efficient and individually rational allocation of the total cost among players. Formally, for a cost game with player set $V$ and characteristic cost function $c\colon 2^V\to \mathbb{R}$, an allocation vector $\bm{x} \in \mathbb{R}^V$ (or, equivalently, an allocation function $x\colon V \to \mathbb{R}$; we use the functional notation in our paper) is an
\emph{imputation} if it satisfies:
\begin{description}
    \item[Efficiency] $\sum_{v \in V} x(v) = c(V)$.
    \item[Individual Rationality] $x(v) \le c(\{v\})$ for all $v \in V$.
\end{description} 
Among all imputations, particular interest is given to those that are stable against deviations by coalitions of players. The set of such stable imputations is known as the \emph{core}, defined as
\[
    \operatorname{core}(c)
    =
    \left\{ x\colon V \to \mathbb{R} \mid \sum_{v \in V} x(v) = c(V), \sum_{v \in S} x(v) \le c(S) \ \forall S \subseteq V \right\}
    \,.
\]

An allocation in the core ensures that no subset of players can form a coalition to construct its own spanning tree at a lower cost, thus guaranteeing that the grand coalition of all $V$ is stable.

A fundamental and appealing property of MSTGs is that their core is always nonempty. Furthermore, an allocation in the core can be found efficiently in linear time, given the MST of the underlying weighted graph. The player $v$ is allocated the cost of the first edge on the unique path from $v$ to the supply vertex in the minimum spanning tree. This rule is due to Bird~\cite{bird1976cost}. However, while finding \emph{some} allocation in the core is computationally easy, understanding the full structural boundaries of the core is much more complex. Our aim is to provide a combinatorial characterization of the core, which facilitates the recognition of whether a given allocation is stable.

In particular, we are interested in the following algorithmic question: Given an allocation $x\colon V \to \mathbb{R}$, does it belong to the core? In general, this question is \coNP-hard in the strong sense~\cite{faigle97coNPhard}. In other words, the problem remains hard even if the input weights $w$ and values $x$ are encoded in unary. Surprisingly, while the classical complexity of this problem is well-established, its parameterized complexity has not yet been systematically explored. To bridge this gap, we investigate the core recognition problem through the lens of parameterized complexity, aiming to identify structural parameters of the underlying graph $G$ that render the problem tractable.

%\subsubsection{Related Work.}
\smallskip
\noindent\textbf{Related Work.}
While finding some allocation in the core is computationally easy~\cite{bird1976cost} the core often contains a vast number of possible allocations. Consequently, a rich line of literature focuses on identifying specific core allocations that satisfy additional fairness and economic properties. For example, Dutta and Ray~\cite{DuttaR89} formalized the concept of egalitarian equivalence under participation constraints to find allocations that maximize fairness within the core. Similarly, Dutta and Kar~\cite{dutta2004cost} and Bergantiños and Vidal-Puga~\cite{bergantinos2007fair} have proposed and axiomatized allocation rules for MSTGs that reside firmly in the core while satisfying crucial properties like cost monotonicity. 

However, enumerating, verifying, or optimizing these highly desirable fair-and-stable allocations requires a deep structural understanding of the core's boundaries, which brings us to its computational complexity. Testing whether a given allocation belongs to the core is \coNP-hard in the strong sense~\cite{faigle97coNPhard}. This intractability result is a well-documented hurdle in combinatorial optimization games; Deng et al.~\cite{deng1999algorithmic} demonstrated that the complexity of the core is often deeply tied to the integrality gaps of the underlying graph problems. Because exact core membership testing is often intractable, recent literature has explored relaxed stability concepts, such as algorithms for computing a $4$-approximate core for MSTGs~\cite{kumabe2024lipschitz}. Furthermore, structural restrictions on the underlying network have been shown to drastically reduce complexity. Bachrach et al.~\cite{bachrach2014sharing} demonstrated that while testing for relaxed core imputations in general connectivity games is \coNP-complete, the problem becomes polynomial-time solvable when the network topology is restricted to a tree. The core stability has also been explored in another class of games called information graph games~\cite{nunez2022stable,kuipers1993core}. An information graph game is a special case of \MSTGs where the underlying graph has binary weights on the edges.

%\subsubsection{Our contribution.}
\smallskip
\noindent\textbf{Our contribution.}
Motivated by structural restrictions, our work systematically explores the parameterized complexity of the \MSTGs core recognition problem which we call \MSTGcore (see \Cref{sec:preliminaries} for proper definitions).

We begin by strengthening the known \NP-hardness results in several directions~(\Cref{sec:np_hardness}). First, we show that the problem remains \NP-hard even when the input graph is close to being planar. We even obtain \NP-hardness when the underlying graph is planar; nevertheless, we relax the efficiency condition in this case. Second, we observe that the problem remains \NP-hard even for graphs with maximum degree $3$. Lastly, we establish \NP-hardness for the restriction with binary edge weights and allocation. This is directly motivated by the graph information game, demonstrating that identifying core allocations remains computationally difficult even under severe weight restrictions.

Turning to positive algorithmic results, we establish a structural connection between \MSTGcore and the classic \textsc{Steiner Tree} problem~(\Cref{sec:support_fpt}). By exploiting this connection, we show that \MSTGcore admits single-exponential \FPT algorithm parameterized by the support size of the allocation, i.e., by the size of the set $\supp(x)=\{v\in V \mid x(v)\neq 0\}$.

The same connection to \textsc{Steiner Tree} further enables us to design an \FPT algorithm parameterized by treewidth~(\Cref{sec:treewidth}) based on dynamic programming over a nice tree decomposition of the underlying graph.

In~\Cref{sec:snd}, we show that \MSTGcore is solvable in $O(2^{\operatorname{snd}}\log (\operatorname{snd})(n+m))$ time, where $\operatorname{snd}$ is the signed neighborhood diversity of the underlying weighted graph. The algorithm is obtained by reducing the problem to multiple computations of a minimum spanning tree in the type graph and then greedily computing the best possible solution. Beyond this \FPT algorithm, we investigate kernelization with respect to this parameter. We design a data reduction rule that groups vertices with the same behavior inside each part of a $w$-uniform partition (see \Cref{sec:preliminaries} for definition) of the vertex set. The reduction rule alone produces a kernel with $O(\operatorname{snd}^2)$ vertices. Applying the classical framework of Frank and Tardos~(see~\Cref{thm:FT}), we compress this into a truly polynomial kernel.

Considering the parameter feedback edge number of the underlying graph, we show that \MSTGcore admits a kernel with a linear number of vertices and edges~(\Cref{sec:fen}). To achieve this, we introduce two novel data reduction rules. The first rule rigorously prunes leaf vertices; notably, applying this rule in isolation yields a remarkably simple linear-time algorithm for trees. The second rule compresses weights of the edges and values of vertices along long paths consisting of degree-$2$ vertices. As a result of independent structural interest, we formally show that a potential blocking coalition admits only a constant number of `meaningful' configurations along such path -- intuitively, a deviating coalition must either consume the entire path or strictly restrict itself to specific segment near the endpoints. 

Our kernelization analysis continues in \Cref{sec:vertex_cover} with the vertex cover number. First, we show that in general graphs, \MSTGcore does not admit a polynomial kernel under the classical assumption $\NP\not\subseteq \coNP/_{\operatorname{poly}}$ by providing a polynomial parameter transformation from the \textsc{Red-Blue Dominating Set} problem. Second, we introduce a new reduction rule that groups vertices of degree $2$ with common endpoints into constantly many structurally distinct `types', which, together with leaf pruning, yields a kernel with $O(\vcn^3)$ vertices in planar graphs.

%\subsubsection{Paper Organization}
\smallskip
\noindent\textbf{Paper Organization.}
The rest of the paper is organized as follows.
In~\Cref{sec:preliminaries} we introduce the main notation and terminology used throughout the paper. In~\Cref{sec:np_hardness,sec:support_fpt,sec:treewidth,sec:fen,sec:vertex_cover,sec:snd} we present the main results of our paper, as described above. Finally, in~\Cref{sec:conclusion} we conclude the paper and discuss future research directions.

\sv{
Statements where proofs or details are omitted due to space constraints are marked with~\appmark. The omitted material is available in the appendix.
}

\appsection{Preliminaries}{sec:preliminaries}
For a nonnegative integer $k$ we denote $[k]:=\{1,2,\ldots,k\}$. \lv{Let $X$ be a set. A \emph{partition} of $X$ is a set $\{X_1,\ldots, X_\ell\}$ of pairwise disjoint subsets of $X$ with $\bigcup_{i = 1}^{\ell} X_i=X$. The set of all partitions of $X$ is denoted by $\Pi(X)$.}

%\subsubsection{Graph Theory.}
\smallskip
\noindent\textbf{Graph Theory.}
We assume the reader is familiar with basic concepts of graph theory. For more definitions and comprehensive overview, refer to the monograph of Diestel~\cite{diestel}. All considered graphs are finite, simple and undirected. We denote by $V(G)$, resp. $E(G)$ the sets of vertices and edges of graph $G$ and we use $n=|V(G)|$ and $m=|E(G)|$. $N_G(v)$ is the \emph{neighborhood} of a vertex $v\in V(G)$, and $N_G(S)=\bigcup_{v\in S}N_G(v)$ for $S\subseteq V(G)$. The subgraph of $G$ induced by the vertex set $X\subseteq V(G)$ is denoted $G[X]$. Let $G$ be a graph and $e$ any edge. We let $G+e$ be the graph resulting by the addition of $e$ to $G$. Note that if the endpoints of the edge  $e$ do not exist in~$G$, we add them as new vertices. Formally, $G+e:=(V(G)\cup e,E(G)\cup \{e\})$. \lv{We extend the notation $G+e$ naturally to sets of edges: $G+\{e_1,e_2,\ldots e_k\}:=G+e_1+e_2+\cdots+e_k$.} A graph $G$ is \emph{$1$-planar} if it admits a drawing in the plane such that each edge is crossed at most once. A graph $G$ is a \emph{$k$-apex graph} if there is a set of at most $k$ vertices in $G$ whose removal renders $G$ planar.

\toappendix{
Let $G$ and $G'$ be two graphs with possibly overlapping vertex sets. For $S\subseteq V(G)$ we let $S|_{G'}$ be the restriction of $S$ to $G'$, i.e., $S|_{G'}:=S\cap V(G')$. Similarly, if $H$ is a subgraph of $G$, then $H|_{G'}$ is the restriction of $H$ to the vertices and edges of $G'$. Formally $H|_{G'}:=(V(H)\cap V(G'),E(H)\cap E(G'))$. Note that $H|_{G'}$ is also a (possibly empty) graph.
}%toappendix

An \emph{edge-weighted graph} is a pair $(G,w)$, where $G=(V,E)$ is a graph and $w\colon E\to \mathbb{Q}^+_0$. Let $S\subseteq V$. The set of minimum spanning trees of $G[S]$ is denoted $\MST_G(S)$ and $w(\MST_G(S))$ is the weight of a minimum spanning tree of $G[S]$ (or $+\infty$ if $G[S]$ is disconnected). A \emph{Steiner tree for $S$} is a subtree $T_S$ of $G$ with $V(T_S)\supseteq S$ ($S$ is the set of \emph{terminals}). We denote $\STEINER_G(S)$ the set of Steiner trees for $S$ of minimum weight and $w(\STEINER_G(S))$ is the minimum weight of a Steiner tree for $S$ (or $\infty$ if no Steiner tree for $S$ exists).

For any function $x\colon V\to \mathbb{Q}^+_0$ and $S\subseteq V$ we let $x(S):=\sum_{v\in S}x(v)$ and for a any subgraph $H$ of $G$ we let $x(H):=x(V(H))$. Similarly, we let $w(H):=w(E(H)):=\sum_{e\in E(H)}w(e)$.

%\subsubsection*{Problem Definition}
\smallskip
\noindent\textbf{Problem Definition.}
The central problem of our paper is \MSTGcore defined as follows. The input is a $4$-tuple $(G,s,x,w)$, where $(G,w)$ is an edge-weighted graph, $x\colon V\to \mathbb{Q}^+_0$ is an allocation\footnote{While the general notion of an allocation in cooperative game theory is defined over $\mathbb{R}$, we restrict our focus to the (non-negative) rationals to ensure the problem is amenable to algorithmic and computational complexity analysis.}, and $s\in V(G)$ is the supply vertex. The task is to decide whether there is a set $S\subseteq V(G)$, $S\ni s$ such that $x(S)>w(\MST_G(S))$.

%\begin{center}
%\begin{tabular}{|r|l|}
%\hline
%                    & \MSTGcore \\\hline
%     \textsc{Input:} & Edge-weighted graph $(G,w)$, allocation $x\colon V\to \mathbb{Q}^+_0$, $s\in V(G)$ \\\hline
%     \textsc{Question:}& Is there a set $S\ni s$ such that $x(S)> w(\MST_G(S))$?\\\hline
%\end{tabular}
%\end{center}
As a special case, if $x$ is an imputation, i.e., $x(G)=w(\MST_G(V))$ and $x(v)\leq w(\MST_G(\{v,s\}))$, we are deciding whether $x$ is in the core of the MSTG given by $(G,w,s)$. More precisely, $(G,s,x,w)$ is a yes-instance of \MSTGcore if and only if the imputation~$x$ is \emph{not} in the core. We note that our algorithms do not assume the efficiency condition to hold. On the other hand, all but one of our hardness reductions produce instances where the efficiency condition holds. To establish an equivalent structural characterization of the problem, observe that if $T$ is any spanning tree of $G[S]$ and $x(S)>w(T)$, it follows that $x(S)>w(T)\geq w(\MST_{G}(S))$. Consequently, the task in \MSTGcore is equivalent to finding a subtree $T$ of $G$ with $V(T)\ni s$ such that $x(T)>w(T)$.

\smallskip\noindent\textit{Weights.}
All considered weights and allocations are assumed to be nonnegative. While our hardness reductions produce integral weights and allocations, our algorithms operate over rational numbers. Some reduction rules may introduce non-integral fractions, but the largest denominator in any reduced form is at most $2$. Note that any such instance can be scaled to integers by multiplying everything by the least common multiple of the denominators. Consequently, to ensure generality across both algorithmic and hardness results, we assume that the weights and allocations take values from $\mathbb{Q}^+_0$.

%\subsubsection{Parameterized Complexity.} 
\smallskip\noindent\textbf{Parameterized Complexity.}
We assume the reader to be familiar with the concepts of parameterized complexity. For more definitions and comprehensive overview, refer to the monograph of Cygan et al.~\cite{cygan2015parameterized}. A parameterized problem is (in the class) \FPT if it is solvable in $f(k)\cdot n^{O(1)}$ time for some computable function~$f$ where $k$ is the parameter and $n$ is the input size.

\smallskip\noindent\textit{Kernelization.}
Let $f$ be a computable function. A \emph{kernel of size $f$} for a parameterized problem $L$ is an algorithm that given an instance $(\mathcal{I},k)$ of $L$, runs in $\poly(|\mathcal{I}|+k)$ time, and returns an equivalent instance $(\mathcal{I}',k')$ of $L$ such that $|\mathcal{I}'| + k' \leq f(k)$.

\smallskip\noindent\textit{Parameterized Reductions.}
Let $L$ and $L'$ be two parameterized problems. A \emph{parameterized reduction from $L$ to $L'$} is an algorithm that given an instance $(\mathcal{I},k)$ of $L$, runs in $f(k)\cdot |\mathcal{I}|^{O(1)}$ time, and returns an equivalent instance $(\mathcal{I}',k')$ of $L'$ such that $k'\leq g(k)$ for some computable functions $f,g$. If both $f$ and $g$ are polynomial functions, the reduction is referred to as \emph{polynomial parameter transformation} (ppt). We utilize ppt to show a non-existence of polynomial compressions~(see \cite[Definition 15.8.]{cygan2015parameterized}) (hence polynomial kernels) under the assumption $\NP\not\subseteq \textsf{coNP}/_{\operatorname{poly}}$. We remark that $\NP\subseteq \textsf{coNP}/_{\operatorname{poly}}$ implies that the polynomial hierarchy collapses to the third level~\cite{Yap1983}.

In our paper we consider the following structural parameters of (edge-weighted) graphs: \emph{vertex cover number}, \emph{feeback edge number}, \emph{treewidth}, and \emph{signed neighborhood diversity}. The definitions of these follow. Let $G=(V,E)$ be a graph. A set $S\subseteq V$ is a \emph{vertex cover} of $G$ if $G[V\setminus S]$ is edgeless. The \emph{vertex cover number} of~$G$ is the size of the smallest vertex cover of $G$ and is denoted as $\vcn(G)$. A set of edges $F\subseteq E$ is \emph{feedback edge set} of $G$ if $G\setminus F$ is acyclic. The smallest size of a feedback edge set of $G$ is the \emph{feedback edge number} of $G$, denoted by $\operatorname{fen}(G)$. Note that $\operatorname{fen}(G)=m-n+c$ where $c$ is the number of connected components of $G$.

%\subsubsection{Treewidth}
%\noindent\textbf{Treewidth}.
%\paragraph{Treewidth.}
\smallskip
\noindent\textit{Treewidth}.
A \emph{tree decomposition} of a graph $G$ is a pair $(\mathcal{T},\beta)$, where $\mathcal{T}$ is a tree and $\beta\colon V(\mathcal{T})\to 2^V$ is a mapping that assigns to  each node $t\in V(\mathcal{T})$ a set $\beta(t)\subseteq V$ of vertices and this set is referred to as \emph{bag} of~$t$. Moreover, the following three conditions hold:
\begin{enumerate}[label=(T\arabic*)]
    \item $\bigcup_{t \in V(\mathcal{T})} \beta(t) = V(G)$.\label{item:tw1}
    \item For every edge $\{u,v\} \in E(G)$, there exists a node $t \in V(\mathcal{T})$, such that $\{u,v\}\subseteq \beta(t)$.\label{item:tw2}
    \item For every vertex $v \in V(G)$, the subgraph $\mathcal{T}[\{t \in V(\mathcal{T})\mid v\in \beta_{t}\}]$ is connected.\label{item:tw3}
\end{enumerate}
The \emph{width} of a tree decomposition $(\mathcal{T},\beta)$ is $\max_{t \in V(\mathcal{T})}|\beta(t)| - 1$ and the \emph{treewidth} of $G$, denoted by $\operatorname{tw}(G)$, is the minimum width of a tree decomposition of $G$.

%It is known that computing a tree decomposition of minimum width is in $\FPT$ parameterized by treewidth ~\cite{bodlaender1993linear, kloks1994treewidth} and also efficient algorithms exist for achieving near-optimal-tree-decompositions~\cite{korhonen2023improved}. 

\toappendix{
\sv{\subsubsection{Extended preliminaries for treewidth}}
Towards applying dynamic programming on tree decompositions, we assume that the underlying tree is rooted at some node $r\in V(\mathcal{T})$. This creates the usual ancestor-descendant relationship between the nodes of $\mathcal{T}$. In order to avoid ambiguity, we refer to the vertices of the tree $\mathcal{T}$ in a tree decomposition as \emph{nodes}. A Tree decomposition $(\mathcal{T},\beta)$ rooted at $r\in V(\mathcal{T})$ is called \emph{nice} if $\beta(r)=\emptyset$ and every node $t\in V(\mathcal{T})$ is of one of the following five types:

\begin{description}
    \item[Leaf node] if $t$ is a leaf of $\mathcal{T}$ and $\beta(t)=\emptyset$.
    \item[Introduce vertex node] if $t$ has a unique child $t'$ in $\mathcal{T}$ and there exists $v \notin \beta(t')$ such that $\beta(t) = \beta(t') \cup \{v\}$. We say that $v$ is \emph{introduced} at $t$.
    \item[Introduce edge node] if $t$ has a unique child $t'$ in $\mathcal{T}$, $\beta(t)=\beta(t')$ and $t$ is labelled with an edge $\{u,v\}\in E(G)$. We say that $\{u,v\}$ is \emph{introduced} at $t$.
    \item[Forget node] if $t$ has a unique child $t'$ in $\mathcal{T}$ and there exists $v \in \beta(t')$ such that $\beta_{t} = \beta_{t'} \setminus \{v\}$. We say that $v$ is \emph{forgotten} at $t$.
    \item[Join node] if $t$ has exactly two children $t_1$ and $t_2$ and $\beta(t) = \beta(t_1) = \beta(t_2)$.
\end{description}

We require that every edge of $E(G)$ is introduced exactly once in the whole decomposition. As a consequence of \ref{item:tw3}, for every vertex $v$ there is a unique highest node $t(v)$ for which $v\in \beta(t(v))$. Moreover, if we didn't consider introduce edge nodes, the parent of $t(v)$ is a node where $v$ is forgotten. We assume that the introduce edge nodes, introducing an edge with endpoint $u$ in our nice tree decomposition are inserted precisely between the node $t(u)$ and its parent that forgets $u$.

The associated subgraph $G_t$ of $G$ at node $t$ is the graph consisting of all the vertices and edges introduced in the subtree $\mathcal{T}_t$, which is the subtree of $\mathcal{T}$ consisting of all the descendants of the node $t$ (including $t$). Based on the discussion above, we can assume that if $t$ is a join node, then $G_t[\beta(t)]$ is an independent set.
}

%\subsubsection{Signed Neighborhood Diversity} 
\smallskip\noindent\textit{Signed Neighborhood Diversity.}
Let $G=(V,E)$ be a graph. A partition $\{V_1,V_2,\ldots V_d\}$ of $V(G)$ is \emph{uniform} if for any $i \in [d]$ and $u,v \in V_{i}$ it holds that $N_G(u) \setminus \{v\} = N_G(v) \setminus \{u\}$. In other words, for any two sets $V_i,V_j$ (possibly $i=j$) either $\forall v_i\in V_i,v_j\in V_j: \{v_i,v_j\}\in E(G)$ or $\forall v_i\in V_i,v_j\in V_j: \{v_i,v_j\}\notin E(G)$. The \emph{neighborhood diversity} $\operatorname{nd}(G)$ of a graph $G$ is the smallest integer $d$ such there is a uniform partition of $V(G)$ into $d$ sets.

Let $(G,w)$ be a weighted graph, a \emph{$w$-uniform partition} is a uniform partition of $V(G)$ satisfying moreover that for any two edges $e_1=\{u_1,v_1\},e_2=\{u_2,v_2\}$ if $u_1\in V_i,u_2\in V_i$ and $v_1\in V_j$ and $v_2\in V_j$, then $w(e_1)=w(e_2)$. The \emph{signed neighborhood diversity} $\operatorname{snd}(G)$ of $(G,w)$ is the smallest $d$ such that there is $w$-uniform partition of $V(G)$ into $d$ sets. Note that the parts of $w$-uniform partition of size $\operatorname{snd}$ coincide with the equivalence classes given by the equivalence $u\sim v\Leftrightarrow N_G(u)\setminus \{v\}=N_G(v)\setminus \{u\} \wedge \mbox{$\forall e\ni u$} \mbox{$\forall f\ni v$}: w(e)=w(f)$. Consequently, a $w$-uniform partition of size $\operatorname{snd}(G)$ can be computed in polynomial time for any weighted graph $(G,w)$.

%\subsubsection{Reduction of weights.}
\smallskip
\noindent\textbf{Reduction of weights.}
In our kernelization procedures, we make use of the framework introduced by Frank and Tardos~\cite{FrankT87} to reduce the size of the weights. This technique has been successfuly used in many kernelization procedures to reduce weights in the input instance for linear objective functions when the dimension is bounded by the parameter~\cite{BlazejCKSSV22_ft,Bentert19_ft,BevernFT19_ft,GurskiRR19_ft,EtscheidKMR15_ft,GoebbelsGRY17_ft,ChaplickFGK019_ft,GurskiRR19_ft}.

\begin{theorem}[Frank and Tardos~\cite{FrankT87}]\label{thm:FT}
    There is a polynomial-time algorithm which, given a rational vector $(z_1,z_2,\ldots,z_d)\in \mathbb{Q}^d$ and a positive integer $M$, finds a vector $(\widetilde{z}_1,\widetilde{z}_2,\ldots,\widetilde{z}_d)\in \mathbb{Z}^d$ such that for every $(c_1,\ldots,c_d)\in[-M,M]^d$ we have $\sum_{i=1}^dc_iz_i\leq 0$ if and only if \mbox{$\sum_{i=1}^dc_i\widetilde{z}_i\leq 0$}. Moreover, $\max_i\widetilde{z}_i\leq 2^{O(d^3)}M^{O(d^2)}$.
\end{theorem}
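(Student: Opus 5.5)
This is the classical result of Frank and Tardos~\cite{FrankT87}, which the paper only cites. If I had to reprove it, I would follow their simultaneous Diophantine approximation approach. The idea is to decompose $z$ as $z=\sum_{i=1}^{k}\lambda_i y^{(i)}$ with $k\le d$. Here the $y^{(i)}\in\mathbb{Z}^d$ have bounded entries, and the positive scalars decrease fast: $\lambda_{i+1}\le \lambda_i/N$ for some $N$ polynomial in $dM$ and the entry bound. The sign of $c\cdot z$ for $c\in[-M,M]^d$ is then read off lexicographically from the integer vector $(c\cdot y^{(1)},\ldots,c\cdot y^{(k)})$. Finally I would collapse this lexicographic information into a single integer vector.

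\textbf{Step 1 (one approximation round).} I would normalize $z$ so that $\|z\|_\infty=1$. Then I would apply the LLL-based simultaneous Diophantine approximation algorithm with accuracy parameter $N$. It returns an integer $1\le q\le 2^{O(d^2)}N^{d}$ and $p\in\mathbb{Z}^d$ with $\|qz-p\|_\infty<1/N$. I set $y^{(1)}:=p$, so $\|y^{(1)}\|_\infty\le q+1$. The residual $r:=qz-p$ satisfies $\|r\|_\infty<1/N$, and I recurse on $r$ after renormalizing it. Unwinding the recursion gives the decomposition $z=\sum_i\lambda_i y^{(i)}$, where the ratios $\lambda_{i+1}/\lambda_i$ are at most about $1/N$ and every $\|y^{(i)}\|_\infty\le 2^{O(d^2)}N^d=:B$.

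\textbf{Step 2 (lexicographic sign).} Each $c\cdot y^{(i)}$ is an integer of absolute value at most $dMB$. Suppose $N$ is a sufficiently large polynomial in $dMB$, of the form $N=2^{O(d^2)}(dM)^{O(1)}$ after solving the mild fixed point, since $B$ depends on $N$. Then the first nonzero term $\lambda_j(c\cdot y^{(j)})$ dominates the sum of all later terms. Hence $\operatorname{sign}(c\cdot z)$ equals the sign of the first nonzero $c\cdot y^{(i)}$. If all of these are zero, then $c\cdot z=0$ exactly, because the decomposition is exact.

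\textbf{Step 3 (collapsing).} I would set $\widetilde z:=\sum_{i=1}^{k}K^{k-i}y^{(i)}$ with $K:=2dMB+1$. Then $c\cdot\widetilde z=\sum_i K^{k-i}(c\cdot y^{(i)})$ has exactly the lexicographic sign of $(c\cdot y^{(i)})_i$, since each coefficient has absolute value less than $K/2$. Consequently $c\cdot z\le 0\iff c\cdot\widetilde z\le 0$. The entry bound is $\|\widetilde z\|_\infty\le K^{d}B=(dMB)^{O(d)}$. Substituting $B=2^{O(d^2)}N^d$ and $N=2^{O(d^2)}(dM)^{O(1)}$ gives $\|\widetilde z\|_\infty\le 2^{O(d^3)}M^{O(d^2)}$. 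All steps run in polynomial time since LLL does.

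\textbf{Main obstacle.} I expect the hardest part to be guaranteeing that the recursion stops after at most $d$ rounds with an exact decomposition, so that $k\le d$. A naive residual need not shrink in dimension, and the exponent bounds depend on $k\le d$.

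What I would try first is Frank--Tardos' fix. At each round, the approximation is applied to the current residual, and one argues that the new $y^{(i)}$ is linearly independent of the previous ones. One way is to project the residual onto the orthogonal complement of $\operatorname{span}(y^{(1)},\ldots,y^{(i-1)})$ and approximate there, and then show that the residual eventually becomes exactly $0$. Alternatively, note that $z$ lies in the rational span of the chosen integer vectors once they span a space containing $z$. Independence gives $k\le d$. Exactness gives the equality case $c\cdot z=0$.

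Controlling the loss in the projection step is what forces the accuracy to be $N^{d}$ rather than $N$. This is where the $2^{O(d^3)}$ factor originates.
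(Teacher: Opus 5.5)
The paper does not prove this statement: it is cited from Frank and Tardos. Your architecture (iterated simultaneous Diophantine approximation, a lexicographic sign rule, then collapsing into one integer vector) is theirs, but as written the proposal has two genuine gaps.

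The first is exactly the step you flag as the main obstacle. You leave termination and exactness open. The projection route you suggest would not obviously preserve the exact decomposition $z=\sum_i\lambda_i y^{(i)}$ with positive, rapidly decreasing $\lambda_i$ that Steps~2 and~3 rely on, since the component of the residual you project away must be put back into earlier coefficients. The missing observation is that the naive residual \emph{does} shrink, thanks to the $\ell_\infty$-normalization you already perform. If $\|z\|_\infty=1$, then $z_j=\pm1$ for some $j$, so $qz_j=\pm q$ is an integer, and $|qz_j-p_j|<1/N\le 1/2$ forces $p_j=qz_j$, i.e.\ $r_j=0$. Likewise $z_j=0$ forces $p_j=0$. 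Hence $\operatorname{supp}(r)\subsetneq\operatorname{supp}(z)$, so after at most $d$ rounds the residual is exactly $0$. This gives $k\le d$ and exactness at once, with no independence or projection argument.

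The second gap is in Step~2. Bounding each later term separately by $\lambda_i\,dMB$ with $\lambda_{i+1}/\lambda_i\approx 1/N$ requires $N>dMB$. But $B=2^{O(d^2)}N^d$, so your ``mild fixed point'' has no solution. The fix is not to expand the tail termwise. In each round $qz=p+r$ with $\|r\|_\infty<1/N$, so $|c\cdot r|\le\|c\|_1/N\le dM/N<1$ once $N>dM$. Since $c\cdot p\in\mathbb{Z}$, either $c\cdot p\neq 0$ and $\operatorname{sign}(c\cdot z)=\operatorname{sign}(c\cdot p)$, or $c\cdot p=0$ and $\operatorname{sign}(c\cdot z)=\operatorname{sign}(c\cdot r)$, and you recurse on $r$. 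Equivalently, the true ratio is $\lambda_{i+1}/\lambda_i<1/(Nq_{i+1})$, which cancels the size $q_{i+1}+1$ of $y^{(i+1)}$. So $N=dM+1$ suffices, independently of $B$.

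This also corrects your final count. With your stated $N=2^{O(d^2)}(dM)^{O(1)}$ one gets $B=2^{O(d^3)}(dM)^{O(d)}$ and $K^dB=2^{O(d^4)}(dM)^{O(d^2)}$, not $2^{O(d^3)}$. With $N=dM+1$ one gets $B=2^{O(d^2)}(dM)^{d}$, $K=2^{O(d^2)}(dM)^{d+1}$, and $K^dB=2^{O(d^3)}(dM)^{O(d^2)}=2^{O(d^3)}M^{O(d^2)}$. In particular, the $2^{O(d^3)}$ comes from raising the LLL factor $2^{O(d^2)}$ to the $d$-th power in the collapse, not from any projection loss.
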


\begin{apprestatable}{lemma}{lemalgreducingweights}\label{lem:alg_reducing_weights}
    There is a polynomial time algorithm, which, given an instance $(G,s,x,w)$ of \MSTGcore with $d_1$ vertices and $d_2$ edges, produces $x'$ and $w'$ such that the instances $(G,s,x,w)$ and $(G,s,x',w')$ are equivalent and $(G,s,x',w')$ is of total bit-size $O((d_1+d_2)^4)$.
\end{apprestatable}
\toappendix{
\sv{\lemalgreducingweights*}
\begin{proof}
    Let $v_1,v_2,\ldots,v_{d_1}$ and $e_1,e_2,\ldots,e_{d_2}$ be the enumerations of vertices and edges of $G$. Let $d=d_1+d_2$ We consider the $d$-dimensional vector
    \[ (x(v_1),x(v_2),\ldots,x(v_{d_1}),w(e_1),w(e_2),\ldots,w(e_{d_2}))
    \] 
    and we run the algorithm from \Cref{thm:FT} for $M=1$. We thus obtain a vector \[(\widetilde{x(v_1)},\widetilde{x(v_2)},\ldots,\widetilde{x(v_{d_1})},\widetilde{w(e_1)},\widetilde{w(e_2)},\ldots,\widetilde{w(e_{d_2})}).\] 
    We let $x'(v_i)=\widetilde{x(v_i)}$, and $w'(e_i)=\widetilde{w(e_i)}$. It remains to argue that the instances $(G,s,x,w)$ and $(G,s,x',w')$ are equivalent. We show that for any subtree $T$ of $G$ we have $x(T)>w(T)$ if and only if $x'(T)>w'(T)$. In fact, we show that $x(T)-w(T)\leq 0$ if and only if $x'(T)-w'(T)\leq 0$. Note that this condition holds for any subgraph $T$ of $G$, we do not use the fact that $T$ is a tree. To see this, consider the $d$-dimensional coefficient vector $(c_1,c_2,\ldots,c_{d_1},c_{d_1+1},\ldots,c_{d_2})\in \{-1,0,1\}^d$ given as follows. For $i\in[d_1]$ we let $c_i=1$ if $v_i\in V(T)$ and $c_i=0$ otherwise. For $i\in\{d_1+1,d_1+2,\ldots,d_2-1,d_2\}$ we let $c_i=-1$ if $e_{i-d_1}\in E(T)$ and $c_i=0$ otherwise. Observe that 
    \[
        x(T)-w(T)=\sum_{i=1}^{d_1}c_ix(v_i)+\sum_{i=d_1+1}^{d_2}c_iw(e_{i-d_1})
    \]
    and similarly
    \[
        x'(T)-w'(T)=\sum_{i=1}^{d_1}c_ix'(v_i)+\sum_{i=d_1+1}^{d_2}c_iw'(e_{i-d_1}).
    \]
    By the guarantees from the algorithm from \Cref{thm:FT} $x(T)-w(T)\leq 0$ if and only if $x'(T)-w'(T)\leq 0$ and this proves the claim. Note that each new weight and value is of size at most $2^{O(d^3)}$, thus can be described using $O(d^3)$ bits. There are $d$ such numbers to be described, thus the numbers in the instance can be described by $O(d^4)$ bits. The rest of the instance (vertex adjacencies etc.) can be clearly described using at most $O(d^2)$ bits.
\end{proof}
}%toappendix

%\subsubsection{Exponential Time Hypothesis.} 

%The Exponential Time Hypothesis ensures, in a few words the non-existence of an algorithm for $3-\sat$ running in time $2^{o(\hat{n})}$, where $\hat{n}$ is the number of variables in the input formula $\varphi$ (\cite{impagliazzo2001complexity}). In fact a $2^{o(\hat{n}+\hat{m})}$ algorithm is impossible, by the application of the Sparsification Lemma (\cite{impagliazzo2001problems}), where $\hat{m}$ is the number of the clauses in the input formula $\varphi$.

\appsection{Improved {\sf NP}-hardness results}{sec:np_hardness}
It is known that \MSTGcore is strongly \NP-hard, that is, it is \NP-hard even if the weights $w$ and the allocation $x$ on the input are integral and encoded in unary. This is a consequence of the previously known reduction from \textsc{X3C} by Faigle et al.~\cite{faigle97coNPhard}. In this section, we strengthen the \NP-hardness in several directions.

Our primary objective is to determine the hardness of \MSTGcore on planar graphs. To this end, we approach the planar case through a sequence of three progressively tighter reductions. First, we observe that \NP-hardness trivially extends to $1$-planar graphs via a straightforward subdivision rule~(\Cref{obs:nph_oneplanar}). Next, we push the boundary closer to planarity by demonstrating that the problem remains \NP-hard on $2$-apex graphs -- that is, graphs that can be made planar by the removal of at most two vertices~(\Cref{thm:nph_twoapex}). Finally, we provide a reduction that yields a planar graph; however, this requires relaxing the efficiency condition that is satisfied in the previous constructions~(\Cref{thm:nph_planar}).

In addition to these topological constraints, we show that the problem  remains \NP-hard even under severe local and numerical constraints. Specifically, we show that \MSTGcore remains \NP-hard even when the maximum degree of the underlying graph is $3$. Next, we show that the problem remains \NP-hard even when $w$ and $x$ attain values from $\{0,1\}$ or when the underlying graph is unweighted ($w(e)=1$ for every $e$), and $x(v)\in \{0,2\}$~(\Cref{thm:mstg_hardness_01}). We note that our result is tight, as any stricter restriction yields a trivially solvable case. Crucially, these degree and weight restrictions can be applied directly to the constructions for planar and $1$-planar graphs, hence these hardness results hold even under these degree and weight constraints. In the case of $2$-apex graphs, only the weights can be reduced, however, the planar graph after the removal of the two vertices has constant degree.

We begin by defining four auxiliary operations.

\begin{definition}\label{defn:operations}
    Let $(G,s,x,w)$ be an instance of \MSTGcore. In each of the following operations, we formally create a new instance $(G',s',x',w')$.
    \begin{description}
        \item[$(a,b,c)$-expansion] of a vertex $v\in V(G)$ is an operation of adding a leaf $\ell$ to $v$, subtracting $c$ from $x(v)$ and setting $x(\ell)=b$ and $w(\{\ell,v\})=a$.
        \item[Binary tree expansion] of a vertex $v\in V(G)$ is an operation of replacing $v$ by a binary tree with $\deg(v)$ leaves and root $r$. The original edges incident to $v$ are now connected to the leaves. The root inherits the value $x(v)$ and all other newly created vertices $u$ and edges $e$ have $x(u)=0,w(e)=0$.
        \item[Contraction] of an edge $e=\{u,v\}$ with $w(e)=0$ is the operation of contracting the edge $e$ into a new vertex $v_e$, and setting $x'(v_e)=x(v)+x(u)$. If any parallel edges are created by the operation, then keep the one with lowest value $w(e)$. If one of the vertices was $s$, $v_e$ will be the new supply vertex $s'$.
        \item[$(w_1,w_2,\ldots,w_t)$-subdivision] of an edge $e\in E(G)$ is an operation of subdividing the edge $e$ exactly $t-1$ times, creating a path of length $t$ with $t-1$ new internal vertices. The weights $w'(e)$ of the $t$ newly created edges are set to $w_1,w_2,\ldots,w_t$ and the $x$-values of the newly created vertices are $0$.
    \end{description}
\end{definition}

\begin{apprestatable}{lemma}{lemoperationscorrectness}\label{lem:operations_correctness}
    Let $b,c,w_1,w_2,\ldots,w_t$ be nonnegative (rational) numbers. Vertex $(b+c,b,c)$-expansions, binary tree expansions, edge contractions, and Edge $(w_1,w_2,\ldots,w_t)$-subdivisions of edges $e$ with $\sum_{i=1}^tw_i=w(e)$ preserve yes- and no-instances of \MSTGcore and the efficiency condition.
\end{apprestatable}
\toappendix{
\sv{\lemoperationscorrectness*}
\begin{proof}
    Let $(G,s,x,w)$ be the original instance of \MSTGcore and let $(G',s',x',w')$ be the new instance created by an application of one of the operations. In each of the following case we show that $(G,s,x,w)$ is a yes-instnace if and only if $(G',s',x',w')$ is a yes-instance.

        \smallskip\textbf{Vertex $(b+c,b,c)$-expansion.} $\Rightarrow$: Any solution $T$ of $(G,s,x,w)$ containing $v$ is transformed to a solution $T':=T+\{\ell,v\}$. Notice that $x'(T')=x(T)$ and $w'(T')=w(T)$.

        $\Leftarrow$: Any solution $T'$ of $(G',s',x',w')$ avoiding $\ell$ can be transformed to $T''=T+\{\ell,v\}$ with $x'(T'')-w'(T'')>x'(T')-w'(T')$, which can then be transformed to a solution $T=T''\setminus \{\ell\}$ satisfying $x(T)=x'(T'')$ and $w(T)=w'(T'')$.
        
        \smallskip\textbf{Binary tree expansion.} $\Rightarrow$: Let $T$ be a solution for $(G,s,x,w)$ and suppose that $v\in V(T)$ and let $e_1,e_2,\ldots,e_q$ be all the edges of $T$ incident to $v$. Replace $v$ by the root of the binary tree together with all the paths to leaves connected to the edges $e_1,e_2,\ldots,e_q$ in the binary tree to obtain a solution $T'$ in $(G,s,x,w)$.

        $\Leftarrow$: Consider a solution $T'$ in $(G',s',x',w')$. Notice that whenever one of the vertices of the binary tree is used, then we can always include the root vertex for free and obtain a solution $T''$ with $x'(T'')\geq x'(T')$ and $w'(T'')=w'(T')$ because all the edges have zero cost. Hence, we create a solution $T$ by including $v$ together with all the edges that were incident to the leaves of the binary tree in $T'$. Note that $x(T)=x'(T'')$ and $w(T)=w'(T'')$.
        
        \smallskip\textbf{Edge contraction.} $\Rightarrow$: Any solution $T$ of $(G,s,x,w)$ that uses exactly one of $u$, $v$ can be transformed to a solution $T'$ that uses both $u$ and $v$ by simply attaching the other vertex as a leaf via the edge $e$. Note that since $x$ is nonnegative, we have $x'(T')\geq x(T)$, and $w'(T')=w(T)$. Hence we can assume that $T$ contains both $u$ and $v$. It is not left to distinguish two cases.
        \begin{description}
            \item[Case 1] $T$ uses the edge $\{u,v\}$. Contract it in $T$, creating a new tree $T'$ satisfying $x'(T')=x(T)$ and $w'(T')=w(T)$.
            \item[Case 2] $T$ does not use the edge $\{u,v\}$. Since $T$ contains both $u$ and $v$, there is a (unique) $u$-$v$ path $P$. Add $e$ to $T$ and remove the cheapest edge on $P$. Since $w(e)=0$ and $w$ is nonnegative, this creates a new tree $T'$ with $w'(T')\leq w(T)$ and then apply the argument from Case 1.
        \end{description}

        $\Leftarrow$: If $T'$ is a solution for $(G',s',x',w')$, then either $T'$ doesn't touch $v_e$, hence it is also a solution for $(G,s,x,w)$, or it contains $v_e$. If it contains $v_e$, we let $T=(T'\setminus \{v_e\})+e$. This is a solution for $(G,s,x,w)$.
        
        \smallskip\textbf{Edge $(w_1,w_2,\ldots,w_t)$-subdivision.} $\Rightarrow$: Any solution $T$ for $(G,s,x,w)$ either avoid the edge $e$, then it is a solution for $(G',s',x',w')$ immediately, or it contains $e$. If it contains $e$, we replace $e$ in $T$ by the newly created path, creating a solution $T'$ for $(G',s',x',w')$. Note that since $\sum_{i=1}^tw_i=w(e)$, we have $w'(T')=w(T)$.

        $\Leftarrow$: Consider a solution $T'$ for $(G',s',x',w')$ minimizing $|V(T')|$. It is not hard to see that $T'$ either uses all the edges of the new path or nothing from the path. If it used only some part of the path, we could remove these edges and obtain a solution $T''$ with $w'(T'')\leq w'(T')$ and $x'(T'')=x'(T')$, thus contradicting the minimality of $T'$. Hence we can either replace the whole path by $e$ (if $T'$ uses the path) or leave $T'$ intact to obtain a solution $T$ for $(G,s,x,w)$.

    It remains to argue about the efficiency condition. Denote $V=V(G)$ and $V'=V(G')$. We show that whenever $x(G)=\MST_G(V)$, then $x'(G')=\MST_{G'}(V')$, i.e., the operations preserve the efficiency condition. We again perform case analysis based on the operation performed.
    
        \smallskip\textbf{Vertex $(b+c,b,c)$-expansion.} Notice that $x'(G')=x(G)+a-c$ and $w(\MST_{G'}(V')=w(\MST_G(V))+b$, notice that $a=b+c$, the claim follows.
        
        \smallskip\textbf{Binary tree expansion.} Here the claim is trivial as $x'(G')=x(G)$ and $w'(\MST_G(V'))=w(\MST_G(V))$.
        
        \smallskip\textbf{Edge contraction.} Note that $x'(G')=x(G)$ trivially. We argue that $w(\MST_{G'}(V'))=w(\MST_G(V))$. Recall that for any $v'\in N_G(u)\cap N_G(v)$ the parallel edges $\{v',u\},\{v',v\}$ which are created by the initial contraction of $e$ are resolved by keeping only the edge with the minimum weight. Notice that there is a minimum spanning tree $T$ of $G$ that contains $e$. By contracting $e$ we obtain a minimum spanning tree $T'$ of $G'$. If $T'$ contained a heavier edge to $v'$, we could replace it already in $T$ by the lighter one, which contradicts the assumption that $T$ is a minimum spanning tree of $G$. Hence $w(\MST_{G'}(V'))\leq w(\MST_{G}(V))$.

        To see the other inequality, note that any minimum spanning tree of $T'$ in $G'$ can be mapped back to $G$ by `decontracting' $v_e$ into the zero-weight edge $e$. This forms a valid spanning tree $T$ of $G$ with $w(T)=w'(T')$. Hence $w(\MST_{G}(V))\leq w(\MST_{G'}(V'))$.
        
        \smallskip\textbf{Edge $(w_1,w_2,\ldots,w_t)$-subdivision.} Here we again have $x'(G')=x(G)$ trivially. Note that if a minimum spanning tree uses the edge $e$, we can replace it in the spanning tree by the entire path. Since $\sum_{i=1}^tw_i=w(e)$, the weight is unchanged. Hence $\MST_{G'}(V')=\MST_{G}(V)$.

\end{proof}
}%toappendx

\subsection{{\sf NP}-hardness for $1$-planar and $2$-apex graphs}

\begin{apprestatable}{observation}{obsnphoneplanar}\label{obs:nph_oneplanar}
        \MSTGcore is \NP-hard even when the underlying graph is a $1$-planar graph and the efficiency condition is satisfied.    
\end{apprestatable}
\sv{
\begin{proof}[Proof sketch]
    Draw the graph arbitrarily in the plane and $(0,0,0,\ldots,w(e))$-subdivide each edge $e$ with more crossings enough so that each original crossing is accounted for one of the newly created edges. See \Cref{fig:subdivisions} in the appendix for an illustration.
\end{proof}
}
\toappendix{
\sv{\obsnphoneplanar*}
\begin{proof}
    We reduce from \MSTGcore. Consider an arbitrary drawing\footnote{Specifically, any drawing where each edge has polynomially many crossings, such as a straight-line drawing.} of the underlying graph~$G$ in the plane. For each edge $e$ with $q>1$ crossings, we apply a $(\underbrace{0,0,\ldots,0}_{q-1},w(e))$-subdivision (see \Cref{fig:subdivisions} for illustration). The resulting graph is $1$-planar. To see this, note that the original drawing of $G$ can be adjusted so that each newly created edge participates in at most one crossing.
    Correctness follows from \Cref{lem:operations_correctness}.
\end{proof}
\begin{figure}
    \centering
      \begin{tikzpicture}[
    vertex/.style={circle, draw=black, fill=blue!80!black, minimum size=8pt, inner sep=0pt},
    newvertex/.style={circle, draw=black, fill=orange!80!black, minimum size=8pt, inner sep=0pt},
    crossingvertex/.style={circle, draw=gray!60, fill=gray!30, minimum size=5pt, inner sep=0pt},
    edge/.style={draw=blue!60!cyan, line width=2.5pt},
    crossedge/.style={draw=gray!70, line width=1.2pt, dashed},
    labelstyle/.style={font=\small\sffamily}
]

    % ================= LEFT SIDE: Original Edge =================
    \begin{scope}[shift={(0,0)}]

        % Main Vertices
        \node[vertex,label = {right=4pt:\textbf{$u$}}] (u) at (0, 2.5) {};
        \node[vertex,label = {right=4pt:\textbf{$v$}}] (v) at (0, -2.5) {};
        
        % The edge e
        \draw[edge] (u) -- (v) node[left=5pt, text=blue!60!cyan, font=\bfseries, pos=0.4] {$w(e)$};
        
        % Crossing edges and their endpoints
        % Crossing 1
        \node[crossingvertex] (c1l) at (-1.5, 1.2) {};
        \node[crossingvertex] (c1r) at (1.5, 0.8) {};
        \draw[crossedge] (c1l) -- (c1r);
        
        % Crossing 2
        \node[crossingvertex] (c2l) at (-1.5, 0.2) {};
        \node[crossingvertex] (c2r) at (1.5, -0.2) {};
        \draw[crossedge] (c2l) -- (c2r);
        
        % Crossing 3
        \node[crossingvertex] (c3l) at (-1.5, -0.8) {};
        \node[crossingvertex] (c3r) at (1.5, -1.2) {};
        \draw[crossedge] (c3l) -- (c3r);
        
    \end{scope}

    % ================= MIDDLE: Transformation Arrow =================
    \draw[-{Stealth[scale=1.5]}, line width=1.5pt, gray!60] (2.2, 0) -- (5, 0) 
        node[midway, above=4pt, text=black, font=\small\bfseries\sffamily, align=center] {$(0, 0, w(e))$-subdivide $e$};

    % ================= RIGHT SIDE: Subdivided Edge =================
    \begin{scope}[shift={(200pt,0pt)}]

        % Vertices
        \node[vertex,label = {right=4pt:\textbf{$u$}}] (u2) at (0, 2.5) {};
        \node[newvertex, label={right=4pt:\textbf{$v_1$}}] (v1) at (0, 0.5) {};
        \node[newvertex, label={right=4pt:\textbf{$v_2$}}] (v2) at (0, -0.5) {};
        \node[vertex,label = {right=4pt:\textbf{$v$}}] (v2_end) at (0, -2.5) {};
        
        % Subdivided edges
        \draw[edge] (u2) -- (v1) node[pos=0.6, left=5pt, text=blue!60!cyan, font=\bfseries] {$0$};
        \draw[edge] (v1) -- (v2) node[pos=0.7, left=5pt, text=blue!60!cyan, font=\bfseries] {$0$};
        \draw[edge] (v2) -- (v2_end) node[pos=0.5, left=5pt, text=blue!60!cyan, font=\bfseries] {$w(e)$};
        
        % Crossing edges and their endpoints
        % Crossing 1
        \node[crossingvertex] (c1l2) at (-1.5, 1.2) {};
        \node[crossingvertex] (c1r2) at (1.5, 0.8) {};
        \draw[crossedge] (c1l2) -- (c1r2);
        
        % Crossing 2
        \node[crossingvertex] (c2l2) at (-1.5, 0.2) {};
        \node[crossingvertex] (c2r2) at (1.5, -0.2) {};
        \draw[crossedge] (c2l2) -- (c2r2);
        
        % Crossing 3
        \node[crossingvertex] (c3l2) at (-1.5, -0.8) {};
        \node[crossingvertex] (c3r2) at (1.5, -1.2) {};
        \draw[crossedge] (c3l2) -- (c3r2);
        
    \end{scope}

\end{tikzpicture}

    \caption{Illustration for the proof of \Cref{obs:nph_oneplanar}. In the original graph (on the left) the edge $e=\{u,v\}$ (in blue) had $q=3$ crossings. We performed $(0,0,w(e))$-subdivision on $e$, creating two new vertices $v_1,v_2$. The gray edges are all the edges crossing $e$ in the considered drawing. The new edge weights are $w'(\{u,v_1\})=w'(\{v_1,v_2\})=0$, and $w'(\{v_2,v\})=w(e)$.}
    \label{fig:subdivisions}
\end{figure}

}%toappendix

\begin{apprestatable}{theorem}{thmnphtwoapex}\label{thm:nph_twoapex}
    \MSTGcore is \NP-hard even when the underlying graph is a $2$-apex graph and the efficiency condition is satisfied.
\end{apprestatable}
\toappendix{
\sv{\thmnphtwoapex*}
\begin{proof}
    Reduce from \textsc{Vertex Cover} restricted to $3$-regular planar graphs. Such variant is \NP-hard~\cite{GareyJ79}. Let $(G,k)$ be an instance of \textsc{Vertex Cover}. We create an instance $(G',s,x,w)$ of \MSTGcore as follows. The construction is inspired by the one presented in the paper of Faigle et al. from the \textsc{X3C} problem~\cite{faigle97coNPhard}. We refer the reader to~\Cref{fig:vc_reduction} for visualization of the construction. Let $\widehat{G}$ be the graph obtained from $G$ by subdividing each edge exactly once. Let $R\subseteq V(\widehat{G})$ be the set of original vertices of $G$ and let $B=V(\widehat{G})\setminus R$ be the vertices corresponding to the subdivisions. Notice that the \textsc{Vertex Cover} problem translates to finding a set $S\subseteq R,|S|\leq k$ such that $N_{\widehat{G}}(S)=B$. Note that $\widehat{G}$ is a bipartite planar graph of maximum degree $3$. To create $G'$, add a vertex $g$ (the \emph{guard}) connected to all of $R$, then, add the vertex $\nu$ (the \emph{steiner vertex}) and connect it to $R\cup \{g\}$. Finally, add the supply vertex $s$ and connect it to~$g$. We now specify the weights $w$ and the allocation $x$. All vertices $v$ outside $R\cup B$ satisfy $x(v)=0$. The vertices $v_r\in R$ satisfy $x(v_r)=k$, and $x(v_b)=2k+4$ for $v_b\in B$. The weights of the edges are as follows:
    \begin{align*}
        &w(\{s,g\})=|B|\cdot (k+2)-(k+1),\\
        \forall v\in R\cup \{\nu\}:&w(\{g,v\})=k+1,\\
        \forall v_r\in R:&w(\{\nu,v_r\})=k,\\
        \forall \{v_r,v_b\}\in E(\widehat{G}):&w(\{v_r,v_b\})=k+2.
    \end{align*}
    This finishes the description of the instance $(G',s,x,w)$.

\definecolor{setblue}{HTML}{BBDEFB}
\definecolor{setred}{HTML}{FFCDD2}
\definecolor{setgreen}{HTML}{C8E6C9}
\definecolor{setgray}{HTML}{E0E0E0}
\definecolor{darkgreen}{HTML}{2E7D32}
\definecolor{darkpurple}{HTML}{6A1B9A}
\begin{figure}
\centering
\begin{tikzpicture}[
    vertex/.style={circle, draw, minimum size=7mm, thick},
    font=\small,
]

    % --- Vertices ---
    \node[vertex, fill=setgreen] (s) at (0,3) {$s$};
    \node[vertex, fill=setgray] (g) [below=of s] {$g$};
    \node[vertex, fill=setgreen!40] (nu) [below=of g] {$\nu$};
    
    % Set R
    \node[vertex, fill=setred] (r2) [below=of nu] {$v_{r_{2}}$};
    \node[vertex, fill=setred] (r1) [left=of r2] {$v_{r_{1}}$};
    \node[vertex, fill=setred] (r3) [right=of r2] {$v_{r_{3}}$};
    
    % Set B
    \node[vertex, fill=setblue] (b2) [below=of r2] {$v_{b_{2}}$};
    \node[vertex, fill=setblue] (b1) [left=of b2] {$v_{b_{1}}$};
    \node[vertex, fill=setblue] (b3) [right=of b2] {$v_{b_{3}}$};

    % Subgraph \widehat{G}
    \path[fill=setgray!55, fill opacity=0.35, rounded corners=2mm]
        ($(r1.north west)+(-0.45,0.75)$)
        rectangle
        ($(b3.south east)+(0.45,-0.35)$);
    \draw[gray, thick, dashed, rounded corners=2mm]
        ($(r1.north west)+(-0.45,0.75)$)
        rectangle
        ($(b3.south east)+(0.45,-0.35)$);
    \node[anchor=north west, color=gray!70!black]
        at ($(r1.north west)+(-0.25,0.55)$) {$\widehat{G}$};

    % --- Left Column: Node Definitions (Aligned with Nodes) ---
    \node[left=6cm of s, anchor=west] {supply node: $x(s) = 0$};
    \node[left=6cm of g, anchor=west] {guard vertex: $x(g) = 0$};
    \node[left=6cm of nu, anchor=west] {steiner vertex: $x(\nu) = 0$};
    \node[left=6cm of r2, anchor=west] {set $R$: $x(v) = k$};
    \node[left=6cm of b2, anchor=west] {set $B$: $x(v) = 2k+4$};

    % --- Right Column: Edge Weights (Aligned with Edge Gaps) ---

    \node[](gnu) at($(g)!0.5!(nu)$) {};
    % Midpoint g to nu
    \node[right=2.43cm of gnu, anchor=west, color=darkpurple]
        {$g \leftrightarrow R \cup \{\nu\}$: $w = k + 1$};
        
    % Midpoint nu to R
    \node[](nuR) at ($(nu)!0.5!(r2)$){};
    \node[right=2.43cm of nuR, anchor=west, color=darkgreen]
        {$\nu \leftrightarrow R$: $w = k$};
        
    % Midpoint R to B
    \node[](RB) at ($(r2)!0.5!(b2)$){};
    \node[right=2.43cm of RB, anchor=west] 
        {$R \leftrightarrow B$: $w = k + 2$};

    %Midpoint s to g
    \node[](sg) at($(s)!0.5!(g)$){};
    \node[right=2.43cm of sg, anchor=west]{$s\leftrightarrow g$: $w = |B|(k+2) - (k+1)$};

    % --- Edges ---
    % s to g (ultra-thick black)
    \draw[line width=2.5pt] (s) -- (g);

    % g to nu (purple, weight k+1)
    \draw[thick, darkpurple] (g) -- (nu);

    % g to R (purple, weight k+1)
    \draw[thick, darkpurple] (g) to[bend right=30] (r1);
    \draw[thick, darkpurple] (g) to[bend right=30] (r2); 
    \draw[thick, darkpurple] (g) to[bend left=30] (r3);

    % nu to R (green, weight k)
    \foreach \i in {1,2,3} {
        \draw[thick, darkgreen] (nu) -- (r\i);
    }

    % B to R (black, weight k+2)
    \draw[thick] (b1) -- (r1);
    \draw[thick] (b1) -- (r2);
    
    \draw[thick] (b3) -- (r1);
    \draw[thick] (b2) -- (r3);
    
    \draw[thick] (b2) -- (r2);
    \draw[thick] (b3) -- (r3);

\end{tikzpicture}

    \caption{Construction of the \MSTGcore instance from a \textsc{Vertex Cover} instance from the proof of \Cref{thm:nph_twoapex}. Vertex $x$-values are listed on the left, while edge weights $w$ are indicated by color-coded labels on the right. The black edges between sets $R$ and $B$ correspond to the edges of the graph $\widehat{G}$.}
    \label{fig:vc_reduction}
\end{figure}
    
     \begin{claim}\label{claim:vc_spanning_tree}
       $x(G')=w(\MST(G'))$ 
    \end{claim}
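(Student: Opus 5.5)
The plan is to compute both sides of the equation explicitly and show that they agree. For the minimum spanning tree I would use Kruskal's algorithm, because the edge weights of $G'$ take only four distinct values. That makes the greedy run easy to trace exactly, and it also certifies optimality.

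\emph{Left-hand side.} Only the vertices in $R\cup B$ carry nonzero allocation, so $x(G')=k|R|+(2k+4)|B|$. Since $G$ is $3$-regular, each vertex of $B$ (a subdivided edge) has exactly two neighbours in $R$, and $|B|=|E(G)|=\tfrac{3}{2}|R|$. I will keep $|B|$ symbolic where possible, because the weight of $\{s,g\}$ is also written in terms of $|B|$.

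\emph{Right-hand side.} I would run Kruskal's algorithm on $G'$ with the weights in increasing order, assuming $k\ge 1$ and $|B|\ge 1$ (the degenerate cases are trivial). This makes $w(\{s,g\})=|B|(k+2)-(k+1)$ the largest weight.
\begin{enumerate}
\item Weight $k$: the edges $\{\nu,v_r\}$ for $v_r\in R$ form a star, so all of them are accepted, contributing $k|R|$.
\item Weight $k+1$: the edges from $g$ to $R\cup\{\nu\}$. The first one processed merges $g$ into the star component, and every later one closes a cycle. Exactly one edge is accepted, contributing $k+1$.
\item Weight $k+2$: the edges of $\widehat{G}$. Every vertex of $B$ is still a singleton whose neighbours all lie in $R$, which is already one component. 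Hence exactly one edge per vertex of $B$ is accepted, contributing $(k+2)|B|$.
\item Finally, $\{s,g\}$ is the unique edge at $s$ and is accepted.
\end{enumerate}
The total is
\[
k|R|+(k+1)+(k+2)|B|+|B|(k+2)-(k+1)=k|R|+(2k+4)|B|,
\]
which equals $x(G')$.

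The main thing to get right is the bookkeeping in step 3: each $B$-vertex contributes exactly one edge of weight $k+2$, and no cheaper route to it exists. This holds because all edges incident to $B$ have weight exactly $k+2$. The other point to check is that $\{s,g\}$ is forced, since $s$ is a leaf. Alternatively, one can get the lower bound without appealing to Kruskal. Any spanning tree must contain $\{s,g\}$ and at least one edge at each $B$-vertex, and its remaining $|R|+1$ edges, needed to connect $\{g,\nu\}\cup R$, each cost at least $k$. At least one of these must touch $g$ or come from the structure around $g$ at cost $k+1$. The Kruskal trace handles this cleanly, so I would present that version.
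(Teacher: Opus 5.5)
Your proof is correct and follows the paper's argument. Both trace Kruskal's algorithm: first the star of weight-$k$ edges from $\nu$ to $R$, then one edge of weight $k+1$ to $g$, then one edge of weight $k+2$ per vertex of $B$, then the bridge $\{s,g\}$. The resulting total matches $x(G')=k|R|+(2k+4)|B|$. The only slip is the aside that $|B|\ge 1$ makes $w(\{s,g\})$ the largest weight, which actually needs $|B|\ge 2$; this is harmless, since $\{s,g\}$ is a bridge and is accepted whenever Kruskal processes it.
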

    \begin{proof}
    Observe that $x(G')=|B|\cdot (2k+4)+|R|\cdot k$. The minimum spanning tree of $G'$ looks as follows. Consider running the Kruskal's algorithm for finding the minimum spanning tree on $G'$. First, all edges from $\nu$ to $R$ will be added. Then one of the edges with weight $k+1$ connecting the created component to the guard $g$. We can assume that the edge is $\{\nu,g\}$. Then, $|B|$ edges from $R$ to $B$ of weight $k+2$ are added and finally the edge $\{s,g\}$ is added. The total weight of the minimum spanning tree of $G'$ is thus:
    \[
    w(\MST(G'))=|R|\cdot k+(k+1)+|B|\cdot (k+2)+|B|\cdot(k+2)-(k+1)
    \]
    which is the same as $x(G')$.
    \end{proof}

    \begin{claim}\label{claim:vc_first_dir}
        If $(G,k)$ is a yes-instance of \textsc{Vertex Cover}, then $(G',s,x,w)$ is a yes-instance of \MSTGcore.
    \end{claim}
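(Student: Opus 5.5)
Given a vertex cover $C\subseteq R$ of $G$ with $|C|\le k$ (we may assume $|C|=k$ by padding if $|R|\ge k$; otherwise use $|C|$ and recheck the arithmetic), I would exhibit one explicit subtree $T$ of $G'$ containing $s$ with $x(T)>w(T)$. The tree is:
\begin{itemize}
\item the edge $\{s,g\}$;
\item the edge $\{g,\nu\}$;
\item the edges $\{\nu,v_r\}$ for all $v_r\in C$;
\item for each $v_b\in B$, exactly one edge $\{v_r,v_b\}$ with $v_r\in C$.
\end{itemize}
Such an edge exists for every $v_b$ because $C$ is a vertex cover, i.e. $N_{\widehat{G}}(C)=B$. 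Since each $v_b$ is attached by exactly one edge and everything else hangs off $g$ and $\nu$ as a star, $T$ is a tree with vertex set $\{s,g,\nu\}\cup C\cup B$.

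Next comes the bookkeeping. The allocation side is $x(T)=|C|\cdot k+|B|(2k+4)$. The weight side is
\[
w(T)=|B|(k+2)-(k+1)+(k+1)+|C|\cdot k+|B|(k+2)=2|B|(k+2)+|C|\cdot k .
\]
These coincide, so this tree only gives $x(T)=w(T)$. I therefore need a saving, and the place to find it is the $\nu$-star: rather than connecting $\nu$ to $g$ through $\{g,\nu\}$, connect one vertex $v_r\in C$ to $g$ directly and keep $\nu$.

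That replacement alone does not change the weight either, so I would instead exploit that $C$ is small. I would compare with the grand-coalition MST, which uses all of $R$ (costing $|R|k$) while $R$ contributes $|R|k$ in $x$. Then each $R$-vertex is neutral, and the gain must come from $\nu$ being unnecessary. The trick I would try is to drop $\nu$ entirely and attach every $v_r\in C$ to $g$ with weight $k+1$. This costs $|C|(k+1)$ against $|C|k$, so it is worse.

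Since none of these variants produces a saving, the $x$-values seem to require a strict inequality that these computations do not yet give. Presumably the intended bound is $x(T)-w(T)>0$ coming from the fact that a tree on $C$ uses only $|C|-1$ internal edges plus one edge to $g$. Concretely, that tree is $\nu$ with $|C|$ edges of weight $k$ plus $\{g,\nu\}$ of weight $k+1$, total $|C|k+k+1$, against $x(C)=|C|k$.

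The main obstacle is to settle this arithmetic carefully. I would recompute the weights, in particular the $s$–$g$ weight $|B|(k+2)-(k+1)$, and the sizes of $B$ and $R$ under the assumptions $|B|=3|R|/2$ and $|C|\le k$. The aim is to find the term in which $|C|\le k$ strictly beats the full $R$ (for instance $|C|k\le k^2$ compared against a threshold), and to adjust the choice of tree accordingly, possibly using a minimum vertex cover with $|C|<|R|$. I expect the strictness to hinge on exactly this comparison between $|C|$ and $k$.
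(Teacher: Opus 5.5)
There is a genuine gap: your proposal never exhibits a tree with $x(T)>w(T)$, and it even concludes that none of your candidate trees works. The cause is an accounting error in the variant where you drop $\nu$. There you compared the cost $|C|(k+1)$ of the direct edges $\{g,v_r\}$, $v_r\in C$, only against the cost $|C|k$ of the $\nu$-star edges. But removing $\nu$ also removes the edge $\{g,\nu\}$ of weight $k+1$. The correct comparison is $|C|(k+1)$ against $|C|k+(k+1)$, a quantity you computed in your next paragraph but never used. Since $|C|\le k$, the direct variant is strictly cheaper by $k+1-|C|\ge 1$. The $\nu$-tree giving exactly $x(T)=w(T)$ is consistent with this: routing through $\nu$ costs precisely the $k+1$ of slack that the construction hides in $w(\{s,g\})$.

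So the tree you dismissed is the right witness, and it is the one the paper uses. Take the edge $\{s,g\}$ and the edges $\{g,v_r\}$ for $v_r\in C$. For every $v_b\in B$, add one edge to a neighbor in $C$, which exists since $N_{\widehat{G}}(C)=B$. Then
\[
x(T)-w(T)=\bigl(|C|k+|B|(2k+4)\bigr)-\bigl(|B|(k+2)-(k+1)+|C|(k+1)+|B|(k+2)\bigr)=k+1-|C|\ge 1 .
\]
You do not need to pad $C$ to size exactly $k$, use the relation $|B|=3|R|/2$, or compare with a minimum vertex cover. The only fact needed is $|C|\le k$.
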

    \begin{proof}
    Suppose $(G,k)$ is yes-instance of \textsc{Vertex Cover}. As observed above a vertex cover $S\subseteq V(G)$ of $G$ of size $|S| \leq k$, is the same as a vertex set $S\subseteq R$ in $\widehat{G}$ such that $N_{\widehat{G}}(S)=B$. We construct a solution $T$ in $(G',s,x,w)$ as follows. Include the vertices $s$ and $g$ and connect $g$ to $R$ using the edges $\{g,v_r\},v_r\in S$. Finally, add all vertices of $B$ as leaves. Since $N_{\widehat{G}}(S)=B$, there is an available neighbor in $S$ for every $v_b\in B$. We have
    \begin{align*}
        x(T)&=|B|\cdot (2k+4)+|S|\cdot k,\\
        w(T)&=|B|\cdot (k+2) + |S|\cdot (k+1)+ |B|\cdot (k+2)-(k+1) .
    \end{align*}
    It follows that $x(T)-w(T)=-|S|+(k+1)\geq 1>0$, because $|S|\leq k$. Thus, $(G',s,x,w)$ is a yes-instance.
    \end{proof}

    \begin{claim}\label{claim:vc_second_dir}
        If $(G',s,x,w)$ is a yes-instance of \MSTGcore, then $(G,k)$ is a yes-instance of \textsc{Vertex Cover}.
    \end{claim}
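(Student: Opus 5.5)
The plan is a direct accounting argument. I will orient a solution tree towards the supply vertex and charge every vertex its $x$-value minus the weight of the edge to its parent. Let $T$ be a subtree of $G'$ with $s\in V(T)$ and $x(T)>w(T)$. Since $x(s)=0$ and $s$ is adjacent only to $g$, $T$ cannot be the single vertex $s$. So $T$ contains the edge $\{s,g\}$. Root $T$ at $s$ and, for every vertex $v\neq s$ of $T$, let $p(v)$ denote its parent. Then
\[
x(T)-w(T)=\sum_{v\in V(T)\setminus\{s\}}\bigl(x(v)-w(\{v,p(v)\})\bigr),
\]
and I bound each term separately using the weights of the construction.

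\textbf{The terms.} For $g$ the term is $0-w(\{s,g\})=-(|B|(k+2)-(k+1))$. For $v_b\in B\cap V(T)$, the parent lies in $R$, because $B$ is adjacent only to $R$. Its term is therefore exactly $2k+4-(k+2)=k+2$. For $v_r\in R\cap V(T)$, the parent is $\nu$, $g$ or a vertex of $B$, so its term is $0$, $-1$ or $-2$ respectively. Hence the term is $\le 0$, with equality only if $p(v_r)=\nu$. If $\nu\in V(T)$, its term is $-k$ if $p(\nu)\in R$ and $-(k+1)$ if $p(\nu)=g$.

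\textbf{All of $B$ is used.} Let $B'=B\cap V(T)$. Summing the bounds gives
\[
x(T)-w(T)\le (|B'|-|B|)(k+2)+(k+1).
\]
If $B'\neq B$, the right-hand side is at most $-1$, which is a contradiction. So $B\subseteq V(T)$.

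\textbf{$\nu$ is not used.} Suppose $\nu\in V(T)$. If $p(\nu)=g$, the profit is at most $(k+1)-(k+1)=0$. If $p(\nu)=v_r\in R$, then $p(v_r)\neq\nu$, so $v_r$ contributes at most $-1$. The profit is then at most $(k+1)-k-1=0$. Both cases contradict $x(T)>w(T)$, so $\nu\notin V(T)$.

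\textbf{Extracting the vertex cover.} Since $\nu\notin V(T)$, every vertex of $R\cap V(T)$ contributes at most $-1$. The profit is thus at most $(k+1)-|R\cap V(T)|$, and positivity forces $|R\cap V(T)|\le k$. Now let $S=\{p(v_b)\mid v_b\in B\}\subseteq R\cap V(T)$. Every $v_b\in B$ has its parent in $S$, so $N_{\widehat G}(S)=B$ and $|S|\le k$. By the observation at the start of the proof, $S$ is then a vertex cover of $G$ of size at most $k$.

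The only delicate step is excluding the Steiner vertex $\nu$: its cheap edges to $R$ could otherwise let many vertices of $R$ enter for free. The case split on the parent of $\nu$ handles this, since in either case the one unit of slack $(k+1)$ is exhausted.
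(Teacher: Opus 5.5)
Your proof is correct, but it takes a different route from the paper's.

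The paper's route:
\begin{itemize}
\item It fixes a solution $T$ maximizing $x(T)-w(T)$.
\item It compares $x$ with Bird's core allocation $y$ (using $y(T)\le w(T)$) to rule out the Steiner vertex $\nu$.
\item It uses an exchange argument: a child $v_r$ of a $B$-vertex is re-hung directly on $g$, saving one unit. Together with maximality, this makes every vertex of $V_B$ a leaf.
\item Only after this normalization does it write down the exact expression (\ref{eqn:vc}) and read off $|V_B|=|B|$ and $|V_R|\le k$.
\end{itemize}

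Your route roots $T$ at $s$ and charges each vertex $x(v)-w(\{v,p(v)\})$. This gives an exact identity for an arbitrary solution. You therefore need neither maximality, nor the core allocation, nor the leaf normalization. An $R$-vertex hanging below a $B$-vertex simply pays $-2$ instead of $-1$. The vertex $\nu$ is excluded by your two-case split on $p(\nu)$: the single unit of slack $k+1$ is consumed either by the edge $\{g,\nu\}$ or by the forced $-1$ at $p(\nu)$. You also reverse the paper's order, first forcing $B\subseteq V(T)$ and only then excluding $\nu$.

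What your approach buys is a self-contained, purely local certificate that does not appeal to the game-theoretic core at all. What the paper's approach buys is a normalized solution with the closed form (\ref{eqn:vc}). The paper reuses that form verbatim, via \Cref{claim:vc_vbb,claim:vc_vrk}, in the planar reduction of \Cref{thm:nph_planar}.
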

    \begin{proof}
        Suppose that there is a solution $T$ for $(G',s,x,w)$, i.e., $x(T)-w(T)>0$ and $s\in V(T)$. Suppose moreover that $x(T)-w(T)$ is maximum possible. Consider the allocation $y\colon V(G')\to \mathbb{Z}$ given by the greedy algorithm computing the core element from the minimum spanning tree from the proof of \Cref{claim:vc_spanning_tree}: $y(v)$ is given by the weight of the first edge on the path from $v$ to $s$ in a minimum spanning tree, and $y(s)=0$ (see~\cite{bird1976cost}). The allocation $y$ is given by:
        \begin{align*}
            \forall v_b\in B:&y(v_b)=k+2, \\
            \forall v_r\in R:&y(v_r)=k, \\
            &y(\nu)=w(\{g,\nu\}) = k+1, \\
            &y(g)=w(\{s,g\})=|B|\cdot (k+2)-(k+1).
        \end{align*}

        Since $y$ is in the core, we have $y(T)\leq w(T)$. In particular, $x(T)>y(T)$, thus $x(T)-y(T)>0$. Let $V_B=V(T)\cap B$ and $V_R=V(T)\cap R$.

        \begin{claim}\label{claim:vc_guard_steiner}
            $g\in V(T)$ and $\nu \notin V(T)$.
        \end{claim}
        \begin{proof}
            Notice that $g$ is the only vertex connected to $s$, so $g\in V(T)$. For the sake of contradiction assume that $\nu \in V(T)$. Then $x(T)= |V_B|\cdot (2k+4)+|V_R|\cdot k$. On the other hand, $y(T)=|V_B|\cdot (k+2)+|V_R|\cdot k+(k+1)+|B|\cdot (k+2)-(k+1)$. Notice that $x(T)-y(T)=(|V_B|-|B|)(k+2)\leq 0$ because $|V_B|\leq |B|$. A contradiction, because $x(T)-y(T)>0$.
        \end{proof}

        \begin{claim}\label{claim:vc_leaves}
            All vertices in $V_B$ are leaves in $T$.
        \end{claim}
        \begin{proof}
            For the sake of contradiction, suppose that there is a vertex $v_b\in V_B$ with $\deg_T(v_B)\geq 2$. Let $v_r\in V_R$ be a neighbor of $v_b$ in $T$ such that the tree $T\setminus e$ where $e=\{v_r,v_b\}$ contains no path from $g$ to $v_r$. Note that such $v_r$ must exist as otherwise the paths together with the edges incident to $v_b$ would create a cycle in $T$, which is impossible, because $T$ is a tree. Since $g$ and $v_r$ lie in distinct connected components of $T\setminus e$, we can connect them via the edge $\{g,v_r\}$, creating a new tree $T'$. Notice that $w(T')=w(T)-(k+2)+(k+1)=w(T)-1$, but this contradicts the maximality of $x(T)-w(T)$.
        \end{proof}

        By \Cref{claim:vc_guard_steiner,claim:vc_leaves}, every vertex of $V_B$ is connected via exactly one edge to a vertex of $V_R$ and every vertex of $V_R$ is connected to $g$ via an edge of weight $k+1$. It follows that
        \begin{align*}
            x(T)&= |V_B|\cdot (2k+4)+|V_R|\cdot k \\
            w(T)&=|V_B|\cdot (k+2)+|V_R|\cdot(k+1)+|B|(k+2)-(k+1)
        \end{align*}
        The assumption $x(T)>w(T)$ is equivalent to $x(T)-w(T)>0$. Substituting the expressions for $x(T)$ and $y(T)$ into this inequality yields
        \begin{align}\label{eqn:vc}
            |V_B|\cdot (k+2)-|V_R|-|B|\cdot(k+2)+(k+1)>0 
        \end{align}

        \begin{claim}\label{claim:vc_vbb}
            $|V_B|=|B|$.
        \end{claim}
        \begin{proof}
        It is sufficient to prove both inequalities $|V_B| \leq |B|$ and $|V_B| \geq |B|$. For the first inequality, we note that $|V_B| \leq |B|$ holds because $V_B \subseteq B $. Now for proving $|V_B| \geq |B|$, for the sake of contradiction we can assume that $|V_B| < |B|$. Since both numbers are integers, this is equivalent to $|V_B| \leq |B| -1$. By plugging into (\ref{eqn:vc}) we get:
        \[
        (|B|-1)\cdot (k+2)-|V_R|-|B|\cdot(k+2)+(k+1)>0
        \]
        which entails $-1 > |V_R|$, which is impossible.
        \end{proof}

        \begin{claim}\label{claim:vc_vrk}
            $|V_R|\leq k$.
        \end{claim}
        \begin{proof}
        We note that by plugging $|V_B| = |B|$ into \Cref{eqn:vc} we obtain that $(k+1)>|V_R|$, i.e., $|V_R|\leq k$, as claimed.
        \end{proof}
    Since $T$ is connected and each $v_b\in V_B$ has a neighbor in $V_R$, it follows that $N_{\widehat{G}}(V_R)\supseteq V_B=B$, thus $V_R$ is a vertex cover of size at most $k$ for $G$. This finishes the proof of \Cref{claim:vc_second_dir}.
    \end{proof}
    
    The correctness of the construction follows from \Cref{claim:vc_first_dir,claim:vc_second_dir}. Note that $G'\setminus \{g,\nu\}$ is planar, hence $G'$ is a $2$-apex graph. Note that the reduction can be computed in polynomial time. This finishes the proof of \Cref{thm:nph_twoapex}.
    
\end{proof}
}%toappendix

\begin{apprestatable}{theorem}{thmnphplanar}\label{thm:nph_planar}
    \MSTGcore is \NP-hard even when the underlying graph is planar.
\end{apprestatable}
\toappendix{
\sv{\thmnphplanar*}
\begin{proof}

    \begin{figure}
    \centering
    \begin{tikzpicture}[scale=1.2,
          % Styles for original graph G and set R (Red)
          vertexR/.style={circle, draw=red!80!black, thick, fill=white, inner sep=2pt, minimum size=6mm},
          vertexRsol/.style={circle, draw=red!80!black, thick, fill=red!30, inner sep=2pt, minimum size=6mm},
          % Styles for subdivision set B (Blue)
          vertexBsol/.style={circle, draw=blue!80!black, thick, fill=blue!20, inner sep=1pt, minimum size=4mm},
          % Styles for guard and supply (Gray)
          vertexGsol/.style={circle, draw=black!80!black, thick, fill=black!10, inner sep=2pt, minimum size=6mm},
          % Edge styles
          arrow/.style={->, ultra thick, line width=2pt, >=Stealth},
          solEdgeG/.style={line width=2.5pt, red!80!black},
          solEdgeGp/.style={line width=2.5pt, orange!80!black},
          nsEdge/.style={thick, black!30},
          nsEdgeG/.style={thick, black!40}
    ]

% LEFT SIDE: Graph G
\node[vertexR] (v1) at (0,0) {$v_1$};
\node[vertexRsol] (v2) at (2,0) {$v_2$};
\node[vertexRsol] (v3) at (2,2) {$v_3$};
\node[vertexRsol] (v4) at (0,2) {$v_4$};
\node[vertexR] (v5) at (1,3.5) {$v_5$};

\draw[nsEdgeG] (v1) -- (v2); 
\draw[nsEdgeG] (v1) -- (v3);
\draw[solEdgeG] (v2) -- (v3); 
\draw[solEdgeG] (v3) -- (v4); 
\draw[nsEdgeG] (v4) -- (v1); 
\draw[nsEdgeG] (v4) -- (v5);
\draw[nsEdgeG] (v5) -- (v3);

\node[font=\large\bfseries] at (1, -1.0) {$G$};

% ARROW: Construction / Reduction
\draw[arrow] (3.0, 1.5) -- node[above=2pt, font=\small\bfseries] {} (5.0, 1.5);

% RIGHT SIDE: Graph G' (with subset \widehat{G})
\begin{scope}[xshift=6.0cm]
    % Draw background dashed box for \widehat{G}
    \begin{scope}
        \draw[draw=gray!50, dashed, thick, rounded corners, fill=gray!2] 
            (-0.5,-0.6) rectangle (2.5,4.0);
        \node[gray!80!black, font=\small\bfseries, anchor=north west] at (-0.4, 3.9) {$\widehat{G}$};
    \end{scope}

    % Original vertices R
    \node[vertexR] (v1a) at (0,0) {$v_1$};
    \node[vertexRsol] (v2a) at (2,0) {$v_2$};
    \node[vertexRsol] (v3a) at (2,2) {$v_3$};
    \node[vertexRsol] (v4a) at (0,2) {$v_4$};
    \node[vertexR] (v5a) at (1,3.5) {$v_5$}; 
    
    % Subdivision vertices B
    \node[vertexBsol] (v12a) at (1,0) {};   
    \node[vertexBsol] (v23a) at (2,1) {};   
    \node[vertexBsol] (v34a) at (1,2) {};  
    \node[vertexBsol] (v41a) at (0,1) {};  
    \node[vertexBsol] (v45a) at (0.5, 2.75) {};
    \node[vertexBsol] (v53a) at (1.5, 2.75) {};
    \node[vertexBsol] (v13a) at (1,1) {};

    % Guard and Supply
    \node[vertexGsol] (g) at (3.3, 1) {$g$};
    \node[vertexGsol] (s) at (4.8, 1) {$s$};

    % Edges in \widehat{G}
    % v1 - v2
    \draw[nsEdge] (v1a) -- (v12a);
    \draw[solEdgeGp] (v12a) -- (v2a);
    % v1 - v3
    \draw[nsEdge] (v1a) -- (v13a);
    \draw[solEdgeGp] (v13a) -- (v3a);
    % v2 - v3
    \draw[solEdgeGp] (v2a) -- (v23a);
    \draw[solEdgeGp] (v23a) -- (v3a);
    % v3 - v4
    \draw[solEdgeGp] (v3a) -- (v34a);
    \draw[solEdgeGp] (v34a) -- (v4a);
    % v4 - v1
    \draw[solEdgeGp] (v4a) -- (v41a);
    \draw[nsEdge] (v41a) -- (v1a);
    % v4 - v5
    \draw[solEdgeGp] (v4a) -- (v45a);
    \draw[nsEdge] (v45a) -- (v5a);
    % v5 - v3
    \draw[nsEdge] (v5a) -- (v53a);
    \draw[solEdgeGp] (v53a) -- (v3a);

    % Guard and Supply Edges
    \draw[solEdgeGp] (g) -- (v2a);
    \draw[nsEdge] (g) -- (v3a);
    \draw[solEdgeGp] (s) -- (g);

    \node[font=\large\bfseries] at (2.2, -1.0) {$G'$};

\end{scope}

\end{tikzpicture}
\caption{Visualization of the reduction from \textsc{Connected Vertex Cover} (on the left) to \MSTGcore (on the right) from \Cref{thm:nph_planar}. On the left is $G$ with a connected vertex cover $S=\{v_2,v_3,v_4\}$ together with spanning tree witnessing the connectivity of $G[S]$ (in red).
On the right is $G'$, which consists of the subdivision graph $\widehat{G}$ (enclosed in the dashed gray box) along with the guard $g$ and the supply $s$.
Vertices in $R$ (original) are in red, while vertices in $B$ (subdivisions) are blue. The solution tree $T$ of $(G', s, x, w)$ is highlighted in orange. For the sake of visual clarity and readability, the allocation $x$ and edge weights $w$ are omitted from the illustration.
}
\label{fig:reduction_cvc_visualization}
\end{figure}

    Reduce from \textsc{Connected Vertex Cover} restricted to planar graphs of maximum degree $4$. Such variant is \NP-hard by the result of Garey and Johnson~\cite{garey1977rectilinear}. Let $(G,k)$ be an instance of \textsc{Connected Vertex Cover}. Assume that $G$ has no isolated vertices. We can remove them without changing the answer to $(G,k)$. If $G$ is disconnected and has no isolated vertices, then $(G,k)$ is trivially a no-instance, hence we return a trivial no-instance. Assume that $G$ is connected. We create an instance $(G',s,x,w)$ as follows. The construction is a modified construction from the proof of~\Cref{thm:nph_twoapex}. We refer the reader to~\Cref{fig:reduction_cvc_visualization} for visualization of the construction. Let $\widehat{G}$ be the graph obtained from $G$ by subdividing each edge exactly once. Let $R\subseteq V(\widehat{G})$ be the set of original vertices of $G$ and let $B=V(\widehat{G})\setminus R$ be the set of vertices corresponding to the subdivisions. Notice that the \textsc{Connected Vertex Cover} problem translates to finding a set $S\subseteq R,|S|\leq k$ such that $N_{\widehat{G}}(S)=B$ and moreover, $\widehat{G}[S\cup B]$ is connected subgraph of $\widehat{G}$. Let $e^*=\{u,v\}$ be any fixed edge of $G$. To create $G'$, add a vertex $g$ (the \emph{guard}) and connect it to exactly $u$ and $v$. Finally, add the supply vertex $s$ and connect it to $g$. We now specify the weights $w$ and the allocation $x$. All vertices outside $R\cup B$ satisfy $x(v)=0$. The vertices $v_r\in R$ satisfy $x(v_r)=k$, and $x(v_b)=2k+3$ for $v_b\in B$. The weights of the edges are as follows:
    \begin{align*}
        &w(\{s,g\})=|B|\cdot (k+2)-(k+1),\\
        &w(\{g,v\})=w(\{g,u\})=k+1,\\
        \forall \{v_r,v_b\}\in E(\widehat{G}):&w(\{v_r,v_b\})=k+1.
    \end{align*}
    This finishes the description of the instance $(G',s,x,w)$. 
    
    The difference from the previous construction in the proof of \Cref{thm:nph_twoapex} is in that we do not use the steiner vertex, and the only difference of $x$ and $w$ is that the weights $w$ on the edges between $R$ and $B$ and the values $x$ on $v_b\in B$ are decreased by $1$.

    \begin{claim}\label{claim:cvc_reduction_onedir}
        If $(G,k)$ is a yes-instance of \textsc{Connected Vertex Cover}, then $(G',s,x,w)$ is a yes-instance of \MSTGcore.
    \end{claim}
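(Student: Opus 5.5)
The plan is to mirror the proof of \Cref{claim:vc_first_dir}: take a connected vertex cover $S\subseteq R$ of $G$ with $|S|\le k$ and build an explicit subtree $T$ of $G'$ containing $s$. We then compute $x(T)-w(T)$ directly and show that it equals $k+1-|S|\ge 1$. The difference from \Cref{thm:nph_twoapex} is that there is no Steiner vertex and $g$ is adjacent only to $u$ and $v$. So the vertices of $S$ can no longer each be attached to $g$ separately. Instead, $S$ has to be connected through subdivision vertices, and the connectivity of $G[S]$ pays for this.

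\textbf{Construction of $T$.} First, since $S$ is a vertex cover and $e^*=\{u,v\}\in E(G)$, at least one endpoint of $e^*$ lies in $S$; say $u\in S$. We put the edges $\{s,g\}$ and $\{g,u\}$ into $T$. Second, $G[S]$ is connected, so we fix a spanning tree $T_S$ of $G[S]$ with $|S|-1$ edges. For each edge $\{a,b\}$ of $T_S$ we take the corresponding subdivision vertex $v_b\in B$ together with both edges $\{a,v_b\}$ and $\{v_b,b\}$ of $\widehat{G}$. Since $T_S$ is a tree, this yields a tree on $S\cup B_1$ with $|B_1|=|S|-1$, and $u$ connects it to $g$ and $s$. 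Third, every remaining vertex $v_b\in B\setminus B_1$ corresponds to an edge of $G$ that has an endpoint in $S$, because $S$ is a vertex cover. We attach each such $v_b$ as a leaf via a single edge of weight $k+1$ to one such endpoint. The result is a tree containing $s$, all of $S$ and all of $B$.

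\textbf{Counting.} The allocated value is $x(T)=|S|\cdot k+|B|\cdot(2k+3)$. The tree uses $2(|S|-1)+(|B|-|S|+1)=|B|+|S|-1$ edges of weight $k+1$ inside $\widehat{G}$, plus $\{g,u\}$ of weight $k+1$ and $\{s,g\}$ of weight $|B|(k+2)-(k+1)$. Hence
\[
w(T)=|B|(k+2)+(k+1)(|B|+|S|-1).
\]
Subtracting gives
\[
x(T)-w(T)=|S|k-(k+1)|S|+(k+1)=k+1-|S|\ge 1>0,
\]
so $T$ witnesses that $(G',s,x,w)$ is a yes-instance. By the structural characterization in \Cref{sec:preliminaries}, such a subtree suffices.

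The only delicate step is the connectivity bookkeeping. We must verify that the chosen subdivision vertices create no cycle, which holds because $T_S$ is a tree and each edge of $G$ has a unique subdivision vertex. We must also check that exactly $|S|-1$ vertices of $B$ have degree $2$ in $T$, since this is what makes the edge count, and hence the final arithmetic, come out to $k+1-|S|$.
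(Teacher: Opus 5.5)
Your proof is correct and follows the paper's argument. The paper also attaches $g$ to an endpoint of $e^*$ lying in $S$ and $s$ to $g$, but it takes an arbitrary spanning tree of $\widehat{G}[S\cup B]$ (connected because $S$ is a connected vertex cover) instead of your explicit one built from a spanning tree of $G[S]$ plus leaves; since any such spanning tree has exactly $|S|+|B|-1$ edges of weight $k+1$, the same count $x(T)-w(T)=k+1-|S|$ follows without the degree bookkeeping you flag as delicate.
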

    \begin{proof}
        Suppose that $(G,k)$ is a yes-instance of \textsc{Connected Vertex Cover}. As observed above, a connected vertex cover $S$ of $G$ is the same as a set $S\subseteq R$ such that $N_{\widehat{G}}(S)=B$, and $\widehat{G}[S\cup B]$ is a connected subgraph of $\widehat{G}$. Let $\widehat{T}$ be an arbitrary spanning tree of $\widehat{G}[S\cup B]$. Create a solution $T$ of $(G',s,x,w)$ by first attaching $g$ as a leaf to $\widehat{T}$ and then attaching $s$ as a leaf via $g$. Note that since $S$ is a vertex cover of $G$, we have $u\in S\vee v \in S$, because $\{u,v\}\in E(G)$. Hence, we can attach $g$ to one of $u$ or $v$. We have
        \begin{align*}
        x(T)&=|B|\cdot (2k+3)+|S|\cdot k,\\
        w(T)&=(|B|+|S|)\cdot (k+1)+ |B|\cdot (k+2)-(k+1) .
        \end{align*}
        The formula for $w(T)$ comes from the fact that $T\setminus \{s\}$ is a subtree of $T$ with the vertex set $B\cup S\cup \{g\}$, hence it has $|B|+|S|+1$ vertices. All the edges of $T\setminus \{s\}$ have weight $k+1$, hence the total weight of this subtree is $(|B|+|S|)\cdot (k+1)$. Lastly, $|B|\cdot (k+2)-(k+1)$ comes from the edge $\{g,s\}$. It follows that $x(T)-w(T)=-|S|+(k+1)\geq 1>0$, because $|S|\leq k$. Thus, $(G',s,x,w)$ is a yes-instance.
    \end{proof}

    \begin{claim}\label{claim:cvc_reduction_twodir}
        If $(G',s,x,w)$ is a yes-instance of \MSTGcore, then $(G,k)$ is a yes-instance of \textsc{Connected Vertex Cover}.
    \end{claim}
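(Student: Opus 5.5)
Let $T$ be a subtree of $G'$ with $s\in V(T)$ and $x(T)>w(T)$. I would first pin down the shape of $T$, then compute $x(T)-w(T)$ exactly, and finally read off a connected vertex cover of $G$ from $V(T)\cap R$. The approach mirrors the counting in the proof of \Cref{thm:nph_twoapex}. It is simpler here because all edges other than $\{s,g\}$ have weight $k+1$, so no exchange argument such as \Cref{claim:vc_leaves} is needed.

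\textbf{Structure and counting.} Since $x(s)=0$ and $w(\{s,g\})>0$, $T$ cannot be the single vertex $s$. As $g$ is the only neighbour of $s$, we get $g\in V(T)$ and $\{s,g\}\in E(T)$. Let $V_R=V(T)\cap R$ and $V_B=V(T)\cap B$. Then $T-s$ is a tree on the $|V_R|+|V_B|+1$ vertices $V_R\cup V_B\cup\{g\}$, and every one of its edges has weight $k+1$. Hence
\[
w(T)=(|V_R|+|V_B|)(k+1)+|B|(k+2)-(k+1),
\qquad
x(T)=|V_B|(2k+3)+|V_R|k .
\]
The condition $x(T)>w(T)$ therefore becomes $(|V_B|-|B|)(k+2)-|V_R|+(k+1)>0$. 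If $|V_B|\le |B|-1$, this gives $-1-|V_R|>0$, which is impossible. So $V_B=B$, and then $|V_R|\le k$.

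\textbf{Vertex cover.} $G$ is connected and has an edge, so $B\neq\emptyset$. Each $v_b\in B$ lies in $T$, its neighbours in $G'$ all lie in $R$, and $T$ is connected with at least two vertices. Hence $v_b$ has a $T$-neighbour in $V_R$. This means every edge of $G$ has an endpoint in $V_R$, so $V_R$ is a vertex cover of $G$ of size at most $k$.

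\textbf{Connectivity (the main point of care).} It remains to show that $G[V_R]$ is connected. I would argue that $H:=\widehat{G}[V_R\cup B]$ is connected. $H$ contains $T-\{s,g\}$, and in $T$ the vertex $g$ has degree at most $2$ among its neighbours $u,v$.
\begin{itemize}
\item If $g$ has exactly one neighbour in $V_R$, then $T-\{s,g\}$ is already a tree, so $H$ is connected.
\item If $g$ is adjacent in $T$ to both $u$ and $v$, then $T-\{s,g\}$ splits into two components, one containing $u$ and one containing $v$. The subdivision vertex $b^*$ of $e^*=\{u,v\}$ is adjacent in $\widehat{G}$ to both $u$ and $v$, and $u,v\in V_R$. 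So $H$ contains the path $u\,b^*\,v$, which reconnects the two components.
\end{itemize}

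\textbf{From $H$ to $G[V_R]$.} Every vertex of $B$ has degree $2$ in $\widehat{G}$. So any path in $H$ between two vertices of $V_R$ alternates between $R$ and $B$, and each segment $r\,b\,r'$ of it corresponds to the edge $\{r,r'\}$ of $G[V_R]$. Thus $G[V_R]$ is connected. Finally, $V_R\neq\emptyset$, because $B\neq\emptyset$ and each vertex of $B$ has a $T$-neighbour in $V_R$. Therefore $V_R$ is a connected vertex cover of size at most $k$, and $(G,k)$ is a yes-instance.
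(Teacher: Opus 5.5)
Your proof is correct and follows the paper's argument: the same count forces $V_B=B$ and $|V_R|\le k$, the tree neighbours of the $B$-vertices give the vertex cover, and connectivity comes from $T-\{s,g\}$ together with the subdivision vertex of $e^*$. The only difference is cosmetic. Where the paper swaps $\{v,g\}$ for the missing edge at the subdivision vertex to reduce to $\deg_T(g)=2$, you observe directly that $\widehat{G}[V_R\cup B]$ contains the path $u\,b^*\,v$, which is slightly cleaner, and you also make the passage from $\widehat{G}[V_R\cup B]$ to $G[V_R]$ explicit.
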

    \begin{proof}
        Let $(G',s,x,w)$ be a yes-instance and let $T$ be a solution for $(G',s,x,w)$. Let $V_R=V(T)\cap R$ and $V_B=V(T)\cap B$. Note that $T\setminus \{s\}$ is a tree on $|V_B|+|V_R|+1$ vertices, and all edges of $T\setminus \{s\}$ have weight $k+1$. Hence we have
        \begin{align*}
            x(T)&=|V_B|\cdot (2k+3)+|V_R|\cdot k \\
            w(T)&=(|V_B|+|V_R|)\cdot (k+1)+|B|\cdot (k+2)-(k+1)
        \end{align*}
        The assumption $x(T)>w(T)$ is equivalent to $x(T)-w(T)>0$. Substituting the expressions for $x(T)$ and $y(T)$ into this inequality yields
        \begin{equation}\label{eqn:cvc}
            |V_B|\cdot (k+2)-|V_R|-|B|\cdot (k+2)+(k+1)>0
        \end{equation}
        Note that this is the same equation as \Cref{eqn:vc}. We thus have $|V_B|=|B|$ and $|V_R|\leq k$ by \Cref{claim:vc_vbb,claim:vc_vrk}. Since $T$ is connected and each $v_b\in V_B$ has a neighbor in $V_R$, it follows that $N_{\widehat{G}}(V_R)\supseteq V_B=B$, thus $V_R$ is vertex cover of size at most $k$ for $G$. It remains to argue about connectedness. We want to show that $\widehat{G}[V_R\cup B]$ is connected. Consider the tree $T$ without the vertices $g,s$, i.e. $T'=T[V_R\cup V_B]=T[V_R\cup B]$. If $\deg_T(g)=2$, then $T'$ is connected and thus witnessing the connectedness of $\widehat{G}[V_R\cup B]$, i.e., the connectedness of $G[V_R]$.

        If $\deg_T(g)=3$, then $T$ contains both $e_1=\{u,g\}$ and $e_2=\{v,g\}$. Let $v_{e^*}\in V(\widehat{G})$ be the vertex created by the subdivision of the edge $e^*=\{u,v\}\in E(G)$. Since $v_{e^*}\in V(T)$, one of the edges $e_3=\{u,v_{e^*}\},e_4=\{v,v_{e^*}\}$ has to be in $T$, because these are the only edges of $\widehat{G}$ incident to $v_{e^*}$. Note that if both $e_3,e_4$ were in $T$, then $e_1,e_2,e_3,e_4$ would create a $4$-cycle in $T$, which is impossible. Hence we can switch $e_2$ with the missing edge from $v_{e^*}$, creating a new tree $T''$ with $w(T'')=w(T')$. Now we have $\deg_{T''}(g)=2$ and the previous argument applies. This finishes the proof of \Cref{claim:cvc_reduction_twodir}.
    \end{proof}

    The correctness of the construction follows from \Cref{claim:cvc_reduction_onedir,claim:cvc_reduction_twodir}. Note that $\widehat{G}$ is planar as it is a subdivision of the planar graph $G$. Note that the vertices $g$ and $s$ including their incident edges can be drawn inside a face incident to the edge $e^*$ without introducing any new crossings. Hence, $G'$ is planar. Finally, the reduction can be computed in polynomial time. This finishes the proof of \Cref{thm:nph_planar}.

\end{proof}
}%toappendix

\subsection{\NP-hardness for maximum degree $3$ and binary weights}

\begin{observation}
    \MSTGcore remains \NP-hard even when the underlying graph has maximum degree $3$.
\end{observation}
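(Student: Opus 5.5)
The plan is to reduce from \MSTGcore on general graphs, which is \NP-hard by the result of Faigle et al.~\cite{faigle97coNPhard}, or from any of the constructions above. The reduction applies a \emph{binary tree expansion} (\Cref{defn:operations}) to every vertex $v$ with $\deg(v)\geq 4$. By \Cref{lem:operations_correctness}, each such expansion preserves yes- and no-instances as well as the efficiency condition. So it only remains to check the degrees and the running time.

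For the degree bound, fix a vertex $v$ of degree $d=\deg(v)\ge 4$ and replace it by a rooted binary tree with exactly $d$ leaves. I would use the full binary tree in which every internal vertex has exactly two children. Then the root has degree $2$, every other internal vertex has degree $3$ (one parent and two children), and every leaf has degree $2$ (its parent plus the single original edge attached to it). All newly created vertices therefore have degree at most $3$.

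Expansions performed at different vertices do not interact. Each original edge $\{u,v\}$ becomes an edge between a leaf of $u$'s tree and a leaf of $v$'s tree, or an original endpoint if that vertex was not expanded. So the expansions can be performed one after another, and no vertex of degree $\le 3$ is ever made larger.

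If $s$ itself is expanded, the root inherits the role of the supply vertex. I will state this explicitly, because the operation's definition says the root inherits $x(v)$ but does not say anything about the supply. The correctness argument in \Cref{lem:operations_correctness} transfers verbatim with $s'$ equal to the root, since all tree edges have weight $0$.

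The only point needing a little care is this degree accounting at the root and leaves, together with making sure the supply vertex is handled. The construction is otherwise routine. Each tree has $O(d)$ vertices, so the new graph has $O(n+m)$ vertices and the reduction runs in polynomial time. Since the efficiency condition is preserved, hardness also holds for imputations.
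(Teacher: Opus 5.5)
Your proposal is correct and matches the paper's proof. The paper likewise reduces from \MSTGcore by applying a binary tree expansion to every vertex of degree greater than $3$, and it invokes \Cref{lem:operations_correctness} for correctness. Your explicit degree count (root and leaves of degree $2$, other internal nodes of degree $3$) and your note that the root must take over the role of the supply vertex when $s$ is expanded fill in details the paper leaves implicit.
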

\begin{proof}
    Reduce from \MSTGcore and perform binary tree expansion of every vertex $v$ with $\deg (v) > 3$. Correctness follows from~\Cref{lem:operations_correctness}.
\end{proof}

\begin{apprestatable}{theorem}{thmmstghardnesszeroone}\label{thm:mstg_hardness_01}
    \MSTGcore remains \NP-hard even in the following restricted cases.
    \begin{enumerate}
        \item $w(e)\in \{0,1\}$ for all edges $e$ and $x(v)\in \{0,1\}$ for all vertices $v$, or
        \item $w(e)=1$ for all edges $e$ and $x(v)\in\{0,2\}$ for all vertices $v$.
    \end{enumerate}
\end{apprestatable}
\sv{
\begin{proof}[Proof sketch]
    Reduce from \MSTGcore. For the first case use $(1,0,1)$-expansions and $(1,1,\ldots,1)$-subdivision. For the second case, use $(2,1,1)$-expansions and contractions.
\end{proof}
}%sv
\toappendix{
\sv{\thmmstghardnesszeroone*}
\begin{proof}
    We reduce from \MSTGcore which is strongly \NP-hard by the result of Faigle et al.~\cite{faigle97coNPhard}. For the first case, for any vertex $v$ with $x(v)>1$ perform $(1,0,1)$-expansion $x(v)-1$ times and for every edge $e$ with $w(e)>1$ perform a $(\underbrace{1,1,\ldots,1}_{w(e)})$-subdivision of $e$.

    For the second case, further contract every zero-weight edge. Now, for every vertex $v$ with $x(v)\notin \{0,2\}$ instead of $(1,0,1)$-expansion, perform $(2,1,1)$-expansion $x(v)-2$ times. 
    
    The correctness of both constructions follows from \Cref{lem:operations_correctness}.
\end{proof}
}%toappendix

Observe that the restrictions on the weights and allocation in \Cref{thm:mstg_hardness_01} are tight. If $x$ is restricted to attain only single value, then the problem becomes trivial. Similarly, if $\forall e\in E: w(e)=0$, then the instance is a yes-instance if and only if $x(v)\neq 0$ for some $v$ reachable from $s$. Even in the special case with $x(s)=0$ and $x(v)=1$ for $v\neq s$, the instance is a yes-instance if and only if there is an edge $e$ with $w(e)=0$ reachable from $s$. The last interesting case is when $w(e)=1$ for every edge and $x(v)\in \{0,1\}$ (i.e., the graph is unweighted). In this case, the instance is a yes-instance if and only if $x(s)=1$. We prove this formally in \Cref{obs:mstg_01_easy} \sv{in the appendix}.

\toappendix{
\begin{observation}\label{obs:mstg_01_easy}
    An instance of \MSTGcore where $x(s)=0$ and $x(v)=1$ for $v\neq s$, and $w(e)\in \{0,1\}$ is a yes-instance if and only if an edge $e$ with $w(e)=0$ is reachable from $s$.

    An instance of \MSTGcore where $w(e)=1$ for all edges $e$ and $x(v)\in\{0,1\}$ for all vertices $v$ is a yes-instance if and only if $x(s)=1$.

\end{observation}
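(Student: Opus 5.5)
Both parts follow by direct counting on a subtree $T\ni s$, using the reformulation that the instance is a yes-instance if and only if some subtree $T$ with $s\in V(T)$ satisfies $x(T)>w(T)$.

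\emph{First part.} Assume $x(s)=0$, $x(v)=1$ for $v\neq s$, and $w(e)\in\{0,1\}$. For any subtree $T\ni s$ we have $x(T)=|V(T)|-1=|E(T)|$, since $s$ contributes nothing. On the other hand, $w(T)$ is the number of weight-$1$ edges of $T$. Hence $x(T)-w(T)$ equals the number of zero-weight edges in $T$.

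For the backward direction, let $e=\{a,b\}$ be a zero-weight edge reachable from $s$. Take a path from $s$ to $a$, extend it by $e$ if $b$ is not already on it, and if $b$ lies on the path, truncate the path at the first of $a,b$ and append $e$. This gives a path, hence a tree, containing $s$ and $e$, and therefore $x(T)-w(T)\geq 1>0$.

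For the forward direction, a solution $T$ has $x(T)>w(T)$, so by the identity above it contains some zero-weight edge. That edge is connected to $s$ inside $T$, so it is reachable from $s$.

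\emph{Second part.} Assume $w\equiv 1$ and $x(v)\in\{0,1\}$. Then $w(T)=|V(T)|-1$ and $x(T)\leq |V(T)|$, with $x(T)=|V(T)|$ exactly when every vertex of $T$ has value $1$. The condition $x(T)>w(T)=|V(T)|-1$ therefore forces $x(T)=|V(T)|$, so all vertices of $T$, including $s$, have value $1$. In particular $x(s)=1$.

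Conversely, if $x(s)=1$, the single-vertex tree $T=(\{s\},\emptyset)$ gives $x(T)=1>0=w(T)$.

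The only point needing care is the path construction when $b$ already lies on the $s$–$a$ path; truncating at the first of $a,b$ handles it. Everything else is bookkeeping.
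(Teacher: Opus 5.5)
Your proposal is correct and follows the paper's proof essentially step for step. In the first part, the paper compares $x(T)=|V(T)|-1$ with the number of weight-$1$ edges, and your identity that $x(T)-w(T)$ equals the number of zero-weight edges in $T$ is the same count in cleaner form; in the second part, the bound $x(T)\le |V(T)|=w(T)+1$ forces $x(s)=1$ exactly as in the paper, and your explicit treatment of the case where $b$ already lies on the $s$--$a$ path fills in a detail the paper leaves implicit.
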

\begin{proof}
    For the first part, suppose that $(G,s,x,w)$ is a yes-instance with a witnessing solution $T$. By the conditions on $x$ we have $x(T)=|T|-1$. If every edge of $T$ had weight $1$, then $w(T)=|T|-1$, contradicting the fact that $x(T)>w(T)$, hence in particular $T$ contains an edge with weight $0$, in particular, this edge is reachable from $s$. On the other hand, if there is an edge $e$ reachable from $s$, then the solution consists of the path from $s$ and including the edge $e$. Such solution $T$ satisfies $x(T)=|T|-1$ and $w(T)\leq |T|-2$ as there are exactly $|T|-1$ edges in $T$ and at least $1$ has weight $0$ (the edge $e$), hence at most $|T|-2$ of them have weight $1$.

    For the second part, suppose $(G,s,x,w)$ is a yes-instance with a witnessing solution $T$. Since $w(e)=1$ for every $e$, we have that $w(T)=|V(T)|-1$ and $x(T)\leq |V(T)|$, because $x(v)\leq 1$ for every vertex. On the other hand $x(T)>w(T)=|V(T)|-1$ because $T$ is a solution, thus $x(T)=|V(T)|$, thus, since $s\in V(T)$, we have $x(s)=1$. On the other hand, clearly if $x(s)=1$, then the tree $T$ consisting of a single vertex $s$ is a solution.
\end{proof}
}%toappendix

\appsection{Parameterization by the support of the allocation}{sec:support_fpt}
In this section, we provide an \FPT algorithm for \MSTGcore parameterized by the size of the support of the input allocation, i.e., the size of the set $\operatorname{supp}(x)=\{v\in V\mid x(v)\neq 0\}$. The idea is as follows. The `interesting' vertices are in the set $X:=\operatorname{supp}(x)\cup \{s\}$. We guess which vertices from $X$ are in the solution and then connect them using vertices outside $X$ in the cheapest possible way (in terms of weight of the edges). The cheapest way to connect vertices in $S\subseteq X$ is given by a Steiner tree for $S$ in the subgraph $G_S:=G\setminus (X\setminus S)$. We prove formally the correctness of this argument in \Cref{lem:fpt_by_supp_correctness}.

\begin{apprestatable}{lemma}{lemfptbysuppcorrectness}\label{lem:fpt_by_supp_correctness}
    Let $(G,s,x,w)$ be an instance of \MSTGcore. Let $X=\supp(x)\cup\{s\}$. The following are equivalent:
    \begin{enumerate}
        \item $(G,s,x,w)$ is yes-instance of \MSTGcore.
        \item $\exists S\subseteq X,S\ni s$ such that $x(T_S)>w(T_S)$ for some $T_S\in \STEINER_{G_S}(S)$, where $G_S=G\setminus (X\setminus S)$.
    \end{enumerate}
\end{apprestatable}

\toappendix{
\sv{\lemfptbysuppcorrectness*}
\begin{proof}
    $\Leftarrow$: Suppose there is a Steiner tree $T_S\in \STEINER_{G_S}(S)$ with $x(T_S)>w(T_S)$. Since $s\in S$, this is a solution for $(G,s,x,w)$.
    
    $\Rightarrow$: Suppose $(G,s,x,w)$ is a yes-instance of \MSTGcore and let $T$ be a solution. Let $S=V(T)\cap X$. Note that since $s\in V(T)$, we necessarily have $s\in S$. We show that there is a Steiner tree $T_S\in \STEINER_{G_S}(S)$ and $x(T_S)>w(T_S)$. Note that $T$ is a tree with $V(T)\supseteq S$ and $T$ is a subgraph of $G_S$, hence $\STEINER_{G_S}(S)\neq \emptyset$ and $w(\STEINER_{G_S}(S))\leq w(T)$. Let $T_S\in \STEINER_{G_S}(S)$ be arbitrary. It remains to prove the inequality $x(T_S)>w(T_S)$. Note that $x(T_S)=x(T)$ because $T$ and $T_S$ agree on the set $X$ containing the support of $x$. In other words $V(T)\cap X=V(T_S)\cap X$ and $\supp(x)\subseteq X$. We have
    \[
    x(T_S)=x(T)>w(T)\geq w(\STEINER_{G_S}(S))=w(T_S)
    \]
    and this finishes the proof.
\end{proof}
}%toappendix
To this end, we make use of the following result which is due to Fuchs et al.~\cite{FuchsKMRRW07_steiner_better}.
\begin{theorem}[\cite{FuchsKMRRW07_steiner_better}]\label{thm:fuchs_steiner}
    For any $\delta'>0$ it is possible, given a weighted graph $(G,w)$ and a set of terminals $S\subseteq V(G)$, to compute $T_S\in \STEINER_{G}(S)$ or report that $\STEINER_{G}(S)$ is empty in $(2+\delta')^{|S|}n^{O(1)}$ time.
\end{theorem}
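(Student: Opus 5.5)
Since \Cref{thm:fuchs_steiner} is cited from the literature, I would only reconstruct its argument as a refinement of the classical Dreyfus--Wagner dynamic program. One preliminary step handles emptiness: $\STEINER_G(S)$ is empty exactly when the terminals of $S$ do not all lie in one connected component of $G$. This can be checked in linear time, so from now on I assume the terminals are connected. I also precompute all-pairs shortest path distances $d(u,v)$ with respect to $w$. Since $w\geq 0$, Dijkstra's algorithm does this in polynomial time.

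The baseline is the Dreyfus--Wagner recurrence. For $D\subseteq S$ and $v\in V(G)$, let $\opt(D,v)$ be the minimum weight of a Steiner tree for $D\cup\{v\}$. In an optimal such tree, $v$ is joined by a shortest path to a vertex $u$ where the tree branches. At $u$ the tree splits into two subtrees carrying terminal sets $D'$ and $D\setminus D'$. This gives
\[
\opt(D,v)=\min_{u\in V(G)}\Bigl(d(v,u)+\min_{\emptyset\neq D'\subsetneq D}\bigl(\opt(D',u)+\opt(D\setminus D',u)\bigr)\Bigr).
\]
Summed over all $D\subseteq S$, the enumeration of pairs $D'\subseteq D$ costs $3^{|S|}n^{O(1)}$. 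A Steiner tree is recovered from the minimizing choices by standard backtracking.

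To get down to $(2+\delta')^{|S|}$, I would follow the idea of Fuchs et al.: keep the table over subsets $D$, but restrict which splits are enumerated. Fix a constant $q=q(\delta')$. Consider an optimal Steiner tree, rooted and viewed as a tree on its terminal and branching vertices. The structural lemma I would aim for is the following: in such a tree one can always find a separation where one side contains at most $|D|/q$ terminals, at the cost of letting the smaller side attach through a constant number (depending on $q$) of boundary vertices rather than through a single vertex $u$. Concretely, I would take a subtree hanging off a node whose terminal count first drops below a threshold, in the style of a centroid argument. The DP state then becomes $\opt(D,X)$ with $X$ a set of at most $c(q)$ extra vertices, which costs an $n^{c(q)}$ factor. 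In exchange, each transition enumerates only subsets $D'\subseteq D$ with $|D'|\le |D|/q$. Summing $\sum_{D}\sum_{|D'|\le |D|/q}\binom{|D|}{|D'|}$ gives at most $2^{|S|}\cdot 2^{H(1/q)|S|}$, where $H$ is the binary entropy function. Choosing $q$ large enough that $2^{H(1/q)}\le 1+\delta'/2$ yields the bound $(2+\delta')^{|S|}n^{O(1)}$.

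I expect the main obstacle to be the structural decomposition lemma. One must show that an optimal tree for $D\cup X$ always splits into pieces with unbalanced terminal counts, each piece again an optimal tree for a terminal set augmented by boundedly many vertices. The boundary sets must stay of constant size so that the number of DP states remains $2^{|S|}n^{O(1)}$. My first attempt would be to walk down from the root of the tree, repeatedly peeling off subtrees with few terminals and collecting their attachment points. I would then argue by induction on $q$ that at most $O(q)$ attachment points ever need to be remembered.
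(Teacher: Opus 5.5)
The paper gives no proof of this statement. It is quoted from Fuchs et al.\ and used only as a black box in \Cref{thm:fpt_by_s}. Your emptiness test and your Dreyfus--Wagner recurrence are correct and give $3^{|S|}n^{O(1)}$. That would already suffice for the FPT claim of \Cref{thm:fpt_by_s}, with base $4$ instead of $3+\delta$. The improvement to $(2+\delta')^{|S|}$, however, rests entirely on the decomposition lemma you leave open, and that lemma is the whole content of the cited result.

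Moreover, the lemma is false in the form you state it: ``an optimal tree for $D\cup X$ always splits into pieces with unbalanced terminal counts, each piece again an optimal tree for a terminal set augmented by boundedly many vertices''. Here is a counterexample.
Let $G$ be a tree consisting of a path $y_1y_2\cdots y_c$ with $X=\{y_1,\dots,y_c\}$, so $|X|=c$ is at the cap. Attach one pendant subtree at $y_1$ and one at $y_c$; their internal vertices are non-terminals and their leaves are the terminals, half of $D$ in each.
\begin{itemize}
\item At every path vertex, both sides carry $|D|/2$ terminals.
\item At a vertex $u$ inside a pendant subtree, a side with at most $|D|/q$ terminals (for $q\ge 3$) must lie inside that pendant subtree. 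The other side then contains all of $X$ together with $u\notin X\cup D$, i.e., $c+1$ boundary vertices.
\item Letting the small side attach at several points does not help. A single large tree carrying at least $(1-1/q)|D|$ terminals meets both pendant subtrees. Hence it contains the whole path and every edge at $X$, so each light piece attaches outside $X$ and again pushes the boundary past $c$.
\end{itemize}
Thus, for every $c$ and all large $|D|$, this state has no admissible transition, and your DP overestimates it.

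So the invariant cannot hold state by state. What is needed is the following:
\begin{itemize}
\item Fix an optimal tree for $S$ and construct a canonical recursive decomposition of it.
\item Prove that every subtree arising in that decomposition has at most $c(q)$ boundary vertices and admits an unbalanced split.
\item Observe that the DP values are upper bounds on all states and exact on the canonical ones.
\item Give an evaluation order for transitions that do not shrink $D$, such as discharging a boundary vertex along a terminal-free path, just as the $d(v,u)$ term requires in Dreyfus--Wagner.
\end{itemize}
``Induction on $q$'' is not the right tool here. The threshold $|D|/q$ shrinks along the recursion, so pieces that were light for an ancestor state can be heavy for a descendant. The bound on remembered attachment points therefore has to be maintained along the recursion itself. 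Supplying an argument of this kind is the actual content of the result of Fuchs et al.; your proposal names it as the expected obstacle but does not provide it.
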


\begin{apprestatable}{theorem}{thmMSTGcoreFPTsupport}\label{thm:fpt_by_s}
    For arbitrary $\delta > 0$, \MSTGcore is solvable in \mbox{$(3+\delta)^pn^{O(1)}$} time where $p=|\supp(x)|$.
\end{apprestatable}

\toappendix{
\sv{\thmMSTGcoreFPTsupport*}
\begin{proof}
    Consider the following algorithm for \MSTGcore. Let $(G,s,x,w)$ be the input instance. Let $X=\supp(x)\cup \{s\}$. Begin by enumerating all subsets $S\subseteq X$ containing $s$. Note that $|X|\leq p+1$, thus there are at most $2^p$ such subsets and they can be enumerated in $O(2^p)$ time. For each of them, use the algorithm from \Cref{thm:fuchs_steiner}  which for given $\delta'=\delta$ computes a Steiner tree $T_S\in \STEINER_{G_S}(S)$ of minimum weight in $(2+\delta)^{|S|}n^{O(1)}$ time (if one exists). If $x(T_S)>w(T_S)$, output \texttt{YES}. If for no $S\subseteq X$ a Steiner tree $T_S$ of minimum weight with $x(T_S)>w(T_S)$ is found, output \texttt{NO}. The correctness of the algorithm follows from \Cref{lem:fpt_by_supp_correctness} and the total running time is 
    \[
    \sum_{S\subseteq X}(2+\delta)^{|S|}n^{O(1)}=\sum_{i=1}^{|X|}\binom{|X|}{i}\cdot (2+\delta)^i \cdot n^{O(1)}\leq(3+\delta)^{|X|}\cdot n^{O(1)}\leq (3+\delta)^p\cdot n^{O(1)}
    \]
    where we used the binomial theorem in the second inequality. In the last inequality, we use $|X|\leq p+1$, and the one $3+\delta$ factor is hidden in the $n^{O(1)}$ term.
\end{proof}
}%toappendix
\appsection{Parameterization by treewidth}{sec:treewidth}
In this section we present an \FPT algorithm for \MSTGcore parameterized by the treewidth of the input graph $G$. We employ dynamic programming over a nice tree decomposition of $G$. Using \emph{nice} tree decomposition allows for a smoother description and analysis of the transitions between the dynamic programming states as the graph modifications are restricted to modify exactly one vertex or edge at a time.
\sv{Due to space constraints, we defer the formal definition of a nice tree decomposition along with related terminology and notation to the appendix, focusing here on a sketch of the dynamic programming table entries.}

\begin{apprestatable}{theorem}{thmTWDP}\label{thm:tw_dp}
    \MSTGcore is solvable in $|V(\mathcal{T})|\cdot k^{O(k)}$ time, assuming we have at hand a nice tree decomposition $(\mathcal{T},\beta)$ of the underlying graph $G$ of width $k$.
\end{apprestatable}
\sv{
\begin{proof}[Proof Sketch]
    The algorithm is a modification of the dynamic programming algorithm for \textsc{Steiner Tree} problem parameterized by treewidth from~\cite{cygan2015parameterized}.
    For a node $t\in V(\mathcal{T})$, a set $X\subseteq \beta(t)$ and partition $\mathcal{P}=\{P_1,\ldots,P_q\}$ of $X$, $\DP[t,X,\mathcal{P}]$ stores the maximum value of $x(F)-w(F)$ over all forests $F$ in the subgraph $G_t$ of $G$ associated to node $t$, such that $V(F)\cap \beta(t)=X$, and the connected components of $F$ induce the partition $\mathcal{P}$ on $X$.
    We ensure $s\in \beta(t)$ for every $t\in V(\mathcal{T})$, increasing the treewidth by at most one. Given $\beta(r)=\{s\}$ for the root $r$ of $\mathcal{T}$, $(G,s,x,w)$ is a yes-instance if and only if $\DP[r,\{s\},\{\{s\}\}]>0$. The full state transition rules and correctness proof are provided in the appendix.
\end{proof}
}%sv

Since it is possible, given a graph $G$, to compute a nice tree decomposition $(\mathcal{T},\beta)$ of width $O(\tw (G))$ in $2^{O(\tw(G))}n^{O(1)}$ time~\cite{bodlaender1993linear}, we obtain \Cref{thm:tw_fpt}.

\begin{theorem}\label{thm:tw_fpt}
    \MSTGcore is \FPT parameterized by the treewidth of the input graph.
\end{theorem}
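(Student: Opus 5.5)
The plan is to obtain \Cref{thm:tw_fpt} by composing two ingredients: an approximate nice tree decomposition, and the dynamic program of \Cref{thm:tw_dp}.

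First, given $G$, I run the algorithm of~\cite{bodlaender1993linear}. It returns a tree decomposition of width $k=O(\tw(G))$ in $2^{O(\tw(G))}n^{O(1)}$ time. I then convert it into a nice tree decomposition by the standard procedure. This procedure inserts introduce vertex, introduce edge, forget and join nodes and empties the root bag. It runs in polynomial time, does not increase the width, and yields $|V(\mathcal{T})|=O(k\cdot n + m)$ nodes, which is polynomial in $n$.

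To satisfy the requirement used in the sketch of \Cref{thm:tw_dp}, namely that $s$ lies in every bag, I add $s$ to all bags. This keeps \ref{item:tw1}--\ref{item:tw3} valid and raises the width by at most one. I will also place introduce edge nodes next to the forget node of an endpoint, as the preliminaries assume. None of these adjustments changes the width beyond $k+1$ or the node count beyond a polynomial.

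Finally, I invoke \Cref{thm:tw_dp} on this decomposition. Its running time is $|V(\mathcal{T})|\cdot (k+1)^{O(k+1)} = 2^{O(\tw(G)\log \tw(G))}\cdot n^{O(1)}$. Adding the time to compute the decomposition, the total is $f(\tw(G))\cdot n^{O(1)}$ for a computable $f$, which is the claimed \FPT bound.

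The only genuinely nontrivial part is the correctness of the dynamic program itself, and that is supplied by \Cref{thm:tw_dp}. In particular, the join-node transitions must merge partitions, adding $x-w$ values while subtracting the doubly counted $x$ of the bag vertices, and must reject merges that create cycles. Since \Cref{thm:tw_dp} is assumed, the remaining point to check is bookkeeping: adding $s$ to every bag must be compatible with the nice form. I would handle this by treating the leaf node's bag as $\{s\}$ (a leaf with $\{s\}$ is equivalent to an empty leaf followed by an introduce-$s$ node) and the root bag as $\{s\}$.
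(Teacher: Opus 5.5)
Your proposal is correct and is essentially the paper's proof: compute a (nice) tree decomposition of width $O(\tw(G))$ in FPT time and run the dynamic program of \Cref{thm:tw_dp}, with $s$ added to every bag as the paper does inside that proof. One minor remark on the citation, which the paper shares: the algorithm of~\cite{bodlaender1993linear} computes an optimal decomposition in $2^{O(\tw(G)^3)}n$ time, not $2^{O(\tw(G))}n^{O(1)}$ (that bound belongs to the later $5$-approximation of Bodlaender et al.), but any FPT bound for computing the decomposition suffices for the statement.
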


\toappendix{
\sv{\thmTWDP*}
\begin{proof}[Proof of \Cref{thm:tw_dp}]
    We design a dynamic programming algorithm on the nice tree decomposition $(\mathcal{T},\beta)$. The idea is inspired by the dynamic programming algorithm for \textsc{Steiner Tree} problem parameterized by treewidth from~\cite{cygan2015parameterized}.
    
    For a node $t\in V(\mathcal{T})$, a set $X\subseteq \beta(t)$ and a partition $\mathcal{P}=\{P_1,\ldots,P_q\}\in \Pi(X)$ the semantics is as follows. $\DP[t,X,\mathcal{P}]$ is the maximum value of $x(F)-w(F)$, where
    \begin{enumerate}[(F1)]
        \item $F$ is a forest in $G_t$,\label{item:f1}
        \item $V(F)\cap \beta(t)=X$,\label{item:f2}
        \item $F$ has $q$ connected components,\label{item:f3}
        \item the intersections of connected components of $F$ in $X$ coincide with the parts of $\mathcal{P}$.\label{item:f4}
    \end{enumerate}
    If no such $F$ exists, $\DP[t,X,\mathcal{P}]$ should be $-\infty$. Before designing the computation of entries of \DP, we add the vertex $s$ to every bag of $(\mathcal{T},\beta)$. This increases the treewidth of $(\mathcal{T},\beta)$ by at most one but it will be helpful, since we require $s$ to be in the solution. Hence, the root node $r$ contains the bag $\beta(r)=\{s\}$ and since $G_r=G$, then if $\DP[r,\{s\},\{\{s\}\}]>0$, then the input instance $(G,s,x,w)$ is a yes-instance, otherwise it is a no-instance.

    We compute the entries of the table $\DP$ in bottom up manner along $\mathcal{T}$ from leaves to root. We describe the computation for each type of node in the nice tree decomposition and we then prove the correctness of these calculations. Whenever we are computing the value of $\DP[t,X,\mathcal{P}]$, we assume that the values for $\DP[t',X',\mathcal{P}']$ are computed for $t'$ being any child of $t$ in $\mathcal{T}$ and any valid $X',\mathcal{P}'$.

    \subsubsection*{Leaf node.} If $t$ is a leaf, then $\beta(t)=\{s\}$, thus $\DP[t,\emptyset,\emptyset]=0$ and $\DP[t,\{s\},\{\{s\}\}]=x(s)$.

    \subsubsection*{Introduce node.} Suppose that $t$ is a introduce node with child $t'$ and $v$ is the introduced vertex, i.e., $\beta(t)=\beta(t')\cup \{v\}$. We compute $\DP[t,X,\mathcal{P}]$ as follows. If $v\notin X$, then $\DP[t,X,\mathcal{P}]=\DP[t',X,\mathcal{P}]$. Suppose $v\in X$. Recall that we do not introduce any edges incident to $v$, thus unless $\{v\}\in \mathcal{P}$ there is no solution, since $v$ is an isolated vertex in $G_t$. Hence $\DP[t,X,\mathcal{P}]=-\infty$ if $\{v\}\notin \mathcal{P}$. Otherwise, $\DP[t,X,\mathcal{P}]= \DP[t',X \setminus\{v\},\mathcal{P} \setminus \{v\}] + x(v)$. Thus, the formula for $\DP[t,X,\mathcal{P}]$ is as follows: 
    \[
    \DP[t,X,\mathcal{P}]=
    \begin{cases}
            \DP[t',X,\mathcal{P}] & \text{ if $v\notin X$}
            \\
            x(v)+ \DP[t',X\setminus \{v\},\mathcal{P} \setminus \{\{v\}\}] & \text{ if $v\in X, \{v\} \in \mathcal{P}$}
            \\ 
            - \infty & \text{if $v \in X, \{v\} \notin \mathcal{P}$}
    \end{cases}
    \]

    \subsubsection*{Introduce edge node.} Suppose that $t$ is an introduce edge node that introduces an edge $e = \{u,v\}$ and let $t'$ be the child of $t$ such that $\beta(t) = \beta(t')$. For every set $X \subseteq \beta(t)$ and a partition $\mathcal{P} = \{P_{1},P_{2},\ldots,P_{q}\}$ of $X$ we consider the following cases. If $\{u,v\} \not\subseteq X$ then we cannot include the edge into the solution, so in that case we consider the subproblem $\DP[t',X,\mathcal{P}]$. The same will happen if $\{u,v\}\subseteq X$, but they would belong to distinct blocks of $\mathcal{P}$ (i.e $u \in \mathcal{P}_{i}$, $v\in \mathcal{P}_{j}$ and $i\neq j$). 

    Suppose now that $\{u,v\}\subseteq X$ and, moreover, $u$ and $v$ are contained in the same block of $\mathcal{P}$, without loss of generality, $\{u,v\}\subseteq P_1$. We consider all partitions of $X$ of the form $\mathcal{P}'=\{P_u,P_v,P_2,P_3,\ldots,P_q\}$ such that $u\in P_u$ and $v\in P_v$ and $P_u\cup P_v=P_1$ and we consider the best solution in the subproblem $\DP[t',X,\mathcal{P}']$ and we add the edge $e$ to the solution. We obtain the following formula for $\DP[t,X,\mathcal{P}]$:
    \[
    \DP[t,X,\mathcal{P}]= \max\{\DP[t',X,\mathcal{P}],\sup_{\mathcal{P'}}\{\DP[t',X,\mathcal{P'}]\}-w(e)\}.
    \]
    The inner $\sup$ goes over all partitions $\mathcal{P}'$ as specified above. Recall that $\sup \emptyset=-\infty$, hence if no such $\mathcal{P}'$ exists (in particular, if $\{u,v\}\not\subseteq X$ or if $u,v$ belong to distinct blocks of $\mathcal{P}$), then only the subproblem $\DP[t',X,\mathcal{P}]$ is considered, i.e., $\DP[t,X,\mathcal{P}]=\max\{\DP[t',X,\mathcal{P}],-\infty\}=\DP[t',X,\mathcal{P}]$.
    
    \subsubsection*{Forget node.} Suppose that $t$ is a forget node with child $t'$ and $v$ is the forgotten vertex, i.e., $\beta(t)=\beta(t')\setminus \{v\}$. We compute $\DP[t,X,\mathcal{P}]$ as follows. Suppose that $\mathcal{P}=\{P_1,\ldots,P_q\}$. For $i\in[q]$ let $\mathcal{P}_i=\{P_1,P_2,\ldots,P_i\cup\{v\},\ldots,P_q\}\in \Pi(X\cup \{v\})$. The formula for $\DP[t,X,\mathcal{P}]$ is
    \[
    \DP[t,X,\mathcal{P}]=\max\left\{ \DP[t',X,\mathcal{P}],\sup_{i\in[q]}\left\{\DP[t',X\cup \{v\},\mathcal{P}_i]\right\}\right\}.
    \]
    note that the only case the inner $\sup$ may go over an emptyset is when $q=0$, i.e.,$X=\emptyset$.

    \subsubsection*{Join node.}
    Suppose that $t$ is a join node with children $t_1,t_2$ with $\beta(t)=\beta(t_1)=\beta(t_2)$. Computing $\DP[t,X,\mathcal{P}]$ boils down to combining two partial solutions from $G_{t_1}$ and $G_{t_2}$. However, we need to ensure that the merge does not create cycles. This is basically the same idea as in the treewidth dynamic programming for \textsc{Steiner Tree} in~\cite{cygan2015parameterized}. Towards this, we define for a partition $\mathcal{P}$ of $X$, an auxiliary graph $G_\mathcal{P}$ which is a forest with vertex set $X$ whose connected components coincide with $\mathcal{P}$. We say that two partitions $\mathcal{P}_1,\mathcal{P}_2$ of $X$ form an \emph{acyclic merge} $\mathcal{P}$, if the multigraph with vertex set $X$ and the edges corresponding to both $G_{\mathcal{P}_1}$ and $G_{\mathcal{P}_2}$ is acyclic. Note that we must subtract the values $x(v)$ for vertices $v\in X$, since we added these twice. Recall that there are no edges in $G_t[\beta(t)]$. We obtain the following formula for $\DP[t,X,\mathcal{P}]$: 

    \[
    \DP[t,X,\mathcal{P}]=\max_{\mathcal{P}_{1},\mathcal{P}_{2}}\left\{ \DP[t_{1},X,\mathcal{P}_{1}]+ \DP[t_{2},X,\mathcal{P}_{2}]\right\} - \sum_{v \in X}x(v).
    \]
    This concludes the description of the computation, we now prove its correctness.
\begin{lemma}{\label{lem:treewidth_dp1}}
For every node $t \in V(\mathcal{T})$, $X \subseteq \beta(t)$ and a partition $\mathcal{P} \in \Pi(X)$, if $\operatorname{DP}[t,X,\mathcal{P}] \neq -\infty$, then there exists a forest $F$ in $G_{t}$ that satisfies $\ref{item:f1}$-$\ref{item:f4}$ and it holds that $x(F) -w(F) = \operatorname{DP}[t,X,\mathcal{P}]$.
\end{lemma}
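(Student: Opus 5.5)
We prove the statement by induction on the height of $t$ in $\mathcal{T}$, processing nodes bottom-up in the same order in which the table is filled. The invariant is exactly the claim: whenever $\DP[t,X,\mathcal{P}]\neq-\infty$, some forest $F$ in $G_t$ satisfies \ref{item:f1}--\ref{item:f4} and attains $x(F)-w(F)=\DP[t,X,\mathcal{P}]$. In each case we take a maximizing argument in the recurrence; it is finite, so the induction hypothesis applies to the child entries. From the child forests we build $F$ and check the four conditions and the value. Throughout, we read $x(F)$ as $x(V(F))$, so $F$ may contain isolated vertices.

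\textbf{Leaf, introduce vertex, forget.} For a leaf, take $F$ empty or $F=(\{s\},\emptyset)$.

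For an introduce-vertex node with $v\in X$ and $\{v\}\in\mathcal{P}$, take the child forest $F'$ and add $v$ as an isolated vertex. This is legal because $v\notin V(G_{t'})$ and no edge at $v$ has been introduced yet. The new component is $\{v\}$, and the value increases by exactly $x(v)$. If $v\notin X$, we reuse $F'$ unchanged, since $G_{t'}\subseteq G_t$.

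For a forget node, the witness of the maximizing child entry works as is. If it comes from $\DP[t',X\cup\{v\},\mathcal{P}_i]$, then $v$ lies in the component whose trace on the bag is $P_i\cup\{v\}$. After removing $v$ from the bag, that trace becomes $P_i$. The only subtlety is $P_i=\emptyset$ inside $\mathcal{P}_i$; we read $\mathcal{P}$ as a partition into nonempty blocks, so this does not occur. Also, $G_t=G_{t'}$ here.

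\textbf{Introduce edge.} If the maximum is attained by $\DP[t',X,\mathcal{P}]$, we reuse the child forest. Otherwise, take the witness $F'$ for $\DP[t',X,\mathcal{P}']$, where $u\in P_u$, $v\in P_v$ and $P_u\cup P_v=P_1$. Then $u$ and $v$ lie in different components of $F'$, so $F=F'+e$ is still a forest. Adding $e$ merges exactly those two components, whose traces now unite to $P_1$, and the value drops by $w(e)$. The number of components falls from $q+1$ to $q$, which gives \ref{item:f3}.

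\textbf{Join.} Take witnesses $F_1\subseteq G_{t_1}$ and $F_2\subseteq G_{t_2}$ for the maximizing pair $\mathcal{P}_1,\mathcal{P}_2$, and let $F=F_1\cup F_2$. By \ref{item:tw3} and the way edges are introduced, $V(G_{t_1})\cap V(G_{t_2})=\beta(t)$, and no edge is introduced in both subtrees. Hence $V(F_1)\cap V(F_2)=X$ and $E(F_1)\cap E(F_2)=\emptyset$. So $x(F)=x(F_1)+x(F_2)-x(X)$ and $w(F)=w(F_1)+w(F_2)$, which gives the claimed value.

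The main obstacle is showing that $F$ is acyclic and that its components trace out $\mathcal{P}$. The plan is to contract each component of $F_i$ to the tree $G_{\mathcal{P}_i}$ on its trace; every component of $F_i$ touching $X$ has a nonempty trace, so this contraction is well defined. A cycle in $F$ must alternate between pieces of $F_1$ and $F_2$ meeting at vertices of $X$. Its projection is then a closed walk in the multigraph $G_{\mathcal{P}_1}\cup G_{\mathcal{P}_2}$, which is a forest by the acyclic-merge condition. We must still check that this closed walk is non-degenerate, i.e.\ actually contains a cycle; this is the step needing care, and we would argue it by taking a cycle of $F$ that is minimal in the number of vertices of $X$ it passes through. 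Components of $F$ not touching $X$ come from a single $F_i$ and cause no trouble. For \ref{item:f4}, two vertices of $X$ are in the same component of $F$ exactly when they are connected in $G_{\mathcal{P}_1}\cup G_{\mathcal{P}_2}$, and the acyclic merge is defined so that those components form $\mathcal{P}$. Acyclicity then yields the component count in \ref{item:f3}.
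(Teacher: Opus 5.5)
Your proposal is correct and follows the paper's proof: induction over the nice tree decomposition with the same witness forests at each node type. These are the empty forest or $(\{s\},\emptyset)$ at leaves, $F'+v$ and $F'+e$ at introduce nodes, the unchanged child witness at forget nodes, and $F_1\cup F_2$ at join nodes justified by the acyclic-merge condition; you are in fact more careful than the paper about nonempty blocks at forget nodes, the component count, and acyclicity at join nodes. The non-degeneracy step you flag needs no minimality argument, because the transition vertices of a cycle of $F$ are pairwise distinct and consecutive projected paths lie in different trees $G_{\mathcal{P}_1}$ and $G_{\mathcal{P}_2}$, so the projected closed walk never immediately reuses an edge, and no such non-backtracking closed walk of positive length exists in a forest.
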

\begin{proof}
We proceed by induction on the height of the subtree $\mathcal{T}_t$. As the base case, we have the leaf node.
\subsubsection*{Leaf node} If $t$ is a leaf node, we either have $X=\emptyset$ or $X=\{s\}$. In the first case $\DP[t,\emptyset,\emptyset]=0$ and clearly the empty forest satisfies the desired conditions \ref{item:f1}-\ref{item:f4}. If $X=\{s\}$, we have $\DP[t,\{s\},\{\{s\}\}]=x(s)$. The forest $F=(\{s\},\emptyset)$ satisfies the desired properties \ref{item:f1}-\ref{item:f4} as well, note that $x(F)-w(F)=x(s)-0=0$.
\subsubsection*{Introduce node}
If $t$ is an introduce node, and $t'$ is a child of $t$ then, if $v \notin X$ then $\DP[t,X,\mathcal{P}] = \DP[t',X,\mathcal{P}]$, then from the induction hypothesis there exists a forest $F$ satisfying $\ref{item:f1}$-$\ref{item:f4}$ and the $\DP[t,X,\mathcal{P}] = \DP[t',X,\mathcal{P}] = x(F) - w(F)$. For the second branch, we have $v \in X$ and $\{v\} \in \mathcal{P}$. Let $F'$ be the forest for $t',X\setminus \{v\},\mathcal{P}\setminus \{\{v\}\}$, which exists by the induction hypothesis. We let $F=F'+v$ and $x(F)-w(F)=x(F')-w(F') +x(v)=\DP[t',X\setminus \{v\},\mathcal{P}\setminus \{\{v\}\}]+x(v)$, where the last equality holds by the induction hypothesis that $\DP[t',X \setminus \{v\},\mathcal{P}\setminus \{v\}] = x(F') - w(F')$. To see that $F$ satisfies conditions $\ref{item:f1}$-$\ref{item:f4}$, first $F'$ was a forest in $G_{t'}$ and after the introduce of $v$ the $F = F' \cup \{v\}$ will remain a forest in $G_{t}$ and the number of connected components will be increased by 1, because the forest $F'$ was consisted of exactly $q-1$ connected components by the induction hypothesis and after we introduced $v$ in $F'$ the new forest $F$ will have $q$ connected components because the newly introduced vertex will be a singleton connected component, so the desired conditions $\ref{item:f1}$-$\ref{item:f4}$ are satisfied. We note also that the third branch of the $\DP[t,X,\mathcal{P}]$ will be never visited because we have assumed that the $\DP[t,X,\mathcal{P}] \neq -\infty$.
\subsubsection*{Introduce Edge node}
If $t$ is an introduce edge node that introduces $e = \{u, v\}$. Let $t'$ be the child of $t$. We either do not include one of the endpoints of the edge $e = \{u,v\}$ in $X$, that means $u \notin X$ or $v \notin X$ and then we have that $\DP[t,X,\mathcal{P}] = \DP[t',X,\mathcal{P}]$ and then from the induction hypothesis there exists a forest $F$ satisfying \ref{item:f1}-\ref{item:f4} and $\DP[t,X,\mathcal{P}] = \DP[t',X,\mathcal{P}] = x(F) - w(F)$. On the other hand if $\{u,v\} \subseteq X$ and both endpoints are in the same block of $\mathcal{P} = \{\mathcal{P}_{1},\mathcal{P}_{2},\ldots,\mathcal{P}_{q}\}$, without loss of generality, $\{u,v\} \subseteq P_{1}$, then we have $\DP[t,X,\mathcal{P}]=\DP[t',X,\mathcal{P}']-w(e)$ for some $\mathcal{P}'=\{P_u,P_v,P_2,P_3,\ldots,P_q\}$. Let $F'$ be the forest for $t',X,\mathcal{P}'$. We let $F = F'+e$ and now we get that $x(F) = x(F')$, because $F'$ already contains both $u$ and $v$. Also $w(F) = w(F') + w(e)$. We have $x(F)-w(F)=x(F')-w(F')-w(e)=\DP[t',X,\mathcal{P}']-w(e)=\DP[t,X,\mathcal{P}]$.
\subsubsection*{Forget node} If $t$ is a forget node with child $t'$, then $G_t=G_{t'}$, hence any valid forest for $t'$ is also a valid forest for $t$.
\subsubsection*{Join node}
If $t$ is a join node with children $t_1,t_2$, note that $\DP[t,X,\mathcal{P}]=\DP[t_1,X,\mathcal{P}_1]+\DP[t_2,X,\mathcal{P}_2] - \sum_{v \in X}x(v)$ for some $\mathcal{P}_1$ and $\mathcal{P}_2$, such that $\mathcal{P}$ is an acyclic merge of $\mathcal{P}_1$ and $\mathcal{P}_2$. Let $F_1$ and $F_2$ be the forests for the entries $\DP[t_1,X,\mathcal{P}_1]$ and $\DP[t_2,X,\mathcal{P}_2]$, respectively. Let $F=F_1\cup F_2$, more precisely $V(F)=V(F_1)\cup V(F_2)$ and $E(F)=E(F_1) \cup E(F_2)$. Note that $V(F_1)\cap V(F_2)=X$, hence $x(F)=x(F_1)+x(F_2)-x(X)$. Since $G_t[\beta(t)]$ is edgeless and there are no other vertices common to both $F_1$ and $F_2$ other than those in $X$, we have $E(F_1)\cap E(F_2)=\emptyset$. Thus $w(F)=w(F_1)+w(F_2)$. Then we have the following,

\begin{align*}
    x(F)-w(F)&=x(F_1)+x(F_2)-x(X)-w(F_1)-w(F_2)=\\&=\left(x(F_1)-w(F_1)\right)+\left(x(F_2)-w(F_2)\right)-x(X)=\\&=\DP[t_1,X,\mathcal{P}_1]+\DP[t_2,X,\mathcal{P}_2] - \sum_{v\in X}x(v)=\\&=\DP[t,X,\mathcal{P}]    
\end{align*}

It remains to argue that $F$ is indeed a forest and satisfies the desired properties. Consider contracting every vertex in $F_1$ except those in $X$. This creates a forest with connected components coinciding with those in $G_{\mathcal{P}_1}$. Similarly for $F_2$. Note that any path in $F$ corresponds to some path in $G_\mathcal{P}$, but since $\mathcal{P}$ is acyclic merge of $\mathcal{P}_1$ and $\mathcal{P}_2$, $F$ is indeed a forest and the connected components coincide with $\mathcal{P}$.

\end{proof}

\begin{lemma}{\label{lem: treewidth_dp2}}
For every node $t \in V(\mathcal{T})$, $X \subseteq\beta(t)$ and a partition $\mathcal{P} \in \Pi(X)$, if there exists a forest $F$ in $G_{t}$ satisfying $\ref{item:f1}$-$\ref{item:f4}$ then $\operatorname{DP}[t,X,\mathcal{P}] \geq x(F) - w(F)$.
\end{lemma}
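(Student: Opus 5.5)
We prove the statement by induction on the height of the subtree $\mathcal{T}_t$, mirroring the proof of \Cref{lem:treewidth_dp1} but in the reverse direction. Fix $t$, $X$, $\mathcal{P}$ and a forest $F$ in $G_t$ satisfying \ref{item:f1}--\ref{item:f4}. For each node type we restrict or decompose $F$ into a forest $F'$ (or two forests $F_1,F_2$) for the child node(s). We then check that this forest satisfies \ref{item:f1}--\ref{item:f4} for a specific entry $(t',X',\mathcal{P}')$ that appears in the recurrence for $\DP[t,X,\mathcal{P}]$. Finally we check that $x(F)-w(F)$ equals the corresponding expression in terms of $x(F')-w(F')$. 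By the induction hypothesis the child entry is at least $x(F')-w(F')$, and since the recurrence takes a maximum, the inequality propagates.

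\textbf{Leaf node.} We have $G_t=(\{s\},\emptyset)$, so $F$ is either empty or the single vertex $s$, and the values $0$ and $x(s)$ match.

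\textbf{Introduce vertex node.} Here $v$ is isolated in $G_t$. If $v\notin X$, then $F$ lies in $G_{t'}$. If $v\in X$, then $\{v\}$ is necessarily a component of $F$, so $\{v\}\in\mathcal{P}$. We take $F'=F-v$, which is valid for $(t',X\setminus\{v\},\mathcal{P}\setminus\{\{v\}\})$ and satisfies $x(F)-w(F)=x(F')-w(F')+x(v)$.

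\textbf{Introduce edge node.} If $e\notin E(F)$, then $F$ is itself valid for $(t',X,\mathcal{P})$. If $e\in E(F)$, we take $F'=F-e$. Removing an edge from a tree splits the component containing $u,v$ into two, and this induces a partition $\mathcal{P}'$ in which $P_1$ is replaced by $P_u\ni u$ and $P_v\ni v$; this is exactly one of the partitions ranged over in the formula. Moreover $x(F)-w(F)=x(F')-w(F')-w(e)$.

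\textbf{Forget node.} We have $G_t=G_{t'}$. If $v\notin V(F)$, then $F$ is valid for $(t',X,\mathcal{P})$. Otherwise, the component of $F$ containing $v$ must meet $X$. This uses that $s$ lies in every bag; more precisely, we need that a component fully forgotten cannot exist, or else accept it as harmless. This forces $v$ to be added to some block $P_i$, giving validity for $(t',X\cup\{v\},\mathcal{P}_i)$.

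We expect the forget case to be the main subtle point. A component of $F$ that avoids $\beta(t)$ entirely is not representable by any block, so we must decide what happens to it. The plan is to note that \ref{item:f3}--\ref{item:f4} require the components to coincide with the blocks, so every component meets $X$. Hence the component of $v$ contains another vertex of $X$; this holds when $q\geq1$, and if $X=\emptyset$ then $F$ is empty.

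\textbf{Join node.} The main obstacle is the join node. Set $F_i=F|_{G_{t_i}}$ for $i=1,2$. Since every edge is introduced exactly once and $G_t[\beta(t)]$ is edgeless, we have $E(F_1)\sqcup E(F_2)=E(F)$ and $V(F_1)\cap V(F_2)=X$. Each $F_i$ is a forest, and we let $\mathcal{P}_i$ be the partition of $X$ induced by its components.

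The delicate part is the requirement \ref{item:f4} for $F_i$: a component of $F_i$ might avoid $X$. This cannot happen, because $F$ is connected only through bag vertices when passing between $G_{t_1}$ and $G_{t_2}$. So a component of $F_i$ missing $X$ would be a whole component of $F$ missing $X$, which is excluded as argued above.

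Acyclicity of the merge of $\mathcal{P}_1$ and $\mathcal{P}_2$ follows because a cycle in $G_{\mathcal{P}_1}\cup G_{\mathcal{P}_2}$ lifts to a cycle in $F$, by replacing each auxiliary edge with a path inside the corresponding $F_i$. The same lifting argument shows that the merge has components exactly $\mathcal{P}$. Finally, $x(F)-w(F)=(x(F_1)-w(F_1))+(x(F_2)-w(F_2))-x(X)$, and the induction hypothesis applied to both children completes the proof.
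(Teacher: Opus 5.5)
Your proposal is correct and follows the same route as the paper: induction on the height of $\mathcal{T}_t$ with a case analysis over the node types, running the construction of \Cref{lem:treewidth_dp1} in reverse (keep $F$, delete the introduced vertex or edge, or split $F$ into $F|_{G_{t_1}}$ and $F|_{G_{t_2}}$ at a join node). You are more explicit than the paper in the forget and join cases: you use \ref{item:f3}--\ref{item:f4} to show that every component of $F$, and of each $F_i$, meets $X$, and you check that $\mathcal{P}$ is an acyclic merge of $\mathcal{P}_1,\mathcal{P}_2$. The one step to tighten is that replacing auxiliary edges by paths gives only a closed walk in $F$, not directly a cycle; a short parity argument over $\mathbb{Z}_2$ (each $G_{\mathcal{P}_i}$ is a forest and $E(F_1)\cap E(F_2)=\emptyset$) or an edge count still produces a cycle in $F$.
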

\begin{proof}
We proceed by induction on the height of the subtree $\mathcal{T}_t$. As the base case, we have the leaf node.
\subsubsection*{Leaf node} If $t$ is a leaf node and $(F,X,\mathcal{P})$ are given, then there are two possibilities. Either $F$ is an empty forest, i.e., $F=(\emptyset,\emptyset)$, then $x(F)-w(F)=0$ and $\DP[t,\emptyset,\emptyset]=0$, so the claim holds. Or $F=(\{s\},\emptyset)$ and then we have $\DP[t,\{s\},\{\{s\}\}]=x(s)=x(F)-w(F)$, so the claim holds.

\subsubsection*{Introduce node}
Let $t$ be an introduce node and let $v$ be the introduced vertex. Suppose $(F,X,\mathcal{P})$ are given. If $v\notin X$, then clearly $F$ also satisfies the assumptions for $t',X,\mathcal{P}$, hence $\DP[t,X,\mathcal{P}]=\DP[t',X,\mathcal{P}]\geq x(F)-w(F)$ where the inequality follows from the induction hypothesis. Otherwise we have $v\in X$.In that case, since $v$ is an isolated vertex in $G_t$, it has to be a singleton connected component in $F$, in other words, $\{v\}\in \mathcal{P}$. Hence the forest $F':=F-v$ has components coinciding with $\mathcal{P}\setminus \{\{v\}\}$ and $V(F)\cap \beta(t')=X\setminus v$. By induction hypothesis, we know that $\DP[t',X\setminus \{v\},\mathcal{P}\setminus \{\{v\}\}]\geq x(F')-w(F')$. Note that $x(F')=x(F)-x(v)$ and $w(F')=w(F)$. Hence $x(F')-w(F')=x(F)-w(F)-x(v)$. Thus
\begin{align*}
    \DP[t,X,\mathcal{P}]&=x(v)+\DP[t',X\setminus \{v\},\mathcal{P}\setminus \{\{v\}\}]\geq\\&\geq x(v)+x(F')-w(F')=\\&= x(v)+x(F)-w(F)-x(v)=\\&=x(F)-w(F)
\end{align*}
\subsubsection*{Introduce Edge node} If $t$ is an introduce edge node and let also $e = \{u,v\}$ be the introduced edge. We suppose that $(F,X,\mathcal{P})$ are given. If the edge $e \notin F$, then we cannot include this edge in the solution and we consider the problem in the child $t'$ where the forest $F$ in that case satisfies the assumptions for $(t',X,\mathcal{P})$, so $\DP[t,X,\mathcal{P}] = \DP[t',X,\mathcal{P}] \geq x(F) - w(F)$, where the last inequality follows from the induction hypothesis. On the other hand if $e \in F$, then we consider both endpoints to be part of some block of $\mathcal{P}$. Hence we consider the forest $F' = F-e$  and the partition $\mathcal{P'} = \{C_{u},C_{v}, P_{2},\ldots,P_{q}\}$, where $C_{u},C_{v}$ are the connected components that were created after the removal of the edge $e$ and it holds that $u \in V(C_{u}),v \in V(C_{v})$. Now if we consider, 
\begin{align*}
\sup_{i\in[q]}\{\DP[t',X\cup \{v\},\mathcal{P}_i]\} \geq \DP[t',X\cup \{v\},\mathcal{P}_i] \geq x(F') - w(F') = x(F)-w(F) +w(e).
\end{align*}
The second inequality comes from the application of the induction hypothesis and the rest from the definition of the forest $F'$.
\subsubsection*{Forget Node} If $t$ is a forget node with child $t'$ and suppose also that $v$ is the forgotten vertex and that $(F,X,P)$ are given. If $v \notin X$, then clearly $F$ also satisfies the assumptions for $t',X,\mathcal{P}$, hence $\DP[t,X,\mathcal{P}]=\DP[t',X,\mathcal{P}]\geq x(F)-w(F)$ where the inequality follows from the induction hypothesis. Otherwise, $v \in X$, then we consider the partition $\mathcal{P} = \{P_{1},P_{2},\ldots,P_{q}\}$ and that $v \in V(F)$, note that the vertex $v$ can belong to the connected component $\beta(t')$ and we will have the following: 
\begin{align*}
\sup_{i\in[q]}\{\DP[t',X\cup \{v\},\mathcal{P}_i]\} \geq \DP[t',X\cup \{v\},\mathcal{P}_i] \geq x(F) -w(F)
\end{align*}
where the last inequality comes from the induction hypothesis.
\subsubsection*{Join Node}
If $t$ is a join node with children $t_{1}$ and $t_{2}$ and suppose that $(F_{1},F_{2},X,\mathcal{P}_{1},\mathcal{P}_{2})$ are given, then if we consider the acyclic merge of $F_{1}$ and $F_{2}$, we know that from the induction hypothesis that $F_{1}$ is a forest for the entry $\DP[t_{1},X,\mathcal{P}_{1}]$ and $F_{2}$ is a forest for the entry $\DP[t_{2},X,\mathcal{P}_{2}]$ then if we consider $F':=F_{1} \cup F_{2}$ then $x(F) = x(F_{1} \cup F_{2}) = x(F_{1}) + x(F_{2}) - x(X)$ and $w(F) = w(F_{1} \cup F_{2}) = w(F_{1}) + w(F_{2}) - w(F_{1} \cap F_{2})$ then if we consider the difference we have the following: 
\begin{align*} 
x(F) - w(F) &= (x(F_{1}) - w(F_{1})) + (x(F_{2}) - w(F_{2})) - x(X) \leq \\& 
\DP[t_{1},X,\mathcal{P}_{1}]+ \DP[t_{2},X,\mathcal{P}_{2}] - x(X) \leq \\&
\max\{\DP[t_{1},X,\mathcal{P}_{1}]+ \DP[t_{2},X,\mathcal{P}_{2}]\} - x(X).
\end{align*}
here we denote with $x(X) = \sum_{v \in X} v(x)$ 

\end{proof}

The correctness of the algorithm follows from \Cref{lem:treewidth_dp1,lem: treewidth_dp2}. It remains to argue about the time complexity. Note that the size of the table is upper bounded by $|V(\mathcal{T})|\cdot 2^{k+1}\cdot B_{k+1}$, where $B_k$ is the $k$-th Bell number, representing the number of partitions of a $k$-element set. It is not hard to convince one that $|B_k|\leq k^k$, hence the size of the table is upper bounded by $|V(\mathcal{T})|\cdot (k+1)^{2k+2}$. Note that the value of each entry can be computed in $O(B_{k+1}^2)\leq (k+1)^{2(k+1)}$ time (the main bottleneck is the join node, where we are iterating over all pairs of partitions $\mathcal{P}_1,\mathcal{P}_2$). Hence, the total time needed to fill the table is upper bounded by $|V(\mathcal{T})|\cdot k^{O(k)}$, as promised and this finishes the proof of \Cref{thm:tw_dp}.

\end{proof}
}%toappendix

\appsection{Signed Neighborhood Diversity}{sec:snd}
In this section we parameterize \MSTGcore by the (signed) neighborhood diversity of the underlying weighted graph $(G,w)$. The problem is easily seen to be \NP-hard on cliques, as we can simply add any missing edges with a large enough weight to ensure they are unusable in any feasible solution. Because cliques have standard neighborhood diversity of $1$, it follows that the problem is \NP-hard on graphs with neighborhood diversity $1$. This motivates our focus on signed neighborhood diversity ($\operatorname{snd}$).

The main result of this section is a simple single-exponential \FPT algorithm parameterized by $\operatorname{snd}$~(\Cref{thm:snd_running_time,thm:snd_fpt}). As a natural extension, we also show that \MSTGcore admits a polynomial kernel parameterized by $\operatorname{snd}$~(\Cref{thm:snd_kernel_bound,thm:snd_kernel_ft}), although this does not yield a better running time of the algorithm.

We assume that along with the input instance $(G,s,x,w)$, we are given a $w$-uniform partition $V(G)=V_1\cup V_2\cup \cdots \cup V_d$. Moreover, we assume that the partition is \emph{nice}, meaning that $V_i=\{s\}$ for some $i\in[d]$. Note that any $w$-uniform partition of $V(G)$ can be transformed into a nice one by isolating the supply vertex~$s$ into its own singleton part. Because this operation increases the size of the partition by at most $1$, for any choice of $s$, any weighted graph $(G,w)$ admits a nice $w$-uniform partition of size at most $\operatorname{snd}(G)+1$.

Let $V_i^{\max}=\{v_i\in V_i\mid \forall v \in V_i:x(v_i)\geq x(v)\}$ be the set of vertices $v_i\in V_i$ maximizing $x(v_i)$. Let $T$ be a solution for $(G,s,x,w)$. The part $V_i$ is \emph{active} with respect to $T$ if $V(T)\cap V_i\neq \emptyset$.
We call $T$ \emph{pretty} if it maximizes $x(T)-w(T)$, among those minimizes $|V(T)|$ and, moreover, contains at most one non-leaf vertex of $T$ from each active part $V_i$.

\appsubsection{\FPT algorithm for Signed Neighborhood Diversity}{subsec:snd_fpt}

We first show in \Cref{lem:nd_making_allbutone_leaf} that a pretty solution always exists (assuming that some solution exists). We then show in \Cref{lem:nd_first_lemma} that a pretty solution contains some vertex $v_i^*\in V_i^{\max}$ from every active part $V_i$. Moreover, any vertex may be chosen as the candidate non-leaf vertex of an active part. Finally, we characterize in~\Cref{lem:nd_leaves_onedir,lem:nd_leaves_twodir} which vertices $v\in V_i\setminus \{v_i^*\}$ should be included in a pretty solution based on the weight of its incident edges connecting $v$ to some active part of the solution.

\begin{apprestatable}{lemma}{lemndmakingallbutoneleaf}
\label{lem:nd_making_allbutone_leaf}
    Let $(G,s,x,w)$ be a yes-instance of \MSTGcore, then $(G,s,x,w)$ admits a pretty solution. Moreover, the candidate non-leaf vertex of every active part may be chosen arbitrarily.
\end{apprestatable}
\toappendix{
\sv{\lemndmakingallbutoneleaf*}
\begin{proof}
Let $T$ be a solution of $(G,s,x,w)$ maximizing $x(T)-w(T)$ and among those minimizing $|V(T)|$. Let $V_i$ be an active part and let $r_i\in V(T)\cap V_i$ be chosen arbitrarily. We make every other vertex $v\in (V(T)\cap V_i)\setminus \{r_i\}$ a leaf. 
We modify $T$ by a local replacement around $V_i$. Imagine that the edges of $T$ are oriented away from $s$. Hence, every vertex except $s$ has exactly one incoming edge in this sense.
If $v\in (V(T)\cap V_i)\setminus \{r_i\}$ is not a leaf of $T$, we replace all but one edge of the form $\{v,v'\}$ by $\{r_i,v'\}$, where $v'\in V_j$ (possibly $i=j$), thus $v$ becomes a leaf. The one edge that we do not replace is the edge that is pointing to $v$ in the orientation mentioned above. This ensures that the resulting graph remains a tree. Denote the newly created tree by $T'$. By the definition of signed neighborhood diversity $w(\{v,v'\})=w(\{r_i,v'\})$ because both $v$ and $r_i$ belong to $V_i$. Thus $w(T)=w(T')$. Note that the vertex set didn't change, hence $T'$ is a pretty solution for $(G,s,x,w)$.
\end{proof}
}%toappendix

\begin{apprestatable}{lemma}{lemndfirstlemma}\label{lem:nd_first_lemma}
    Let $T$ be a pretty solution of $(G,s,x,w)$ with $V(T)\cap V_i\neq \emptyset$. Then $V(T)\cap V_i^{\max}\neq \emptyset$.
\end{apprestatable}
\toappendix{
\sv{\lemndfirstlemma*}
\begin{proof}
    Suppose for the sake of contradiction that $V(T)\cap V_i^{\max}=\emptyset$. Let $v\in V(T)\cap V_i$. Create a new solution $T'$ by removing $v$ and adding $v_i\in V_i^{\max}$ and replacing all the edges of the form $\{v,v'\}$ by the edges $\{v_i,v'\}$ for every $v'\in N_T(v)$. Note that $w(T')=w(T)$ by the definition of signed neighborhood diversity, because both $v$ and $v_i$ are from $V_i$, hence $w(\{v,v'\})=w(\{v_i,v'\})$ for any $v'\in V(G)$. Note that $x(T')=x(T)-x(v)+x(v_i)$, but by the choice of $v_i$ and $v$, we have $x(v_i)>x(v)$ (because $v\notin V_i^{\max}$, thus $x(v_i)-x(v)>0$, thus $x(T')-w(T')>x(T)-w(T)$, which contradicts the assumption that $T$ is pretty.
\end{proof}
}%toappendix

\begin{apprestatable}{lemma}{lemndleavesonedir}\label{lem:nd_leaves_onedir}
    Let $T$ be a pretty solution of $(G,s,x,w)$. Suppose that $u\in V(T)$ and let $e=\{u,v\}$ for some $v\in V(G)$. If $x(v)>w(e)$, then $v\in V(T)$.
\end{apprestatable}
\toappendix{
\sv{\lemndleavesonedir*}
\begin{proof}
    This is immediate as otherwise $T$ is not pretty. If $v\notin V(T)$, then we could attach $v$ as a leaf via $e$ and increase $x(T)-w(T)$.
\end{proof}
}%toappendix

\begin{apprestatable}{lemma}{lemndleavestwodir}\label{lem:nd_leaves_twodir}
    Let $T$ be a pretty solution and let $V_i$ be an active part of $T$. Suppose that $r_i\in V(T)\cap V_i^{\max}$ and let $v\in V_i\setminus \{r_i\}$. If an edge $\widehat{e}\ni v$ minimizes the weight $w(\widehat{e})$ among all edges $e\ni v$ connecting $v$ to some active part of $T$, then if $x(v)\leq w(\widehat{e})$, then $v\notin V(T)$. 
\end{apprestatable}
\toappendix{
\sv{\lemndleavestwodir*}
\begin{proof}
    Observe first that the assumption $r_i\in V(T)\cap V_i^{\max}$ is valid due to \Cref{lem:nd_first_lemma}. 
    Let $v$ be connected to $v_j\in V_j$, where $V_j$ is an active part of $T$ (possibly $V_i=V_j$).
    Suppose for the sake of contradiction that $v\in V(T)$. Since $r_i$ is also in~$T$, by \Cref{lem:nd_making_allbutone_leaf}, we may assume without loss of generality that everyone in $(V(T)\cap V_i)\setminus \{r_i\}$ is a leaf, including~$v$. Since $v$ is a leaf it is connected to $T$ via an edge $e'\in E(T)$. By the definition of $\widehat{e}$, we get $w(e')\geq w(\widehat{e})\geq x(v)$. Let $T'=T\setminus \{v\}$. Since $v$ was a leaf of $T$, it follows that $T'$ is a tree. If $x(v)<w(e')$, we get $x(T')-w(T')>x(T)-w(T)$. If $x(v)=w(e')$, then $x(T')-w(T')=x(T)-w(T)$, but $|V(T')|<|V(T)|$. This is contradicting the choice of $T$ as a pretty solution in both cases.
\end{proof}
}%toappendix

With \Cref{lem:nd_first_lemma,lem:nd_making_allbutone_leaf,lem:nd_leaves_onedir,lem:nd_leaves_twodir} at hand, we may present the main idea of the algorithm (see \Cref{alg:nd} for a compact pseudocode). We guess the active parts $\mathcal{X}\subseteq \{V_1,\ldots,V_d\}$ of the solution. Note that $\{s\}$ must always be an active part. We then arbitrarily choose one of the vertices $v_i^*\in V_i^{\max}$ for every $V_i\in \mathcal{X}$. Let $M_{\mathcal{X}}=\{v_i^*\mid V_i\in \mathcal{X}\}$. These vertices will be the candidate non-leaf vertices of the guessed active parts. We then compute the minimum spanning tree using a standard algorithm for the subgraph $G[M_{\mathcal{X}}]$. The remaining vertices from other active parts are either attached greedily as leaves or not touched by the solution at all according to \Cref{lem:nd_leaves_onedir,lem:nd_leaves_twodir}.

\algrenewcommand\algorithmicrequire{\textbf{Input:}}
\algrenewcommand\algorithmicensure{\textbf{Output:}}

\begin{algorithm}[bt]
\caption{\FPT Algorithm parameterized by the signed neighborhood diversity}
\label{alg:nd}
\begin{algorithmic}[1]
\Require Instance $(G,s,x,w)$ of \MSTGcore, nice $w$-uniform partition $V_1,\ldots,V_d$ of $V(G)$
\Ensure \texttt{YES} if $(G,s,x,w)$ is a yes-instance, \texttt{NO} otherwise.
\State Guess a subset $\mathcal{X} \subseteq \{V_1, V_2, \ldots, V_d\}$ such that $\{s\} \in \mathcal{X}$ 

\For{each $V_i \in \mathcal{X}$}
    \State Let $v_i^{*} \in V_i$ such that $\forall v \in V_i: x(v_i^{*}) \ge x(v)$\Comment{If there are multiple such $v_i^{*}$'s, pick one arbitrarily.}
\EndFor
\State $M_\mathcal{X} \gets \{v_i^{*} \mid V_i \in \mathcal{X}\}$
\If{$G[M_\mathcal{X}]$ is not connected}
    \State \Return \texttt{NO}
\EndIf
\State Compute a minimum spanning tree $T^*$ of $G[M_\mathcal{X}]$ 

\For{each $V_i \in \mathcal{X}$}
    \For{each $v\in V_i \setminus M_{\mathcal{X}}$}
        \State Let $\widehat{e}$ be a minimum-weight edge connecting $v$ to some $v_j \in V_j$ with $V_j \in \mathcal{X}$
        \If{$x(v) > w(\widehat{e})$} 
            \State $T^* \gets T^* + \widehat{e}$ \Comment{Add $v_{i}$ as a leaf to $T^*$ via $\widehat{e}$.}
        \EndIf
    \EndFor
\EndFor

\State Let $T^*$ be the resulting tree
\If{$x(T^*) > w(T^*)$}
    \State \Return \texttt{YES}
\EndIf
\State \Return \texttt{NO}
\end{algorithmic}
\end{algorithm}

It is clear that if \Cref{alg:nd} returns \texttt{YES}, then the input instance $(G,s,x,w)$ is a yes-instance. Note that $s\in V(T^*)$ for every tree $T^*$ considered by the algorihtm, because $\{s\}\in \mathcal{X}$ (hence $s\in M_{\mathcal{X}}$) for every choice of $\mathcal{X}$. It remains to argue that if $(G,s,x,w)$ is a yes-instance, then \Cref{alg:nd} outputs \texttt{YES} for some guess $\mathcal{X}\subseteq \{V_1,V_2,\ldots, V_d\}$. We prove this in \Cref{lem:alg_snd_correctness}.

\begin{apprestatable}{lemma}{lemalgsndcorrectness}\label{lem:alg_snd_correctness}
    If $(G,s,x,w)$ is a yes-instance, then \Cref{alg:nd} outputs \texttt{YES} for some $\mathcal{X}\subseteq \{V_1,V_2,\ldots, V_d\}$.
\end{apprestatable}

\toappendix{
\sv{\lemalgsndcorrectness*}
\begin{proof}
    Suppose that $(G,s,x,w)$ is a yes-instance and let $T$ be a pretty solution for $(G,s,x,w)$. Let $\mathcal{X}=\{V_i\mid V(T)\cap V_i\neq \emptyset\}$ be the set of active parts of $T$. We claim that for this choice of $\mathcal{X}$, \Cref{alg:nd} outputs a solution $T^*$ with $x(T^*)=x(T)$ and $w(T^*)=w(T)$. Since $T$ is pretty, by \Cref{lem:nd_first_lemma}, for every $V_i\in \mathcal{X}$, $V(T)\cap V_i^{\max}\neq \emptyset$. Note that we can also assume that $v_i^*$ picked by the algorithm is in $V(T)\cap V_i^{\max}$ as otherwise we can make a local replacement as in the proof of \Cref{lem:nd_first_lemma} without changing $w(T)$ or $x(T)$. We now apply \Cref{lem:nd_making_allbutone_leaf} for $r_i=v_i^*$. Hence, we assume that all but $v_i^*$ in $V(T)\cap V_i$ are leaves. Note that this still doesn't change $x(T)$ or $w(T)$. Recall that $M_\mathcal{X} = \{v_i^{*} \mid V_i \in \mathcal{X}\}$. Consider the `spine' $T[M_\mathcal{X}]$. In other words, $T[M_\mathcal{X}]$ is created from $T$ by removing all the leaves that are outside $M_{\mathcal{X}}$, keeping only $v_i^*$ for every $V_i\in \mathcal{X}$.
    
    \begin{claim}
        $T[M_{\mathcal{X}}]$ is a minimum spanning tree of $G[M_{\mathcal{X}}]$.
    \end{claim}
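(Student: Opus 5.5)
Since $T$ is pretty and, after the normalisations above, every vertex of $V(T)\setminus M_{\mathcal{X}}$ is a leaf of $T$, the subgraph $T[M_{\mathcal{X}}]$ is $T$ with leaves removed. Hence it is a tree. It spans $M_{\mathcal{X}}$ because each $v_i^*$ lies in $V(T)$. We first record this, and note that $T[M_{\mathcal{X}}]$ is a spanning tree of $G[M_{\mathcal{X}}]$, so $w(T[M_{\mathcal{X}}])\geq w(\MST_G(M_{\mathcal{X}}))$. It remains to prove the reverse inequality.

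We argue by contradiction and exchange. Suppose some spanning tree $T'$ of $G[M_{\mathcal{X}}]$ has $w(T')<w(T[M_{\mathcal{X}}])$. Build $\widetilde{T}$ from $T'$ by re-attaching every leaf $\ell\in V(T)\setminus M_{\mathcal{X}}$ through the same edge it used in $T$. That edge goes from $\ell$ to some vertex of $T$. If that vertex lies in $M_{\mathcal{X}}$, it is in $T'$ and the edge is kept unchanged.

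The one delicate case is a leaf attached to another leaf. Two leaves adjacent to each other form a component consisting of a single edge, so this happens only if $T$ is a single edge. In that degenerate case $M_{\mathcal{X}}$ is a single vertex, $\{s\}$, and the claim is trivial. In the general case, then, every leaf of $T$ outside $M_{\mathcal{X}}$ hangs off a vertex of $M_{\mathcal{X}}$, so $\widetilde{T}$ is a tree containing $s$.

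By construction $V(\widetilde{T})=V(T)$, so $x(\widetilde{T})=x(T)$. The leaf edges are unchanged, so $w(\widetilde{T})=w(T)-w(T[M_{\mathcal{X}}])+w(T')<w(T)$. Therefore $x(\widetilde{T})-w(\widetilde{T})>x(T)-w(T)$, which contradicts the maximality of $x(T)-w(T)$ in the definition of a pretty solution.

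The main point to verify carefully is that the earlier normalisations preserve prettiness. These are the swap making $v_i^*\in V(T)$ and the application of \Cref{lem:nd_making_allbutone_leaf} with $r_i=v_i^*$. Both keep $x(T)$, $w(T)$ and $|V(T)|$ unchanged, so optimality still holds and the leaf structure used above is valid.
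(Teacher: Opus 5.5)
Your proposal is correct and follows the paper's own argument: replace the spine $T[M_{\mathcal{X}}]$ by a cheaper spanning tree $T'$, re-attach the leaves of $V(T)\setminus M_{\mathcal{X}}$, and contradict the maximality of $x(T)-w(T)$. You also make explicit two points the paper leaves implicit: that $T[M_{\mathcal{X}}]$ really is a spanning tree of $G[M_{\mathcal{X}}]$, and that the leaves must be re-attached through their original edges so that the weight bookkeeping works.

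One small inaccuracy: the ``leaf attached to a leaf'' case is not a degenerate case with $M_{\mathcal{X}}=\{s\}$. It cannot occur at all, because both endpoints of such a one-edge tree would lie outside $M_{\mathcal{X}}\ni s$, contradicting $s\in V(T)$.
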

    \begin{proof}
        Suppose for the sake of contradiction that there is a spanning tree $T'$ with $w(T')<w(T[M_{\mathcal{X}}])$. Let $\widehat{T}$ be a tree that is created from $T'$ by attaching all vertices in $V(T)\setminus M_{\mathcal{X}}$ as leaves to $T'$. This is a contradiction with the assumption that $T$ is pretty because $x(T)=x(\widehat{T})$ while $w(\widehat{T})<w(T)$, hence $x(\widehat{T})-w(\widehat{T})>x(T)-w(T)$.
    \end{proof}
     Note that the presence of vertices $v_i\in V_i\setminus M_{\mathcal{X}}$ in $V(T)$ is determined by \Cref{lem:nd_leaves_onedir,lem:nd_leaves_twodir}. For every $v\in V_i\setminus \{v_i^*\}$, $v\in V(T)$ if and only if the minimum weight edge $\widehat{e}$ connecting $v$ to some $v_j\in V_j$ such that $V(T)\cap V_{i} \neq \emptyset$ satisfies $w(\widehat{e})<x(v)$. Since \Cref{alg:nd} computes a minimum spanning tree of $G[M_{\mathcal{X}}]$ on line $7$ and the for-loops on lines $8$-$12$ in \Cref{alg:nd} pick exactly the vertices and edges to $T^*$ according to the above characterization, we have $x(T^*)=x(T)$ and $w(T^*)=w(T)$ and this finishes the proof of \Cref{lem:alg_snd_correctness}.
\end{proof}
}%toappendix

\begin{apprestatable}{theorem}{thmsndrunningtime}\label{thm:snd_running_time}
\MSTGcore is solvable in $O(2^{d}\log (d)(n+m))$ time, assuming we have at hand a nice $w$-uniform partition of the underlying weighted graph $(G,w)$ of size $d$.
\end{apprestatable}
\toappendix{
\sv{\thmsndrunningtime*}
\begin{proof}
We run \Cref{alg:nd}. Its correctness follows from \Cref{lem:alg_snd_correctness}. It remains to analyse the running time.

The algorithm starts in line $1$ by enumerating all possible subsets $\mathcal{X} \subseteq \{V_1,V_2,\ldots,V_{d}\}$ such that $\{s\}\in \mathcal{X}$. Note that there are at most $2^{d-1}$ such subsets and they can be enumerated in $O(2^d)$ time. In lines $2$-$3$, the algorithm searches for such vertices that are maximized among all the others in each $V_i$ and create $M_{\mathcal{X}}$. This takes $O(n)$ time in total. We proceed to the computation of the minimum spanning tree $T^*$ on line $7$.

\begin{claim}
    The computation of the minimum spanning tree of $G[M_{\mathcal{X}}]$ can be done in $O(m\log d)$ time.
\end{claim}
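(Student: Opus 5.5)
The claim is that a minimum spanning tree of $G[M_{\mathcal{X}}]$ can be computed in $O(m\log d)$ time. The key observation is that $|M_{\mathcal{X}}|\leq |\mathcal{X}|\leq d$, since the algorithm picks exactly one representative $v_i^*$ from each active part. Consequently $G[M_{\mathcal{X}}]$ has at most $d$ vertices and at most $\min\{m,\binom{d}{2}\}$ edges. I would therefore not work on $G$ itself but first build $G[M_{\mathcal{X}}]$ and then run a standard MST algorithm on this small graph.

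The steps, in order, are as follows.

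First, mark the vertices of $M_{\mathcal{X}}$ in an array indexed by $V(G)$; this costs $O(n)$.

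Second, scan the edge list of $G$ once and keep exactly those edges with both endpoints marked. This yields the edge set $E'$ of $G[M_{\mathcal{X}}]$ in $O(m)$ time, with $|E'|\leq m$.

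Third, run Kruskal's algorithm on $E'$ with a union--find structure on at most $d$ elements. Sorting costs $O(|E'|\log|E'|)$. Since $|E'|\leq d^2$, we have $\log|E'|\leq 2\log d$, so this is $O(|E'|\log d)\subseteq O(m\log d)$. The union--find operations add only $O(|E'|\,\alpha(d))$.

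The connectivity test on line~5 of \Cref{alg:nd} falls out of the same run: the graph is connected if and only if Kruskal's algorithm ends with a single component. Alternatively, a BFS on $G[M_{\mathcal{X}}]$ decides this in $O(d+|E'|)$.

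Summing, the cost is $O(n+m+m\log d)$. If the $O(n)$ marking term must be avoided inside the per-$\mathcal{X}$ loop, there are two options. One is to reuse a global part-index array: each vertex knows the index of its part, and we test whether it equals the chosen representative $v_i^*$, which is computed once up front. The other is to note that $n$ is absorbed in the overall $O(2^d\log d\,(n+m))$ bound of \Cref{thm:snd_running_time} anyway.

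The main subtlety is justifying the $\log d$ factor rather than $\log m$. This rests entirely on the bound $|E'|\leq\binom{d}{2}$, which holds because $G$ is simple, so $G[M_{\mathcal{X}}]$ is a simple graph on at most $d$ vertices. Everything else is routine.
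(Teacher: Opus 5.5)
Your proof is correct, but it takes a different route from the paper. The paper runs Jarník's (Prim's) algorithm with a binary heap directly on the adjacency lists of $G$ and never materializes $G[M_{\mathcal{X}}]$. The heap holds only vertices of $M_{\mathcal{X}}$, so it has size at most $d$ and each operation costs $O(\log d)$. Each vertex $v$ added to the tree triggers $O(\deg v)$ heap operations, which gives $O(m\log d)$. You instead extract the edge set $E'$ of $G[M_{\mathcal{X}}]$ by a linear scan and run Kruskal on it. Your $\log d$ factor comes from $\log|E'|\le 2\log d$, i.e., from the simplicity of $G$, rather than from the size of a heap.

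Each choice buys something. The paper's argument does not rely on simplicity: the heap bound would survive parallel edges. It also folds the connectivity test of \Cref{alg:nd} into the same run, and it directly produces the parent pointers that the paper later uses to attach leaves in constant time. Your argument separates the linear-time scan from the logarithmic work. It therefore yields the sharper bound $O(n+m+\min\{m,d^2\}\log d)$, since only the edges inside $G[M_{\mathcal{X}}]$ are sorted. The price is the additive marking and scanning term. You handle it correctly, either with a global part-index array or by absorbing it into the $O(2^d\log(d)(n+m))$ bound of \Cref{thm:snd_running_time}; for connected $G$ one also has $n\le m+1$.
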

\begin{proof}  
    We use a standard textbook implementation of Jarník's~\cite{jarnik30} (also known as Prim's~\cite{prim57}) algorithm for the minimum spanning tree with binary heap. We start with a spanning tree containing arbitrary vertex~$v_0$. Vertices not yet added to the tree are kept in a binary heap. They key of a vertex $v$ is the shortest edge from the so-far constructed tree to $v$. Note that the size of the heap is at most $|M_{\mathcal{X}}|\leq d$ hence every heap operation takes $O(\log d)$ time. Every new vertex $v$ added to the spanning tree under construction performs $O(\deg v)$ operations with the heap. Thus, in total, we get the running time of $O(m\log d)$, as claimed.
\end{proof} 

Note that we can incorporate the check whether $G[M_{\mathcal{X}}]$ is not connected inside Jarník's algorithm to possibly terminate in the if condition on lines $5$-$6$. Finally, the lines $8$-$12$ take $O(n+m)$ time. The two for loops on lines $8$-$9$ take $O(n)$ time in total and the minimum on line $10$ can be found by simply iterating over the adjacency list of $v$ considering only the edges connected to some $v_j\in V_j$ with $V_j\in\mathcal{X}$. Note that we can even store the tree $T^*$ using e.g. parent pointers for every vertex (this is essentially a consequence of using Jarník's algorithm) and everything else is attached as a leaf to existing vertices of $T^*$, so even attaching leaves on line $12$ takes constant time. Hence, the total running time for every guess of $\mathcal{X}$ is $O((m+n)\log d)$ hence the total running time is $O(2^{d}\log (d)(n+m))$, as promised in the theorem.
\end{proof}

}%toappendix
Since a $w$-uniform partition of size $\operatorname{snd}$ can be computed in polynomial time (and hence a nice one of size at most $\operatorname{snd}+1$), we obtain the following theorem.
\begin{theorem}\label{thm:snd_fpt}
    \MSTGcore is \FPT parameterized by the signed neighborhood diversity of the underlying weighted graph.
\end{theorem}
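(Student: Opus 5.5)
The theorem follows by combining \Cref{thm:snd_running_time} with the fact that a suitable partition can be computed efficiently. Given an instance $(G,s,x,w)$, I first compute a $w$-uniform partition of $V(G)$ of size $\operatorname{snd}(G)$. This takes polynomial time: as noted in the preliminaries, the parts are the equivalence classes of the relation $u\sim v$, which holds when $N_G(u)\setminus\{v\}=N_G(v)\setminus\{u\}$ and all incident edge weights agree in the required sense. Testing this relation for all pairs $u,v$ takes time polynomial in $n+m$.

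Next, I make the partition nice. I remove $s$ from the part containing it and add $\{s\}$ as a new singleton part. This preserves $w$-uniformity for two reasons:
\begin{itemize}
\item Any subset of a $w$-uniform part still consists of vertices with identical neighborhoods (modulo each other) and identical weights towards each other part.
\item A singleton part trivially satisfies the condition.
\end{itemize}
The resulting partition has size $d\le \operatorname{snd}(G)+1$.

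Finally, I run \Cref{alg:nd} with this partition. Its correctness is given by \Cref{lem:alg_snd_correctness} together with the observation preceding it that any returned \texttt{YES} is witnessed by a tree containing $s$. By \Cref{thm:snd_running_time}, the running time is $O(2^{d}\log(d)(n+m))=O(2^{\operatorname{snd}}\log(\operatorname{snd})(n+m))$, where the extra $+1$ in $d$ only contributes a constant factor. Adding the polynomial preprocessing, the total time is $f(\operatorname{snd})\cdot n^{O(1)}$, which is \FPT.

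There is no real obstacle here. The only point that needs checking is that splitting off $s$ keeps the partition $w$-uniform, and this is immediate because $w$-uniformity is inherited by refinements of a partition.
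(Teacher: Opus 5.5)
Your proposal is correct and matches the paper's argument. Like the paper, you compute a $w$-uniform partition of size $\operatorname{snd}(G)$ in polynomial time, isolate $s$ into its own part to obtain a nice partition of size at most $\operatorname{snd}(G)+1$, and invoke \Cref{thm:snd_running_time}. Your remark that refining a $w$-uniform partition preserves $w$-uniformity is a detail the paper leaves implicit, and it is right.
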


\appsubsection{Polynomial kernel for Signed Neighborhood Diversity}{subsec:snd_kernel}

To obtain the polynomial kernel, we introduce a reduction rule that merges vertices of the same `type' within each part of a nice $w$-uniform partition $\{V_1,V_2,\ldots, V_d\}$. Let $T$ be a pretty solution and let $V_i$ be an active part of $T$.\footnote{All variables introduced from this point should have additional superscript $i$ to signify their dependence on the part $V_i$. Nonetheless, for the sake of readability, we omit it.} Recall that there is at most one vertex $v\in V_i\cap V(T)$ that is not a leaf of $T$. If there is one, we may assume that this vertex is $v_i^*\in V_i^{\max}$, and any other $v\in (V_i\cap V(T))\setminus \{v_i^*\}$ is a leaf of $T$ connected to some active part of $T$. Specifically, the behavior of $v\in V_i\setminus \{v_i^*\}$ depends only on the set of unique weights in the set $W=\{w(e)\mid e\ni v\}=\{w_1,w_2,\ldots,w_q\}$ (where $w_1<w_2<\cdots<w_q$, $q\leq d$). Note that since $V_i$ is a part of a $w$-uniform partition, $W$ does not depend on the choice of $v$. The set $W$ partitions the nonnegative real line into $q+1$ intervals: $I_0=[0,w_1]$, $I_j=(w_j,w_{j+1}]$ for $j\in[q-1]$, and $I_q=(w_q,+\infty)$. Let $w_{\min}^T=\displaystyle\min_{\text{$V_j$ is active part of $T$}}w(\{v,v_j\})$, where $v_j\in V_j$ is chosen arbitrarily. Note that $w_{\min}^T$ does not depend on the choice of $v\in V_i$ nor the choice of $v_j\in V_j$. Observe that $w_{\min}^T\in W$. By \Cref{lem:nd_leaves_onedir,lem:nd_leaves_twodir} $v\in V_i\setminus \{v_i^*\}$ is included in a pretty solution $T$ if and only if $x(v)>w_{\min}^T$. If $x(v)\in I_j$, this condition is equivalent to $w_{\min}^T\in \{w_1,w_2,\ldots,w_j\}$. Crucially, this condition is identical for all vertices whose value falls to the same interval $I_j$. Thus, regardless of active parts of a pretty solution, either all vertices in $V_{i,j}=\{v\in V_i\setminus \{v_i^*\}\mid x(v)\in I_j \}$ are included as leaves, or none of them are touched by the solution. This motivates the definition of \Cref{rrule:snd_rule}.

%Recall that in a pretty solution $T$, there is at most one $v\in V_i\cap V(T)$ which is a non-leaf. We may assume that this vertex is $v_i^*\in V_i^{\max}$ by~\Cref{lem:nd_first_lemma}, any other vertex $v\in V_i\cap V(T)\setminus \{v_i^*\}$ is a leaf of $T$ connected to some active part of $T$. Specifically, the behavior of $v\in V_i\setminus \{v_i^*\}$ depends only on the set of unique weights in the set $W_i=\{w(e)\mid e\ni v\}$. Note that since $V_i$ is a part of a $w$-uniform partition, $W_i$ does not depend on the choice of $v$. Let $W_i=\{w_1,\ldots,w_p\}$ where $w_1<w_2<\cdots<w_p$. Note that $p\leq d$. These weights partition the nonnegative real line into $p+1$ intervals: $I_0=[0,w_1],I_j=(w_j,w_{j+1})$ for $j\in[p-1]$, and $I_p=(w_p,+\infty)$. Let $w_{\min}^T=\displaystyle\min_{\text{$V_j$ is active part of $T$}}w(\{v_i,v_j\})$ where $v_i\in V_i$ and $v_j\in V_j$ is chosen arbitrarily. Note that $w(\{v_i,v_j\})$ is the same for any choice of $v_i,v_j$, and $w_{\min}^T\in W_i$. By \Cref{lem:nd_leaves_onedir,lem:nd_leaves_twodir} $v\in V_i\setminus \{v_i^*\}$ is included in the pretty solution $T$ if and only if $x(v)>w_{\min}^T$. If $x(v)\in I_j$, this condition is equivalent to $w_{\min}\leq w_j$. Crucially, this condition is identical for all vertices whose value falls to the same interval $I_j$. Thus, regardless of active parts of a pretty solution, either all vertices in $V_{i,j}=\{v\in V_i\setminus \{v_i^*\}\mid x(v)\in I_j \}$ are included as leaves, or none of them are touched by the solution. This motivates the definition of \Cref{rrule:snd_rule}.

\begin{rrule}\label{rrule:snd_rule}
    Let $(G,s,x,w)$ be an instance of \MSTGcore, and let $V_1,\ldots, V_d$ be a nice $w$-uniform partition of $V(G)$. If for some $i,j$ we have $|V_{i,j}|>1$ (as defined above), merge all vertices in $V_{i,j}$ into a new vertex $v_{\Sigma}$, creating a new graph $G'$. The new allocation is $x'(v_\Sigma)=\sum_{v\in V_{i,j}}x(v)$. For each $u\in N_G(v)$, we set $w'(\{v_{\Sigma},u\})=\sum_{u\in N_G(v)}w(\{v,u\})=|V_{i,j}|\cdot w(\{v,u\})$, where $v\in V_i$ is chosen arbitrarily. All other vertices and edges have their values and weights unchanged. Output the instance $(G',s,x',w')$.
\end{rrule}

\begin{apprestatable}{lemma}{lemrrsndkernel}\label{lem:rr_snd_kernel}
    \Cref{rrule:snd_rule} is correct.
\end{apprestatable}
\toappendix{
\sv{\lemrrsndkernel*}
\begin{proof}
    Let $(G,s,x,w)$ be the original instance of \MSTGcore, and let $(G',s,x',w')$ be the instance obtained by applying \Cref{rrule:snd_rule}. We show that $(G,s,x,w)$ is a yes-instance if and only if $(G',s,x',w')$ is a yes-instance.

    Observe first that since the $w$-uniform partition is nice, the vertex $s$ is untouched by the reduction rule. If $V_i=\{s\}$, then $v_i^*=s$ and thus $V_{i,j}=\emptyset$ for every $j$.

    $\Rightarrow$: Let $T$ be a pretty solution of $(G,s,x,w)$. By \Cref{lem:nd_making_allbutone_leaf}, we can assume that all vertices in $V_{i,j}$ are leaves of $T$ (or not in $T$). Since all vertices in $V_{i,j}$ fall into the same interval $I_j$, by \Cref{lem:nd_leaves_onedir,lem:nd_leaves_twodir} they either all belong to $V(T)$ or none do. If they do not belong to $V(T)$, then $T'=T$ is a valid solution for $(G',s,x',w')$. If they all belong to $V(T)$, they are connected as leaves of $T$ to some active parts. Since they all have the same neighbors and weights, they are all connected to some vertex $u$ via the edge $\widehat{e}$ achieving the minimum weight $w_{\min}^T$. We construct a solution $T'$ for $(G',s,x',w')$ by replacing the leaves in $V_{i,j}$ with the single leaf $v_{\Sigma}$ connected to $u$. The weight of the edge $\{v_\Sigma,u\}$ in $T'$ is $|V_{i,j}|\cdot w_{\min}$, which is exactly the sum of the weights of the edges incident to $V_{i,j}$ in $T$. Hence, $w'(T')=w(T)$. Similarly, $x'(T')=x(T)+x'(v_\Sigma)-\sum_{v\in V_{i,j}}x(v)$, hence $x'(T')=x(T)$. It follows that $(G',s,x',w')$ is a yes-instance.

    $\Leftarrow$: Suppose that $(G',s,x',w')$ is a yes-instance, witnessed by any solution $T'$ of $(G',s,x',w')$. If $v_{\Sigma}\notin V(T')$, then $T'$ is also a valid solution for $(G,s,x,w)$. If $v_{\Sigma}\in V(T')$, let $\widehat{T}$ result from $T'$ by replacing $v_{\Sigma}$ by the vertices of $V_{i,j}$ and the original edges of $G$. Note that $x(\widehat{T})=x'(T')$ and $w(\widehat{T})=w'(T')$. If $\widehat{T}$ is not acyclic (i.e., if $\deg_{T'}(v_{\Sigma})>1$), then let $T$ be a spanning tree of $\widehat{T}$. Since the weights are nonnegative, then $w(T)\leq w(\widehat{T})$, hence $x(T)-w(T)\geq x(\widehat{T})-w(\widehat{T})=  x'(T')-w'(T')>0$. Thus, $(G,s,x,w)$ is a yes-instance.

\end{proof}
}%toappendix

\begin{apprestatable}{theorem}{thmsndkernelbound}\label{thm:snd_kernel_bound}
    \MSTGcore admits a kernel with $O(\operatorname{snd}(G)^2)$ vertices and $O(\operatorname{snd}(G)^4)$ edges.
\end{apprestatable}
\toappendix{
\sv{\thmsndkernelbound*}
\begin{proof}
    Let $(G,s,x,w)$ be the input instance and let $V_1,V_2,\ldots, V_{d}$ be a nice $w$-uniform partition of $V(G)$ of size at most $\operatorname{snd}(G)+1$. Recall that such partition is computable in polynomial time. We apply \Cref{rrule:snd_rule} exhaustively. Let $(G',s,x',w')$ be the resulting instance. Note that each application of the rule might increase the signed neighborhood diversity and also change the partition. More precisely, if $v'$ is a vertex created by merging some vertices $v_1,v_2,\ldots,v_{\ell}$ in $V_i$, then $(V_i\setminus \{v_1,v_2,\ldots,v_{\ell}\})\cup \{v'\}$ might not be a part of a $w$-uniform partition. However, each original part of the partition creates at most $d+1$ new vertices and hence $G'$ has at most $d(d+1)\leq (\operatorname{snd}(G)+1)(\operatorname{snd}(G)+2)=O(\operatorname{snd}(G)^2)$ vertices. Since $\operatorname{snd}(G')\leq |V(G')|$, it follows that $\operatorname{snd}(G')\leq O(\operatorname{snd}(G)^2)$. Hence $(G',s,x',w')$ is the desired kernel. Note that the number of edges is at most quadratic in any graph, thus $|E(G')|\leq |V(G')|^2\leq O(\operatorname{snd}(G)^4)$.
\end{proof}
}%toappendix

\begin{theorem}\label{thm:snd_kernel_ft}
    \MSTGcore admits a polynomial kernel parameterized by the signed neighborhood diversity of the underlying weighted graph.
\end{theorem}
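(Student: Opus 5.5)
The plan is to combine \Cref{thm:snd_kernel_bound} with the weight-compression procedure of \Cref{lem:alg_reducing_weights}, which rests on the Frank--Tardos theorem (\Cref{thm:FT}). The starting point is the input instance $(G,s,x,w)$. We compute in polynomial time a nice $w$-uniform partition of size at most $\operatorname{snd}(G)+1$. We then apply \Cref{rrule:snd_rule} exhaustively. Each application strictly decreases the number of vertices, so there are at most $n$ applications, each polynomial. Its correctness is \Cref{lem:rr_snd_kernel}. The result is an equivalent instance $(G',s,x',w')$ with $d_1=O(\operatorname{snd}(G)^2)$ vertices and $d_2=O(\operatorname{snd}(G)^4)$ edges.

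The only remaining issue is the encoding length of the numbers. The merged values $x'(v_\Sigma)$ and weights $|V_{i,j}|\cdot w(\cdot)$ can have bit-size that depends on $n$ and on the original encoding. So the graph is small, but the instance is not yet polynomially bounded in the parameter. We therefore run the algorithm of \Cref{lem:alg_reducing_weights} on $(G',s,x',w')$. It invokes \Cref{thm:FT} with $M=1$ on the vector of dimension $d=d_1+d_2=O(\operatorname{snd}(G)^4)$ consisting of all vertex values and edge weights. It outputs integral $x''$, $w''$ such that for every subtree $T$ the sign of $x(T)-w(T)$ is preserved, and hence the instances are equivalent.

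Two small points need checking. First, the output values must remain nonnegative, since the problem requires $x,w\ge 0$. Taking the coefficient vectors to be unit vectors $\pm e_i$, which lie in $[-1,1]^d$, shows that the sign of each coordinate is preserved. Thus nonnegativity (and zero-ness) carries over. Second, we need $T$ to be a tree containing $s$, which is unaffected because the graph itself is unchanged.

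The size bound is then $O(d^4)=O(\operatorname{snd}(G)^{16})$ bits for the numbers plus $O(d_1^2)$ bits for the adjacency structure. This is polynomial in the parameter. The parameter of the output is at most $|V(G'')|=O(\operatorname{snd}(G)^2)$.

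I expect the main obstacle to be merely bookkeeping, namely confirming that the total running time is polynomial in the original input size. \Cref{thm:FT} runs in time polynomial in the encoding of the (possibly large) rationals produced by the reduction rule. These have bit-size bounded by the original bit-size plus $O(\log n)$, since they are sums of at most $n$ input numbers. So everything stays polynomial, and the theorem follows.
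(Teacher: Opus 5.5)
Your proposal is correct and follows the paper's proof: apply the merging rule exhaustively to get an equivalent instance with $O(\operatorname{snd}(G)^2)$ vertices and $O(\operatorname{snd}(G)^4)$ edges, then compress all values and weights with the Frank--Tardos-based weight-reduction lemma to total bit-size $O(\operatorname{snd}(G)^{16})$. Your extra checks are sound and are details the paper leaves implicit: coordinate-wise sign preservation via the coefficient vectors $\pm e_i$ (so the output stays nonnegative), and polynomial running time.
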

\begin{proof}
    Let $(G,s,x,w)$ be the input instance. Let $(G',s,x',w')$ be the kernel from \Cref{thm:snd_kernel_bound}. By applying the algorithm from \Cref{lem:alg_reducing_weights}, we obtain an equivalent instance $(G',s,x'',w'')$ with total bit-size $O(\operatorname{snd}(G)^{16})$.
\end{proof}

\appsection{Feedback Edge Number}{sec:fen}
In this section we parameterize \MSTGcore by the feedback edge number (\fen) of the underlying graph $G$. The main result of this section is \Cref{thm:fen_kernel}.
\begin{restatable}{theorem}{thmfenkernel}\label{thm:fen_kernel}
    \MSTGcore admits a polynomial kernel parameterized by the feedback edge number of the input graph.
\end{restatable}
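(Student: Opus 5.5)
The approach is to design two reduction rules, as announced in the introduction, that shrink the graph to $O(k)$ vertices and edges, where $k=\fen(G)$. After that, \Cref{lem:alg_reducing_weights} compresses the numbers to bit-size $O(k^4)$.

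We first discard all vertices outside the connected component of $s$, since no solution can touch them. The first rule (\emph{leaf pruning}) handles a leaf $\ell\neq s$ with unique neighbour $u$ and $e=\{\ell,u\}$. If $x(\ell)\le w(e)$, any solution containing $\ell$ remains a solution (with no smaller $x(T)-w(T)$) after deleting $\ell$, so we delete $\ell$. If $x(\ell)>w(e)$, every maximal solution containing $u$ also contains $\ell$, so we delete $\ell$ and set $x(u)\gets x(u)+x(\ell)-w(e)$. Correctness follows the pattern of the $(b+c,b,c)$-expansion argument in \Cref{lem:operations_correctness}. If the leaf is $s$ itself, we keep it; this costs only one extra vertex.

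Applied exhaustively, the first rule leaves a graph with minimum degree at least $2$, apart from possibly $s$. The standard counting argument for such graphs then applies: there are $O(k)$ vertices of degree at least $3$, and the degree-$2$ vertices form $O(k)$ maximal paths. Before counting, we mark $s$ and the endpoints of the feedback edges as special, so that $s$ sits at a path endpoint and the bound is preserved.

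The second rule compresses each long maximal path $P=u,p_1,\dots,p_t,v$ of degree-$2$ vertices, with $u,v$ of degree at least $3$ or special, into a constant-length gadget. The key structural claim is that a maximal solution $T$ with minimum $|V(T)|$ meets the interior of $P$ in one of only a few patterns:
\begin{enumerate}[(i)]
\item the whole path;
\item the whole path minus one edge (both ends of $P$ are attached and $T$ needs no cycle);
\item a prefix hanging from $u$ together with a disjoint suffix hanging from $v$;
\item nothing.
\end{enumerate}
In pattern (iii) the prefix is optimal exactly when it maximizes the prefix gain $\prefix_i=x(p_1..p_i)-w(u..p_i)$, and similarly for the suffix gain $\suffix$. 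Therefore only the following quantities matter:
\begin{itemize}
\item the total gain $\total$ of taking the full path;
\item the best prefix gain $\max_i\prefix_i$ when only $u$ is in $T$;
\item the best suffix gain when only $v$ is in $T$;
\item the best value $\combined$ of a disjoint prefix and suffix, equivalently the full path minus its best removable edge.
\end{itemize}
We replace $P$ by a path of constant length, say two or three internal vertices, with values and weights chosen, possibly with halves (explaining the denominator $2$ mentioned in the preliminaries), so that these quantities are reproduced exactly.

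The main obstacle is this second rule. Two things are delicate. First, we must verify that the four quantities are mutually realizable by a short gadget with nonnegative weights and values. Since $\combined\ge\max(\prefix,\suffix)$-type relations and $\combined\ge\total$ hold only in some cases, we may need to first normalize the path: prune internal ends whose prefix gain is dominated, and merge zero-gain segments. Second, we must prove the structural claim. I would try an exchange argument: if $T$ uses an interior segment not touching $u$ or $v$, that segment is disconnected from $T$, which is impossible. If $T$ uses both a prefix and a suffix, replacing them by the optimal pair does not decrease $x(T)-w(T)$. After both rules are exhaustive, the graph has $O(k)$ vertices and edges, and \Cref{lem:alg_reducing_weights} yields the polynomial kernel.
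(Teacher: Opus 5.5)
Your architecture is the paper's: your leaf rule is \Cref{rrule:leaves}, your four quantities are exactly the $\total,\prefix,\suffix,\combined$ of \Cref{rrule:deg2}, the degree counting gives $O(\fen(G))$ vertices and edges, and \Cref{lem:alg_reducing_weights} finishes. The gap is the step you yourself flag as the main obstacle. You never construct the gadget, and the criterion you state for it is not sufficient.

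The paper replaces the interior $p_1,\dots,p_t$ by two vertices on a path $u,v_1',v_2',v$, with $e_1'=\{u,v_1'\}$, $e_2'=\{v_1',v_2'\}$, $e_3'=\{v_2',v\}$. It sets $w'(e_1')=\suffix-\total$, $w'(e_2')=\combined-\total$, $w'(e_3')=\prefix-\total$ and $x'(v_1')=x'(v_2')=(\prefix+\suffix+\combined-2\total)/2$. These values are forced, up to how $x'(v_1')+x'(v_2')$ is split, by requiring that the edge sets $\{e_1',e_2'\}$, $\{e_2',e_3'\}$, $\{e_1',e_3'\}$, $\{e_1',e_2',e_3'\}$ realize $\prefix,\suffix,\combined,\total$.

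No normalization is needed, and your doubt about $\combined\ge\total$ is misplaced. For nonnegative weights the following always hold:
\begin{itemize}
\item $\total\le\prefix_t\le\prefix$;
\item $\total\le\suffix_1\le\suffix$;
\item $\total=\prefix_1+\suffix_2-w(\{p_1,p_2\})\le\combined$.
\end{itemize}
Moreover $\combined\le\prefix+\suffix$ gives $x'(v_i')\ge\combined-\total\ge 0$.

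Your criterion ``reproduce the four quantities'' misses that the gadget admits configurations with no counterpart in the original. One new vertex can hang from one endpoint alone, with gain $x'(v_1')-w'(e_1')$ or $x'(v_2')-w'(e_3')$. Unless these are dominated, the backward direction breaks. The paper secures this with $w'(e_2')\le x'(v_1')=x'(v_2')$, which is again equivalent to $\combined\le\prefix+\suffix$. A solution touching one new vertex can then absorb the other without decreasing $x-w$, so a maximal solution contains both. Each of the four edge patterns then maps back, with equal gain, to the corresponding optimal prefix, suffix, prefix--suffix pair, or whole path.

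Two smaller errors should also be fixed. First, $\combined=\max_{i<j}(\prefix_i+\suffix_j)$ is not ``the full path minus its best removable edge''. With interior values $10,0,10$ and edge weights $0,100,100,0$, the former is $20$ and the latter is $-80$. Second, in pattern (iii) the prefix and suffix must be chosen jointly subject to $i<j$, not each by its own maximum.
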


In order to prove \Cref{thm:fen_kernel} we design two reduction rules. \Cref{rrule:leaves} deals with degree $1$ vertices. Intuitively, whenever the neighbor of a leaf $\ell$ is in a solution, then it is reasonable to attach $\ell$ into the solution if and only if the weight of the edge connecting $\ell$ is cheaper than $x(\ell)$. \Cref{rrule:deg2} shortens long paths of degree-2 vertices, compressing the weight $w$ and values $x$ along the path into constantly many meaningful configurations. These reduction rules alone produce an instance with $O(\fen)$ vertices and edges~(\Cref{lem:bound_fen}). We then apply the framework of Frank and Tardos~\cite{FrankT87} to reduce the input weights and obtain a truly polynomial kernel parameterized by $\fen$, proving \Cref{thm:fen_kernel}. As a corollary of our construction we obtain a very simple linear-time algorithm for \MSTGcore when the input graph is a tree~(\Cref{thm:linear_trees}).

\begin{rrule}\label{rrule:leaves}
    Let $(G,s,x,w)$ be an instance of \MSTGcore and let $\ell$ be a leaf of $G$, $\ell \neq s$. Let $e=\{u,\ell\}$ be the unique edge incident to $\ell$. Output the instance $(G',s,x',w)$, where $G'=G\setminus \ell$ and the new allocation $x'$ is as follows. $x'(v)=x(v)$ for $v\neq u$, and $x'(u)=x(u)+\max\{x(\ell)-w(e),0\}$.
\end{rrule}

\begin{apprestatable}{lemma}{rruleonecorrectness}
    \Cref{rrule:leaves} is correct.
\end{apprestatable}

\toappendix{
\sv{\rruleonecorrectness*}
\begin{proof}
    We show the equivalence $(G,s,x,w)$ is a yes-instance of \MSTGcore if and only if $(G',s,x',w)$ is a yes-instance of \MSTGcore.
    
    $\Rightarrow$: Let $T$ be a solution for $(G,s,x,w)$ and moreover assume that $|V(T)|$ is minimum. We split the analysis into two cases based on whether $\ell \in V(T)$.
    \begin{description}
        \item[Case 1] $\ell \notin V(T):$
        \begin{description}
            \item[Case 1.1] $w(e)>x(\ell)$: In this case, $x'$ and $x$ agree on $V(T)$, thus $x'(T)>w(T)$.
            \item[Case 1.2] $w(e)\leq x(\ell)$: In this case, $T$  also witnesses $x'(T)>w(T)$, because $x'(T)=x(T)+x(\ell)-w(e)\geq x(T)>w(T)$.
        \end{description}
        \item[Case 2] $\ell \in V(T)$: Since $\ell \neq s$, we necessarily have $u\in V(T)$, because $T$ is connected and $u$ is the only neighbor of $\ell$. Let $T'=T\setminus \{\ell\}$.
        \begin{description}
            \item[Case 2.1] $w(e)>x(\ell)$: We show that $x(T')>w(T)$, contradicting the minimality of $|V(T)|$. To see this, note that $x(T')=x(T)-x(\ell)>w(T)-w(e)= w(T')$. Hence this case cannot happen.
            \item[Case 2.2] $w(e)\leq x(\ell)$: We show that $T'$ is a solution to $(G',s,x',w)$. We have $x'(T')=x(T)-x(\ell)>w(T)-w(e)=w(T')$.
        \end{description}
    \end{description}
    
    $\Leftarrow$: Let $T'$ be a solution for $(G's,x',w)$. If $u\notin V(T')$ or $x(\ell)<w(e)$, then $T'$ is also solution for $(G,s,x,w)$ because $x'$ agrees with $x$ on $V(T')$ in such case. The only case left to consider is $u\in V(T')$ and $w(e)\leq x(\ell)$. Let $T=T'+e$. Observe that $x(T)=x(T')+x(\ell)= (x(T')+x(\ell)-w(e))+w(e) = x'(T')+w(e)>w(T')+w(e)=w(T)$ and this finishes the proof.
\end{proof}
}%toappendix

\begin{restatable}{theorem}{thmlineartrees}\label{thm:linear_trees}
    \MSTGcore is solvable in linear time when the input graph is a tree.
\end{restatable}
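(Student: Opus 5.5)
The plan is to apply \Cref{rrule:leaves} exhaustively, in a bottom-up order, and then check the single remaining vertex. Root the tree $G$ at the supply vertex $s$ and compute, by one BFS or DFS from $s$, an ordering of the vertices in which every vertex appears after its parent, together with parent pointers. This costs $O(n)$ time. We then process the vertices in reverse order, that is, from the deepest vertices towards $s$.

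When a vertex $\ell\neq s$ is reached in this order, all of its children have already been deleted. Hence $\ell$ is a leaf of the current graph, and its unique incident edge is $e=\{p(\ell),\ell\}$, where $p(\ell)$ is the parent of $\ell$. We apply \Cref{rrule:leaves} to $\ell$: delete $\ell$ and update
\[
x(p(\ell)) \gets x(p(\ell))+\max\{x(\ell)-w(e),0\}.
\]
Each such step is a constant-time update, given the parent pointer and the stored weight of the parent edge. After all vertices other than $s$ have been processed, the graph is the single vertex $s$ with some accumulated value $x^\ast(s)$.

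By the correctness of \Cref{rrule:leaves} (the preceding lemma), every intermediate instance is equivalent to the original one. It remains to solve the trivial final instance consisting only of $s$. Its only subtree containing $s$ is $s$ itself, with weight $0$. So the instance is a yes-instance if and only if $x^\ast(s)>0$, and we output accordingly.

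The only step needing care is showing that the rule is always applicable in the chosen order; everything else follows from the lemma. If $\ell\neq s$, every neighbor of $\ell$ other than its parent is a child of $\ell$, and these are removed earlier in reverse BFS order. So $\ell$ indeed has degree one when it is processed, and since $s$ is processed last, $s$ is never deleted.

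The total running time is $O(n)=O(n+m)$, because $m=n-1$ in a tree. This gives linear time in the input size, assuming unit-cost arithmetic on the rationals involved.
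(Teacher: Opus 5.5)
Your proposal is correct and follows the paper's proof: apply \Cref{rrule:leaves} exhaustively until only $s$ remains, then answer \texttt{YES} exactly when the accumulated value at $s$ is positive. The reverse BFS/DFS order from $s$ and the parent pointers simply make explicit the linear-time implementation that the paper leaves implicit.
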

\begin{proof}
    Apply \Cref{rrule:leaves} exhaustively until the one-vertex graph with $s$ remains and output \texttt{YES} if $x(s)>0$, otherwise output \texttt{NO}.
\end{proof}

\begin{rrule}\label{rrule:deg2}
    Let $(G,s,x,w)$ be an instance of \MSTGcore reduced with respect to \Cref{rrule:leaves}. Let $v_1,v_2,v_3,\ldots,v_k$ ($k\geq 3$) be a path avoiding $s$ consisting solely of vertices of degree $2$ in $G$. Let $v_0\in N_G(v_1)\setminus \{v_2\}$ be the unique neighbor of $v_1$ other than $v_2$ and similarly let $v_{k+1}\in N_G(v_k)\setminus \{v_{k-1}\}$ be the unique neighbor of $v_k$ other than $v_{k-1}$. Denote $e_i=\{v_{i-1},v_{i}\}$ for $i \geq 1$. For $i\in[k]$ compute the following prefix and suffix sums: $\prefix_i=\sum_{\ell=1}^i(x(v_{\ell})-w(e_{\ell}))$ and $\suffix_i=\sum_{\ell=i}^k(x(v_{\ell})-w(e_{\ell+1}))$ respectively. We define the four numbers referred to as \emph{total} $(\total)$, \emph{prefix} $(\prefix)$, \emph{suffix} $(\suffix)$, and \emph{combined} $(\combined)$:
   \begin{align*}
    \total = \sum_{\ell=1}^kx(v_{\ell})-\sum_{\ell=1}^{k+1}w(e_{\ell}), \quad  \prefix = \max_{i\in[k]}\{\prefix_i\}, \quad
    \suffix = \max_{i\in[k]}\{\suffix_i\}, \quad \combined = \max_{i<j}\{\prefix_i+\suffix_j\}.
\end{align*}

    Replace the vertices $v_1,\ldots,v_k$ by two new vertices $v_1',v_2'$ and add the new edges $e_1'=\{v_0,v_1'\},e_2'=\{v_1',v_2'\},e_3'=\{v_2',v_{k+1}\}$. Denote the new graph $G'$ and let the new weight function $w'$ and allocation $x'$ be as follows. All vertices and edges from the original instance have their values and weight unchanged. For the newly created vertices and edges set:
    \begin{align*}
        w'(e_1')=\suffix-\total, \quad 
        w'(e_2')=\combined-\total, \quad 
        w'(e_3')= \prefix-\total, \quad
    x'(v_1')=x'(v_2')=\frac{\prefix+\suffix+\combined-2\total}{2}.
    \end{align*}
    Output the instance $(G',s,x',w')$.
\end{rrule}

We refer the reader to \Cref{fig:reduction_rule_deg2,fig:reduction_rule_deg2_example} \sv{in the appendix} for an illustration of application of \Cref{rrule:deg2}.

\lv{We stress out the fact that it is possible in \Cref{rrule:deg2} that $s\in \{v_0,v_{k+1}\}$ or that $v_0=v_{k+1}$. In such case $v_0,v_1,\ldots,v_k,v_{k+1}$ is a cycle connected to the rest of $G$ only via the cut-vertex~$v_0$. On the other hand, note that $v_{k+1}\neq v_1$ (and $v_0\neq v_k$) as otherwise if $G$ is connected, then $G$ is isomorphic to a cycle, hence one of the vertices is $s$, contradicting the applicability of the reduction rule.}

\lv{We remark that the rule does not need a maximal such path in order to work. One may also reduce a long path piecewise, taking always 3 vertices and replacing it by 2. However, this may introduce larger fractions in the intermediate instances when applying the rule `piecewise'. On the other hand, applying it directly to the maximal paths, the maximum denominator in any reduced fraction in the resulting instance is at most~$2$.}

\toappendix{
\begin{figure}
    \centering
\begin{tikzpicture}[
	vertex/.style={circle,draw,inner sep=2pt},
	endpoint/.style={rectangle,draw,inner sep=3pt}
]%first scope
\begin{scope}[yshift=0cm]
%endpoints
\node[endpoint] (v0) at (0.000000,0.000000){};
\node[endpoint] (v5) at (9.000000,0.000000){};
%internal vertices
\node[vertex] (v1) at (1.80,0.00){};
\node[vertex] (v2) at (3.60,0.00){};
\node[vertex] (v3) at (5.40,0.00){};
\node[vertex] (v4) at (7.20,0.00){};
%edges
\draw (v0)--(v1);
\draw (v1)--(v2);
\draw (v2)--(v3);
\draw (v3)--(v4);
\draw (v4)--(v5);
%bottom edge labels (default is e_1,...,e_{k+1})
\node[below,font=\small] at (0.90,0.00) {$e_{1}$};
\node[below,font=\small] at (2.70,0.00) {$e_{2}$};
\node[below,font=\small] at (4.50,0.00) {$e_{3}$};
\node[below,font=\small] at (6.30,0.00) {$e_{4}$};
\node[below,font=\small] at (8.10,0.00) {$e_{5}$};
%top edge labels (default is empty)
\node[above,font=\small] at (0.90,0.00) {};
\node[above,font=\small] at (2.70,0.00) {};
\node[above,font=\small] at (4.50,0.00) {};
\node[above,font=\small] at (6.30,0.00) {};
\node[above,font=\small] at (8.10,0.00) {};
%bottom vertex labels (default is v0,...,v{k+1})
\node[below=2pt,font=\small] at (0.00,0.00) {$v_{0}$};
\node[below=2pt,font=\small] at (1.80,0.00) {$v_{1}$};
\node[below=2pt,font=\small] at (3.60,0.00) {$v_{2}$};
\node[below=2pt,font=\small] at (5.40,0.00) {$v_{3}$};
\node[below=2pt,font=\small] at (7.20,0.00) {$v_{4}$};
\node[below=2pt,font=\small] at (9.00,0.00) {$v_{5}$};
%top vertex labels (default is empty)
\node[above=2pt,font=\small] at (0.00,0.00) {};
\node[above=2pt,font=\small] at (1.80,0.00) {};
\node[above=2pt,font=\small] at (3.60,0.00) {};
\node[above=2pt,font=\small] at (5.40,0.00) {};
\node[above=2pt,font=\small] at (7.20,0.00) {};
\node[above=2pt,font=\small] at (9.00,0.00) {};
\end{scope}
\draw[->, thick] (4.500000,-1.000000) to (4.500000,-2.000000);
%second scope
\begin{scope}[yshift=-3cm]
%endpoints
\node[endpoint] (v0) at (0.000000,0.000000){};
\node[endpoint] (v3) at (9.000000,0.000000){};
%internal vertices
\node[vertex] (v1) at (3.00,0.00){};
\node[vertex] (v2) at (6.00,0.00){};
%edges
\draw (v0)--(v1);
\draw (v1)--(v2);
\draw (v2)--(v3);
%bottom edge labels (default is e_1,...,e_{k+1})
\node[below,font=\small] at (1.50,0.00) {$e'_{1}$};
\node[below,font=\small] at (4.50,0.00) {$e'_{2}$};
\node[below,font=\small] at (7.50,0.00) {$e'_{3}$};
%top edge labels (default is empty)
\node[above,font=\small] at (1.50,0.00) {\textcolor{red}{$\suffix-\total$}};
\node[above,font=\small] at (4.50,0.00) {\textcolor{red}{$\combined-\total$}};
\node[above,font=\small] at (7.50,0.00) {\textcolor{red}{$\prefix-\total$}};
%bottom vertex labels (default is v0,...,v{k+1})
\node[below=2pt,font=\small] at (0.00,0.00) {$v_{0}$};
\node[below=2pt,font=\small] at (3.00,0.00) {$v'_{1}$};
\node[below=2pt,font=\small] at (6.00,0.00) {$v'_{2}$};
\node[below=2pt,font=\small] at (9.00,0.00) {$v_{5}$};
%top vertex labels (default is empty)
\node[above=2pt,font=\small] at (0.00,0.00) {};
\node[above=2pt,font=\small] at (3.00,0.00) {\textcolor{blue}{$\frac{\prefix+\suffix+\combined-2\total}{2}$}};
\node[above=2pt,font=\small] at (6.00,0.00) {\textcolor{blue}{$\frac{\prefix+\suffix+\combined-2\total}{2}$}};
\node[above=2pt,font=\small] at (9.00,0.00) {};
\end{scope}
\end{tikzpicture}

        \caption{Visualization of application of \Cref{rrule:deg2} for $k=4$. On the top is (a part of) the original instance $(G,s,x,w)$, while on the bottom is the reduced one $(G',s,x',w')$. The circular vertices are the vertices of degree $2$ replaced by the reduction rule. The square vertices are untouched by the reduction rule. The new edge \textcolor{red}{weights} and vertex \textcolor{blue}{values} are displayed above the edges (resp. vertices) in the reduced instance.
        }
        \label{fig:reduction_rule_deg2}
\end{figure}
}%toappendix
%example_2
\toappendix{
\begin{figure}
        \centering
        \begin{tikzpicture}[
	vertex/.style={circle,draw,inner sep=2pt},
	endpoint/.style={rectangle,draw,inner sep=3pt}
]%first scope
\begin{scope}[yshift=0cm]
%endpoints
\node[endpoint] (v0) at (0.000000,0.000000){};
\node[endpoint] (v5) at (9.000000,0.000000){};
%internal vertices
\node[vertex] (v1) at (1.80,0.00){};
\node[vertex] (v2) at (3.60,0.00){};
\node[vertex] (v3) at (5.40,0.00){};
\node[vertex] (v4) at (7.20,0.00){};
%edges
\draw (v0)--(v1);
\draw (v1)--(v2);
\draw (v2)--(v3);
\draw (v3)--(v4);
\draw (v4)--(v5);
%bottom edge labels (default is e_1,...,e_{k+1})
\node[below,font=\small] at (0.90,0.00) {$e_{1}$};
\node[below,font=\small] at (2.70,0.00) {$e_{2}$};
\node[below,font=\small] at (4.50,0.00) {$e_{3}$};
\node[below,font=\small] at (6.30,0.00) {$e_{4}$};
\node[below,font=\small] at (8.10,0.00) {$e_{5}$};
%top edge labels (default is empty)
\node[above,font=\small] at (0.90,0.00) {$7$};
\node[above,font=\small] at (2.70,0.00) {$5$};
\node[above,font=\small] at (4.50,0.00) {$6$};
\node[above,font=\small] at (6.30,0.00) {$3$};
\node[above,font=\small] at (8.10,0.00) {$4$};
%bottom vertex labels (default is v0,...,v{k+1})
\node[below=2pt,font=\small] at (0.00,0.00) {$v_{0}$};
\node[below=2pt,font=\small] at (1.80,0.00) {$v_{1}$};
\node[below=2pt,font=\small] at (3.60,0.00) {$v_{2}$};
\node[below=2pt,font=\small] at (5.40,0.00) {$v_{3}$};
\node[below=2pt,font=\small] at (7.20,0.00) {$v_{4}$};
\node[below=2pt,font=\small] at (9.00,0.00) {$v_{5}$};
%top vertex labels (default is empty)
\node[above=2pt,font=\small] at (0.00,0.00) {};
\node[above=2pt,font=\small] at (1.80,0.00) {$4$};
\node[above=2pt,font=\small] at (3.60,0.00) {$6$};
\node[above=2pt,font=\small] at (5.40,0.00) {$7$};
\node[above=2pt,font=\small] at (7.20,0.00) {$2$};
\node[above=2pt,font=\small] at (9.00,0.00) {};
\end{scope}
\draw[->, thick] (4.500000,-1.000000) to (4.500000,-2.000000);
%second scope
\begin{scope}[yshift=-3cm]
%endpoints
\node[endpoint] (v0) at (0.000000,0.000000){};
\node[endpoint] (v3) at (9.000000,0.000000){};
%internal vertices
\node[vertex] (v1) at (3.00,0.00){};
\node[vertex] (v2) at (6.00,0.00){};
%edges
\draw (v0)--(v1);
\draw (v1)--(v2);
\draw (v2)--(v3);
%bottom edge labels (default is e_1,...,e_{k+1})
\node[below,font=\small] at (1.50,0.00) {$e'_{1}$};
\node[below,font=\small] at (4.50,0.00) {$e'_{2}$};
\node[below,font=\small] at (7.50,0.00) {$e'_{3}$};
%top edge labels (default is empty)
\node[above,font=\small] at (1.50,0.00) {\textcolor{red}{$8$}};
\node[above,font=\small] at (4.50,0.00) {\textcolor{red}{$6$}};
\node[above,font=\small] at (7.50,0.00) {\textcolor{red}{$5$}};
%bottom vertex labels (default is v0,...,v{k+1})
\node[below=2pt,font=\small] at (0.00,0.00) {$v_{0}$};
\node[below=2pt,font=\small] at (3.00,0.00) {$v'_{1}$};
\node[below=2pt,font=\small] at (6.00,0.00) {$v'_{2}$};
\node[below=2pt,font=\small] at (9.00,0.00) {$v_{5}$};
%top vertex labels (default is empty)
\node[above=2pt,font=\small] at (0.00,0.00) {};
\node[above=2pt,font=\small] at (3.00,0.00) {\textcolor{blue}{$\frac{13}{2}$}};
\node[above=2pt,font=\small] at (6.00,0.00) {\textcolor{blue}{$\frac{13}{2}$}};
\node[above=2pt,font=\small] at (9.00,0.00) {};
\end{scope}
\end{tikzpicture}

        \caption{Example of application of  \Cref{rrule:deg2} for $k=4$ (same as in \Cref{fig:reduction_rule_deg2}) with specific numbers. The weights $w$ and allocation $x$ are displayed above the edges and vertices. The prefixes of the original path are $(\prefix_{1},\prefix_{2},\prefix_{3},\prefix_{4}) = (-3,-2,-1,-2)$, and the suffixes are $(\suffix_{1},\suffix_{2},\suffix_{3},\suffix_{4}) = (1,2,2,-2)$, the combined number is $\combined = \prefix_2+\suffix_3=0$ and the total number is $\total = (4+6+7+2)-(7+5+6+3+4)=-6$. By plugging into the equations in \Cref{rrule:deg2} we obtain  $x'(v_1')=x'(v_2')=\frac{\prefix+\suffix+\combined+2\total}{2}=\frac{-1+2+0-2\cdot(-6)}{2}=\textcolor{blue}{\frac{13}{2}}$, $w'(e_1')=\suffix-\total=2-(-6)=\textcolor{red}{8}$, $w'(e_2')=\combined-\total=0-(-6)=\textcolor{red}{6}$, and $w'(e_3')=\prefix-\total=-1-(-6)=\textcolor{red}{5}$.}
        \label{fig:reduction_rule_deg2_example}
\end{figure}
}%toappendix

\begin{apprestatable}{lemma}{lemrruletwocor}\label{lem:rrule2_cor}
 \Cref{rrule:deg2} is correct.
\end{apprestatable}
\sv{
\begin{proof}[Proof Sketch]
    The main idea is that the long path will be used in at most $4$ distinct structural ways (see \Cref{fig:forward_proof_visualization}). In the original instance $(G,s,x,w)$, either the solution will take $v_0$ and some part from left, or $v_k$ and some part from the right, or it will take the whole path, or it will take both $v_0$ and $v_k$ and some part from left and some part from right, but not the whole path. The four quantities $\prefix,\suffix,\total,\combined$ correspond to the optimal choices and the values and weights in the reduced instance are set in such a way to reflect these quantities when taking the corresponding compressed solution.
    Full details and analysis of all cases can be found in the appendix.
\end{proof}

\begin{figure}[bt]
\centering
\begin{tikzpicture}[
    circle node/.style={draw, minimum size=2mm, inner sep=0pt,circle},
    square node/.style={draw, minimum size=2mm, inner sep=0pt,rectangle},
    x=1cm, y=0.6cm
]

\tikzset{
  redsnake/.style={
    red,
    thick,
    decorate,
    decoration={snake, amplitude=2pt, segment length=10pt}
  }
}

\node(g) at (2.5,1.1){$(G,s,x,w)$};
\node(gp) at (1.5+8,1.1){$(G',s,x',w')$};

%labels, first row
\foreach \i in {0,...,5}{
    \node(v\i) at (\i,0.5){$v_\i$};    
}
\foreach \i in {1,...,2}
{
    \node(vp\i) at (\i+8,0.5){$v_\i'$};
}
\node(vp0) at (0+8,0.5){$v_0$};
\node(vp3) at (3+8,0.5){$v_5$};

%cycle over rows, draws just plain paths (black)
\foreach \row in {0,1,2,3} {

    % LEFT COLUMN: paths with 6 vertices
    \foreach \i in {0,...,5} {
        \ifnum\i=0
            \node[square node] (L\row-\i) at (\i, -\row) {};
        \else\ifnum\i=5
            \node[square node] (L\row-\i) at (\i, -\row) {};
        \else
            \node[circle node] (L\row-\i) at (\i, -\row) {};
        \fi\fi
    }

    % edges for left paths
    \foreach \i in {0,...,4} {
        \draw (L\row-\i) -- (L\row-\the\numexpr\i+1\relax);
    }

    % RIGHT COLUMN: paths with 4 vertices
    \foreach \i in {0,...,3} {
        \ifnum\i=0
            \node[square node] (R\row-\i) at (\i+8, -\row) {};
        \else\ifnum\i=3
            \node[square node] (R\row-\i) at (\i+8, -\row) {};
        \else
            \node[circle node] (R\row-\i) at (\i+8, -\row) {};
        \fi\fi
    }

    % edges for right paths
    \foreach \i in {0,...,2} {
        \draw (R\row-\i) -- (R\row-\the\numexpr\i+1\relax);
    }
}

%case1
\foreach \i in {0,...,2}{
    \draw[red,thick, ] (L0-\i) -- (L0-\the\numexpr\i+1\relax);
}

\foreach \i in {0,...,1}{
    \draw[red,thick, ] (R0-\i) -- (R0-\the\numexpr\i+1\relax);
}

\foreach \i in {0,...,3}{
    \ifnum\i=0
        \node[square node,red,fill] (L0r-\i) at (\i,0){};
    \else
        \node[circle node,red,fill] (L0r-\i) at (\i,0){};
    \fi
}

\foreach \i in {0,...,2} {
        \ifnum\i=0
            \node[square node,red,fill] (R0r-\i) at (\i+8, 0) {};
        \else\ifnum\i=3
            \node[square node,red,fill] (R0r-\i) at (\i+8, 0) {};
        \else
            \node[circle node,red,fill] (R0r-\i) at (\i+8, 0) {};
        \fi\fi
    }

%case2
\foreach \i in {2,...,4}{
    \draw[red,thick, ] (L1-\i) -- (L1-\the\numexpr\i+1\relax);
}
\foreach \i in {1,...,2}{
    \draw[red,thick, ] (R1-\i) -- (R1-\the\numexpr\i+1\relax);
}

\foreach \i in {2,...,5}{
    \ifnum\i=5
        \node[square node,red,fill] (L1r-\i) at (\i,-1){};
    \else
        \node[circle node,red,fill] (L1r-\i) at (\i,-1){};
    \fi
}

\foreach \i in {1,...,3} {
        \ifnum\i=0
            \node[square node,red,fill] (R1r-\i) at (\i+8, -1) {};
        \else\ifnum\i=3
            \node[square node,red,fill] (R1r-\i) at (\i+8, -1) {};
        \else
            \node[circle node,red,fill] (R1r-\i) at (\i+8, -1) {};
        \fi\fi
    }

%case3
\foreach \i in {0,...,4}{
    \draw[red,thick, ] (L2-\i) -- (L2-\the\numexpr\i+1\relax);
}
\foreach \i in {0,...,2}{
    \draw[red,thick, ] (R2-\i) -- (R2-\the\numexpr\i+1\relax);
}

\foreach \i in {0,...,5}{
    \ifnum\i=5
        \node[square node,red,fill] (L2r-\i) at (\i,-2){};
    \else\ifnum\i=0
        \node[square node,red,fill] (L2r-\i) at (\i,-2){};
    \else
        \node[circle node,red,fill] (L2r-\i) at (\i,-2){};
    \fi\fi
}

\foreach \i in {0,...,3} {
        \ifnum\i=0
            \node[square node,red,fill] (R2r-\i) at (\i+8, -2) {};
        \else\ifnum\i=3
            \node[square node,red,fill] (R2r-\i) at (\i+8, -2) {};
        \else
            \node[circle node,red,fill] (R2r-\i) at (\i+8, -2) {};
        \fi\fi
    }

%case4
\foreach \i in {0,...,1}{
    \draw[red,thick] (L3-\i) -- (L3-\the\numexpr\i+1\relax);
}
\foreach \i in {3,...,4}{
    \draw[red,thick] (L3-\i) -- (L3-\the\numexpr\i+1\relax);
}

\draw[red,thick] (R3-0) -- (R3-1);
\draw[red,thick] (R3-2) -- (R3-3);

\foreach \i in {0,...,2}{
    \ifnum\i=0
        \node[square node,red,fill] (L3r-\i) at (\i,-3){};
    \else
        \node[circle node,red,fill] (L3r-\i) at (\i,-3){};
    \fi
}

\foreach \i in {3,...,5}{
    \ifnum\i=5
        \node[square node,red,fill] (L3r-\i) at (\i,-3){};
    \else
        \node[circle node,red,fill] (L3r-\i) at (\i,-3){};
    \fi
}

\foreach \i in {0,...,1} {
        \ifnum\i=0
            \node[square node,red,fill] (R3r-\i) at (\i+8, -3) {};
        \else\ifnum\i=3
            \node[square node,red,fill] (R3r-\i) at (\i+8, -3) {};
        \else
            \node[circle node,red,fill] (R3r-\i) at (\i+8, -3) {};
        \fi\fi
    }

\foreach \i in {2,...,3} {
        \ifnum\i=0
            \node[square node,red,fill] (R3r-\i) at (\i+8, -3) {};
        \else\ifnum\i=3
            \node[square node,red,fill] (R3r-\i) at (\i+8, -3) {};
        \else
            \node[circle node,red,fill] (R3r-\i) at (\i+8, -3) {};
        \fi\fi
    }

\end{tikzpicture}

\caption{The figure represents the sketch of the main idea of the correctness of \Cref{rrule:deg2} for $k=4$. Each line represents one possible behaviour of a solution (in red). The left part depicts the four possible cases in the original instance $(G,s,x,w)$ and the right part depicts the corresponding situation in the reduced instance $(G',s,x',w')$. The case visualized corresponds to the best prefix $\prefix$ being attained for $\prefix_3$, the best suffix ($\suffix$) for $\suffix_2$ and the best combined ($\combined$) for $\prefix_2+\sigma_3$.}
\label{fig:forward_proof_visualization}
\end{figure}

}%

\toappendix{
\sv{\lemrruletwocor*}
The proof of \Cref{lem:rrule2_cor} is covered by three claims~(\Cref{claim:batmostpps,claim:rrule2_nonnegative,claim:rr2_ineq}) and two lemmas (\Cref{lem:rrule2_cor_forward,lem:rrule2_cor_backward}).  We first observe several numerical properties of the values $\prefix,\suffix,\total,\combined$ in \Cref{claim:batmostpps,claim:rrule2_nonnegative,claim:rr2_ineq} and we then prove the safeness of the reduction in \Cref{lem:rrule2_cor_forward,lem:rrule2_cor_backward}. For the sake of readability and compactness, denote just for this proof $x_i:=x'(v_i')$ and $w_i:=w'(e_i')$.
\begin{claim}\label{claim:batmostpps}
    $\combined\leq \prefix+\suffix$.
\end{claim}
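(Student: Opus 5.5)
Since $\combined=\max_{i<j}\{\prefix_i+\suffix_j\}$, I will prove the inequality term by term rather than by a global argument. Fix any pair $i<j$ with $i,j\in[k]$ attaining (or merely competing for) the maximum defining $\combined$. By definition of $\prefix$ as the maximum of $\prefix_1,\ldots,\prefix_k$, we have $\prefix_i\leq \prefix$ because $i\in[k]$. Likewise, $j\in[k]$ gives $\suffix_j\leq \suffix$. Adding these two inequalities yields $\prefix_i+\suffix_j\leq \prefix+\suffix$.

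Since this bound holds for every admissible pair $i<j$, it also holds for the maximum over all such pairs, so $\combined\leq\prefix+\suffix$. The maximum defining $\combined$ is over a nonempty index set, because $k\geq 3$ ensures that pairs $i<j$ in $[k]$ exist (for instance $i=1$, $j=2$). Hence $\combined$ is a finite number and the comparison is meaningful.

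There is no real obstacle here. The only point worth stating explicitly is that the constraint $i<j$ only shrinks the set of pairs compared to the unconstrained maximum $\max_{i}\prefix_i+\max_j\suffix_j$, which is exactly $\prefix+\suffix$; the relaxation is therefore an upper bound. Equality may fail precisely when the maximizers of $\prefix$ and $\suffix$ are not ordered appropriately. This is the case relevant for the later nonnegativity claim, but it is not needed here.
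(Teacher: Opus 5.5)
Your proof is correct and essentially the paper's argument: the paper also drops the constraint $i<j$, bounding $\combined$ by the unconstrained maximum over $i,j\in[k]$, which splits as $\max_i \prefix_i+\max_j \suffix_j=\prefix+\suffix$. Your observation that the index set is nonempty because $k\geq 3$ is a harmless detail the paper leaves implicit.
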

\begin{proof}
    Notice that $
    \combined=\max_{i<j}\{\prefix_i+\suffix_j\}\leq \max_{i,j \in [k]}\{\prefix_i+\suffix_j\}=\max_{i \in [k]}\{\prefix_i\}+\max_{j \in [k]}\{\suffix_j\}=\prefix+\suffix$.
\end{proof}

\begin{claim}\label{claim:rrule2_nonnegative}
    If in the original instance all the weights $w(e_{i})$ and values $x(v_{i})$ were non-negative, then the new weights $w_i$ and values $x_i$ in the reduced instance remain non-negative.
\end{claim}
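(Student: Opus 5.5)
The plan is to check the four new quantities $w_1=\suffix-\total$, $w_2=\combined-\total$, $w_3=\prefix-\total$ and $x_1=x_2=\frac{\prefix+\suffix+\combined-2\total}{2}$ one at a time. Each time I compare a maximum with one explicitly chosen term, using the telescoping identity
\[
\prefix_i+\suffix_{i+1}=\total+w(e_{i+1})\qquad(0\le i\le k,\ \text{with } \prefix_0=\suffix_{k+1}:=0).
\]
This identity holds because $\prefix_i$ and $\suffix_{i+1}$ together contain every value $x(v_\ell)$ and every weight $w(e_\ell)$ except $w(e_{i+1})$. Only the nonnegativity of the weights $w(e_\ell)$ should be needed.

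\textbf{The three edge weights.} From the identity, $\suffix_1=\total+w(e_1)$ and $\prefix_k=\total+w(e_{k+1})$. Therefore $\suffix\ge\suffix_1\ge\total$ and $\prefix\ge\prefix_k\ge\total$, so $w_1,w_3\ge0$. Since $k\ge3$, the pair $i=1<j=2$ is admissible in the definition of $\combined$. Hence $\combined\ge\prefix_1+\suffix_2=\total+w(e_2)\ge\total$, which gives $w_2\ge0$.

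\textbf{The vertex value.} It suffices to show $\prefix+\suffix+\combined\ge2\total$, and this is where I expect the only real care to be needed. Writing the left side as $(\prefix-\total)+(\suffix-\total)+\combined$ fails, because $\combined$ itself may be negative. Instead I fix indices $i<j$ attaining $\combined=\prefix_i+\suffix_j$ and bound $\prefix$ and $\suffix$ by terms adapted to that pair. Take $\prefix\ge\prefix_{j-1}$, which is valid since $j-1\ge i\ge1$, and take $\suffix\ge\suffix_{i+1}$, which is valid since $i+1\le j\le k$. Regrouping,
\[
\prefix+\suffix+\combined\ \ge\ (\prefix_{j-1}+\suffix_j)+(\prefix_i+\suffix_{i+1})=2\total+w(e_j)+w(e_{i+1})\ \ge\ 2\total ,
\]
by the identity applied twice. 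This yields $x_1=x_2\ge0$.

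If the indexing in the identity turns out to be off by one, I would verify it directly on the example of \Cref{fig:reduction_rule_deg2_example} before finalizing. For instance, $\prefix_2+\suffix_3=-2+2=0=\total+w(e_3)=-6+6$.
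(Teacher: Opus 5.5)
Your proof is correct. For the three edge weights you argue exactly as the paper does: its identities $\total=\prefix_k-w(e_{k+1})$, $\total=\suffix_1-w(e_1)$ and $\total=\prefix_1+\suffix_2-w(e_2)$ are the $i=k$, $i=0$ and $i=1$ instances of your telescoping identity.

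For the vertex value you take a different route. The paper first notes $\combined\le\prefix+\suffix$ (\Cref{claim:batmostpps}: the maximum of $\prefix_i+\suffix_j$ over pairs $i<j$ is at most the maximum over all pairs). It then concludes $\prefix+\suffix+\combined-2\total\ge 2(\combined-\total)=2w_2\ge 0$, reusing the bound on $w_2$ just proved. So your remark that the direct regrouping ``fails'' only applies to the grouping $(\prefix-\total)+(\suffix-\total)+\combined$. The grouping $(\prefix+\suffix-\combined)+2(\combined-\total)$ works at once.

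Your index-matching argument is also valid: you take $\prefix_{j-1}$ and $\suffix_{i+1}$ adapted to an optimal pair $(i,j)$ and apply the identity twice. You check the index ranges $1\le i\le j-1\le k$ and $2\le i+1\le j\le k$ correctly, and the argument is self-contained with explicit slack $w(e_j)+w(e_{i+1})$. It is, however, longer than needed. The paper's route also has the advantage that the inequality $\combined\le\prefix+\suffix$ is needed anyway in the next claim, to show $w_2\le x_1$.
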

\begin{proof}
    Observe that $\total=\prefix_k-w(e_{k+1})\leq \prefix_k\leq \max_{i\in[k]}\{\prefix_i\} = \prefix$, thus $\prefix-\total\geq 0$. Similarly, $\total=\suffix_1-w(e_1)\leq \suffix_1 \leq \suffix$, thus $\suffix-\total\geq 0$. Analogously, $\total=\prefix_1+\suffix_2-w(e_2)\leq \prefix_1+\suffix_2\leq \combined$, thus $\combined-\total\geq 0$. For the inequality $x(v_1')\geq 0$, notice that $\combined\leq \prefix+\suffix$ by \Cref{claim:batmostpps}, so $\frac{\prefix+\suffix+\combined-2\total}{2}\geq \frac{2\combined-2\total}{2}=\combined-\total\geq 0$.
\end{proof}

\begin{claim}\label{claim:rr2_ineq}
    The values $x_1,x_2,w_1,w_2,w_3$ satisfy the following six (in)equalities:
    \begin{align}
        x_1+x_2-w_1-w_3&=\combined \\
        x_1+x_2-w_1-w_2&=\prefix \\
        x_1+x_2-w_2-w_3&=\suffix \\
        x_1+x_2-w_1-w_2-w_3&=\total\\
        w_2&\leq x_1 \\
        w_2&\leq x_2
    \end{align}
\end{claim}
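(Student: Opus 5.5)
The first four identities should follow from direct substitution of the definitions in \Cref{rrule:deg2}. Write $X:=x_1+x_2=\prefix+\suffix+\combined-2\total$ and $w_1=\suffix-\total$, $w_2=\combined-\total$, $w_3=\prefix-\total$. Then $X-w_1-w_3=\prefix+\suffix+\combined-2\total-(\suffix-\total)-(\prefix-\total)=\combined$. By the same cancellation, $X-w_1-w_2=\prefix$ and $X-w_2-w_3=\suffix$. Finally, $X-w_1-w_2-w_3=\prefix+\suffix+\combined-2\total-(\prefix+\suffix+\combined-3\total)=\total$. I will simply record these four one-line computations.

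For the last two inequalities, note that $x_1=x_2$, so it suffices to prove one of them. The inequality $w_2\le x_1$ reads
\[
\combined-\total\le \frac{\prefix+\suffix+\combined-2\total}{2},
\]
which is equivalent to $\combined\le\prefix+\suffix$. This is exactly \Cref{claim:batmostpps}. The computation is the same one already used in \Cref{claim:rrule2_nonnegative} to show $x_1\ge\combined-\total$, so I would just cite it.

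I do not expect a real obstacle here, since the whole claim is algebraic bookkeeping. The only point requiring care is matching each identity to the correct pair of edges. The combined case omits the middle edge $e_2'$ and so subtracts $w_1$ and $w_3$. The prefix case enters from $v_0$, so it uses $e_1'$ and $e_2'$ and omits $e_3'$. The suffix case enters from $v_{k+1}$, so it uses $e_2'$ and $e_3'$. Getting this correspondence right is what the later lemmas \Cref{lem:rrule2_cor_forward,lem:rrule2_cor_backward} rely on.
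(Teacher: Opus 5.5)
Your proposal is correct and follows the paper's proof: the four identities come from direct substitution of $w_1=\suffix-\total$, $w_2=\combined-\total$, $w_3=\prefix-\total$ and $x_1+x_2=\prefix+\suffix+\combined-2\total$. Since $x_1=x_2$, both inequalities reduce to $\combined\le\prefix+\suffix$, which is \Cref{claim:batmostpps}, and your explicit cancellations check out.
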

\begin{proof}
    The first four equalities are straightforward to check. For the inequalities, recall that $x_1=x_2$ and we have:
    \begin{align*}
        w_2&\leq x_1 &\Leftrightarrow \\
        \combined-\total&\leq \frac{\prefix+\suffix+\combined-2\total}{2} &\Leftrightarrow\\
        \combined&\leq \prefix + \suffix.&
    \end{align*}
    The last inequality is true by~\Cref{claim:batmostpps} and this finishes the proof.
\end{proof}

\begin{lemma}\label{lem:rrule2_cor_forward}
    Let $(G,s,x,w)$ be an instance of \MSTGcore and let $(G',s,x',w')$ result from $(G,s,x,w)$ by application of \Cref{rrule:deg2}. If $(G,s,x,w)$ is a yes-instance, then $(G',s,x',w')$ is a yes-instance.
\end{lemma}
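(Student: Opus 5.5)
We take a solution $T$ of $(G,s,x,w)$ and build a solution $T'$ of $(G',s,x',w')$ with $x'(T')-w'(T')\geq x(T)-w(T)>0$. Since $s\notin\{v_1,\dots,v_k\}$, $T$ is a tree containing $s$, and every vertex of the path has degree $2$, the set $E(T)\cap\{e_1,\dots,e_{k+1}\}$ can only be one of the following. Outside the path, $T$ agrees with some tree $T_0$ in $G-\{v_1,\dots,v_k\}$, and we set $T'=T_0$ plus a compressed part.
\begin{description}
\item[(a)] No path vertex lies in $T$. Then $T'=T$.
\item[(b)] $T$ uses a prefix $e_1,\dots,e_i$ with $i<k$ (through $v_0$ only). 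Its contribution is $\prefix_i\le\prefix$. We take $v_0v_1'v_2'$, contributing $x_1+x_2-w_1-w_2=\prefix$ by the equalities of \Cref{claim:rr2_ineq}.
\item[(c)] $T$ uses a suffix (through $v_{k+1}$ only). This is symmetric to (b), with contribution $\suffix_j\le\suffix$, replaced by $v_{k+1}v_2'v_1'$ contributing $\suffix$.
\item[(d)] $T$ uses a prefix ending at $v_i$ and a suffix starting at $v_j$ with $i<j$, both $v_0$ and $v_{k+1}$ in $T$, and missing some edge. Its contribution is $\prefix_i+\suffix_j\le\combined$. We take edges $e_1',e_3'$ with both new vertices, contributing $x_1+x_2-w_1-w_3=\combined$.
\item[(e)] $T$ uses the whole path, contributing $\total$. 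We take all of $e_1',e_2',e_3'$, contributing $\total$.
\end{description}
Boundary subcases need care: a prefix with $i=k$ but $e_{k+1}\notin T$ has contribution $\prefix_k$, still $\le\prefix$. A degenerate (d) with an empty side, e.g.\ only $v_0$'s side used while $v_{k+1}$ is in $T$ through other edges, reduces to (b).

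The main care point is connectivity and acyclicity of $T'$, together with the coincidences $v_0=v_{k+1}$ or $s\in\{v_0,v_{k+1}\}$. In case (e), the old path is a $v_0$–$v_{k+1}$ connection inside $T$, and the new path $v_0v_1'v_2'v_{k+1}$ plays exactly the same role, so $T'$ remains a tree. In cases (b)–(d) each new piece hangs off a single vertex of $T_0$, so no cycle is created. In all cases $T'$ still contains $s$. The contributions are compared piece by piece, and the values $x(v_0),x(v_{k+1})$ are counted identically in $T$ and $T'$.

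For completeness, the other direction (\Cref{lem:rrule2_cor_backward}) follows the same pattern. Given a solution $T'$ in $G'$, the same five patterns occur, except that in $G'$ one may also use just $v_1'$ via $e_1'$. That option is dominated because $w_2\le x_2$ by \Cref{claim:rr2_ineq}, so $v_2'$ can be added. Each pattern is then expanded back to the original path portion attaining the corresponding maximum, $\prefix,\suffix,\combined$ or $\total$.

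The step I expect to be the main obstacle is making the case enumeration of how $T$ meets a path of degree-$2$ vertices exhaustive, including these degenerate endpoint situations. The remaining verification is routine once the equalities of \Cref{claim:rr2_ineq} are available.
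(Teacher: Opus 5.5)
Your proposal is correct and matches the paper's proof. You use the same case split on $E(T)\cap\{e_1,\dots,e_{k+1}\}$ (prefix, suffix, whole path, and prefix plus suffix with $i<j$), the same replacements $\{e_1',e_2'\}$, $\{e_2',e_3'\}$, $\{e_1',e_2',e_3'\}$, $\{e_1',e_3'\}$, and the same comparisons $\prefix_i\le\prefix$, $\suffix_j\le\suffix$, $\prefix_i+\suffix_j\le\combined$ via the identities of \Cref{claim:rr2_ineq}; your remarks on acyclicity and the degenerate endpoint situations only make explicit what the paper leaves implicit.
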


\begin{proof}
    Let $T$ be a solution for $(G,s,x,w)$. If $T$ avoids $v_1,\ldots,v_k$, then it is clearly a solution for $(G's,x',w')$. Suppose that $V(T)\cap \{v_1,\ldots v_k\}\neq \emptyset$.
    Since $T$ is connected and  $s\notin \{v_1,\ldots,v_k\}$, there are four cases to consider based on the shape of the intersection $E(T)\cap \{e_1,\ldots,e_{k+1}\}$:
    \begin{enumerate}[label=\arabic*.]
        \item $E(T)\cap \{e_1,\ldots,e_{k+1}\} = \{e_1,e_2,\ldots e_i\} \text{ for some $i\in[k]$}$
        \item $E(T)\cap \{e_1,\ldots,e_{k+1}\} = \{e_{i+1},\ldots,e_{k+1}\} \text{ for some $i\in[k]$}$
        \item $E(T)\cap \{e_1,\ldots,e_{k+1}\} = \{e_1,\ldots,e_{k+1}\}$
        \item $E(T)\cap \{e_1,\ldots, e_{k+1}\}=\{e_1,e_2,\ldots,e_i\}\cup \{e_{j+1},\ldots,e_{k+1}\} \text{ for some $i<j$ and $i,j \in [k]$}$
    \end{enumerate}

    We describe in each case how to construct a solution $T'$ for $(G',s,x',w')$.

    \begin{description}
        \item[Case 1] $E(T)\cap \{e_1,\ldots,e_{k+1}\} = \{e_1,e_2,\ldots e_i\} \text{ for some $i\in[k]$}$: Let $T':=T|_{G'}+ \{e_1',e_2'\}$. See \Cref{fig:reduction_rule_deg2_case1} for illustration. We have:
        \begin{align*}
        x'(T')-w'(T')&=\left(x(T)-\sum_{\ell=1}^ix(v_{\ell})+x_1+x_2\right)-\left(w(T)-\sum_{\ell=1}^iw(e_{\ell})+w_1+w_2\right)= \\
        &= x(T)-w(T)-\prefix_i+x_1+x_2-w_1-w_2=\\&=x(T)-w(T)-\prefix_i+\prefix>0
        \end{align*}
        The last equality is due to \Cref{claim:rr2_ineq}. Note that $x(T)-w(T)>0$ because $T$ is a solution and $\prefix-\prefix_i\geq 0$ by the definition of $\prefix$. \\ 
%figure for case 1--similar the others.
\begin{figure}[H]
    \centering
    \begin{tikzpicture}[
        vertex/.style={circle,draw,inner sep=2pt},
        endpoint/.style={rectangle,draw,inner sep=3pt}
]%first scope
\begin{scope}[yshift=0cm]
%endpoints
\node[fill=red,endpoint] (v0) at (0.000000,0.000000){};
\node[endpoint] (v5) at (9.000000,0.000000){};
%internal vertices
\node[fill=red,vertex] (v1) at (1.80,0.00){};
\node[fill=red,vertex] (v2) at (3.60,0.00){};
\node[vertex] (v3) at (5.40,0.00){};
\node[vertex] (v4) at (7.20,0.00){};
%edges
\draw[red] (v0)--(v1);
\draw[red] (v1)--(v2);
\draw (v2)--(v3);
\draw (v3)--(v4);
\draw (v4)--(v5);
%bottom edge labels (default is e_1,...,e_{k+1})
\node[below,font=\small] at (0.90,0.00) {$e_{1}$};
\node[below,font=\small] at (2.70,0.00) {$e_{2}$};
\node[below,font=\small] at (4.50,0.00) {$e_{3}$};
\node[below,font=\small] at (6.30,0.00) {$e_{4}$};
\node[below,font=\small] at (8.10,0.00) {$e_{5}$};
%top edge labels (default is empty)
\node[above,font=\small] at (0.90,0.00) {};
\node[above,font=\small] at (2.70,0.00) {};
\node[above,font=\small] at (4.50,0.00) {};
\node[above,font=\small] at (6.30,0.00) {};
\node[above,font=\small] at (8.10,0.00) {};
%bottom vertex labels (default is v0,...,v{k+1})
\node[below=2pt,font=\small] at (0.00,0.00) {$v_{0}$};
\node[below=2pt,font=\small] at (1.80,0.00) {$v_{1}$};
\node[below=2pt,font=\small] at (3.60,0.00) {$v_{2}$};
\node[below=2pt,font=\small] at (5.40,0.00) {$v_{3}$};
\node[below=2pt,font=\small] at (7.20,0.00) {$v_{4}$};
\node[below=2pt,font=\small] at (9.00,0.00) {$v_{5}$};
%top vertex labels (default is empty)
\node[above=2pt,font=\small] at (0.00,0.00) {};
\node[above=2pt,font=\small] at (1.80,0.00) {};
\node[above=2pt,font=\small] at (3.60,0.00) {};
\node[above=2pt,font=\small] at (5.40,0.00) {};
\node[above=2pt,font=\small] at (7.20,0.00) {};
\node[above=2pt,font=\small] at (9.00,0.00) {};
\end{scope}
\draw[->, thick] (4.500000,-1.000000) to (4.500000,-2.000000);
%second scope
\begin{scope}[yshift=-3cm]
%endpoints
\node[fill=red,endpoint] (v0) at (0.000000,0.000000){};
\node[endpoint] (v3) at (9.000000,0.000000){};
%internal vertices
\node[fill=red,vertex] (v1) at (3.00,0.00){};
\node[fill=red,vertex] (v2) at (6.00,0.00){};
%edges
\draw[red] (v0)--(v1);
\draw[red] (v1)--(v2);
\draw (v2)--(v3);
%bottom edge labels (default is e_1,...,e_{k+1})
\node[below,font=\small] at (1.50,0.00) {$e'_{1}$};
\node[below,font=\small] at (4.50,0.00) {$e'_{2}$};
\node[below,font=\small] at (7.50,0.00) {$e'_{3}$};
%top edge labels (default is empty)
\node[above,font=\small] at (1.50,0.00) {};
\node[above,font=\small] at (4.50,0.00) {};
\node[above,font=\small] at (7.50,0.00) {};
%bottom vertex labels (default is v0,...,v{k+1})
\node[below=2pt,font=\small] at (0.00,0.00) {$v_{0}$};
\node[below=2pt,font=\small] at (3.00,0.00) {$v'_{1}$};
\node[below=2pt,font=\small] at (6.00,0.00) {$v'_{2}$};
\node[below=2pt,font=\small] at (9.00,0.00) {$v_{5}$};
%top vertex labels (default is empty)
\node[above=2pt,font=\small] at (0.00,0.00) {};
\node[above=2pt,font=\small] at (3.00,0.00) {};
\node[above=2pt,font=\small] at (6.00,0.00) {};
\node[above=2pt,font=\small] at (9.00,0.00) {};
\end{scope}
\end{tikzpicture}
\caption{Visualization of Case 1 from the proof of \Cref{lem:rrule2_cor_forward} for $i=2$. In the reduced instance $(G',s,x',w')$ (bottom), the solution uses the vertices $v_1',v_2'$ and the edges $e_1', e_2'$. The vertices and edges highlighted in red are touched by the considered solutions. The vertex $v_5$ may or may not be in the solution.}
        \label{fig:reduction_rule_deg2_case1}
\end{figure}

        \item[Case 2] $E(T)\cap \{e_1,\ldots,e_{k+1}\} = \{e_{i+1},\ldots,e_{k+1}\} \text{ for some $i\in[k]$}$: Let $T':=T|_{G'}+ \{e_2',e_3'\}$. See \Cref{fig:reduction_rule_deg2_case2} for illustration. We have:

        \begin{align*}
            x'(T')-w'(T')&=\left(x(T)-\sum_{\ell=i}^kx(v_{\ell})+x_1+x_2\right) - \left(w(T)-\sum_{\ell=i}^kw(e_{\ell+1})+w_2+w_3\right)=\\&=x(T)-w(T)-\suffix_i+x_1+x_2-w_2-w_3=\\&=x(T)-w(T)-\suffix_i+\suffix >0
        \end{align*}
        Note that $x(T)-w(T)>0$ because $T$ is a solution and $\suffix-\suffix_i\geq 0$ by the definition of $\suffix$.\\

%Figure for case 2
\begin{figure}[H]
    \centering
  \begin{tikzpicture}[
        vertex/.style={circle,draw,inner sep=2pt},
        endpoint/.style={rectangle,draw,inner sep=3pt}
]%first scope
\begin{scope}[yshift=0cm]
%endpoints
\node[endpoint] (v0) at (0.000000,0.000000){};
\node[fill=red,endpoint] (v5) at (9.000000,0.000000){};
%internal vertices
\node[vertex] (v1) at (1.80,0.00){};
\node[vertex] (v2) at (3.60,0.00){};
\node[fill=red,vertex] (v3) at (5.40,0.00){};
\node[fill=red,vertex] (v4) at (7.20,0.00){};
%edges
\draw (v0)--(v1);
\draw (v1)--(v2);
\draw (v2)--(v3);
\draw[red] (v3)--(v4);
\draw[red] (v4)--(v5);
%bottom edge labels (default is e_1,...,e_{k+1})
\node[below,font=\small] at (0.90,0.00) {$e_{1}$};
\node[below,font=\small] at (2.70,0.00) {$e_{2}$};
\node[below,font=\small] at (4.50,0.00) {$e_{3}$};
\node[below,font=\small] at (6.30,0.00) {$e_{4}$};
\node[below,font=\small] at (8.10,0.00) {$e_{5}$};
%top edge labels (default is empty)
\node[above,font=\small] at (0.90,0.00) {};
\node[above,font=\small] at (2.70,0.00) {};
\node[above,font=\small] at (4.50,0.00) {};
\node[above,font=\small] at (6.30,0.00) {};
\node[above,font=\small] at (8.10,0.00) {};
%bottom vertex labels (default is v0,...,v{k+1})
\node[below=2pt,font=\small] at (0.00,0.00) {$v_{0}$};
\node[below=2pt,font=\small] at (1.80,0.00) {$v_{1}$};
\node[below=2pt,font=\small] at (3.60,0.00) {$v_{2}$};
\node[below=2pt,font=\small] at (5.40,0.00) {$v_{3}$};
\node[below=2pt,font=\small] at (7.20,0.00) {$v_{4}$};
\node[below=2pt,font=\small] at (9.00,0.00) {$v_{5}$};
%top vertex labels (default is empty)
\node[above=2pt,font=\small] at (0.00,0.00) {};
\node[above=2pt,font=\small] at (1.80,0.00) {};
\node[above=2pt,font=\small] at (3.60,0.00) {};
\node[above=2pt,font=\small] at (5.40,0.00) {};
\node[above=2pt,font=\small] at (7.20,0.00) {};
\node[above=2pt,font=\small] at (9.00,0.00) {};
\end{scope}
\draw[->, thick] (4.500000,-1.000000) to (4.500000,-2.000000);
%second scope
\begin{scope}[yshift=-3cm]
%endpoints
\node[endpoint] (v0) at (0.000000,0.000000){};
\node[fill=red,endpoint] (v3) at (9.000000,0.000000){};
%internal vertices
\node[fill=red,vertex] (v1) at (3.00,0.00){};
\node[fill=red,vertex] (v2) at (6.00,0.00){};
%edges
\draw (v0)--(v1);
\draw[red] (v1)--(v2);
\draw[red] (v2)--(v3);
%bottom edge labels (default is e_1,...,e_{k+1})
\node[below,font=\small] at (1.50,0.00) {$e'_{1}$};
\node[below,font=\small] at (4.50,0.00) {$e'_{2}$};
\node[below,font=\small] at (7.50,0.00) {$e'_{3}$};
%top edge labels (default is empty)
\node[above,font=\small] at (1.50,0.00) {};
\node[above,font=\small] at (4.50,0.00) {};
\node[above,font=\small] at (7.50,0.00) {};
%bottom vertex labels (default is v0,...,v{k+1})
\node[below=2pt,font=\small] at (0.00,0.00) {$v_{0}$};
\node[below=2pt,font=\small] at (3.00,0.00) {$v'_{1}$};
\node[below=2pt,font=\small] at (6.00,0.00) {$v'_{2}$};
\node[below=2pt,font=\small] at (9.00,0.00) {$v_{5}$};
%top vertex labels (default is empty)
\node[above=2pt,font=\small] at (0.00,0.00) {};
\node[above=2pt,font=\small] at (3.00,0.00) {};
\node[above=2pt,font=\small] at (6.00,0.00) {};
\node[above=2pt,font=\small] at (9.00,0.00) {};
\end{scope}
\end{tikzpicture}
\caption{Visualization of Case 2 from the proof of \Cref{lem:rrule2_cor_forward} for $i=2$. Note that for the suffixes the indices are shifted by $1$. In the reduced instance $(G',s,x',w')$ (bottom), the solution uses the vertices $v_1',v_2'$ and the edges $e_2', e_3'$. The vertices and edges highlighted in red are touched by the considered solutions. The vertex $v_0$ may or may not be in the solution.}
\label{fig:reduction_rule_deg2_case2}
\end{figure}

        \item[Case 3]  $E(T)\cap \{e_1,\ldots,e_{k+1}\} = \{e_1,\ldots,e_{k+1}\}$: Let $T':=T|_{G'}+\{e_1',e_2',e_3'\}$. See \Cref{fig:reduction_rule_deg2_case3} for illustration. We have:
        \begin{align*}
            x'(T')-w'(T')&=\left(x(T)-\sum_{\ell=1}^kx(v_{\ell})+x_1+x_2\right)-\left(w(T)-\sum_{\ell=1}^{k+1}w(e_{\ell})+w_1+w_2+w_3\right)=\\&=x(T)-w(T)-\total+x_1+x_2-w_1-w_2-w_3 >0
        \end{align*}
        Note that $x(T)-w(T)>0$ because $T$ is a solution and $-\total+x_1+x_2-w_1-w_2-w_3 = 0$ by \Cref{claim:rr2_ineq}. \\

%figure 3 entire path 
\begin{figure}[H]
    \centering
    \begin{tikzpicture}[
        vertex/.style={circle,draw,inner sep=2pt},
        endpoint/.style={rectangle,draw,inner sep=3pt}
]%first scope
\begin{scope}[yshift=0cm]
%endpoints
\node[fill=red,endpoint] (v0) at (0.000000,0.000000){};
\node[fill=red,endpoint] (v5) at (9.000000,0.000000){};
%internal vertices
\node[fill=red,vertex] (v1) at (1.80,0.00){};
\node[fill=red,vertex] (v2) at (3.60,0.00){};
\node[fill=red,vertex] (v3) at (5.40,0.00){};
\node[fill=red,vertex] (v4) at (7.20,0.00){};
%edges
\draw[red] (v0)--(v1);
\draw[red] (v1)--(v2);
\draw[red] (v2)--(v3);
\draw[red] (v3)--(v4);
\draw[red] (v4)--(v5);
%bottom edge labels (default is e_1,...,e_{k+1})
\node[below,font=\small] at (0.90,0.00) {$e_{1}$};
\node[below,font=\small] at (2.70,0.00) {$e_{2}$};
\node[below,font=\small] at (4.50,0.00) {$e_{3}$};
\node[below,font=\small] at (6.30,0.00) {$e_{4}$};
\node[below,font=\small] at (8.10,0.00) {$e_{5}$};
%top edge labels (default is empty)
\node[above,font=\small] at (0.90,0.00) {};
\node[above,font=\small] at (2.70,0.00) {};
\node[above,font=\small] at (4.50,0.00) {};
\node[above,font=\small] at (6.30,0.00) {};
\node[above,font=\small] at (8.10,0.00) {};
%bottom vertex labels (default is v0,...,v{k+1})
\node[below=2pt,font=\small] at (0.00,0.00) {$v_{0}$};
\node[below=2pt,font=\small] at (1.80,0.00) {$v_{1}$};
\node[below=2pt,font=\small] at (3.60,0.00) {$v_{2}$};
\node[below=2pt,font=\small] at (5.40,0.00) {$v_{3}$};
\node[below=2pt,font=\small] at (7.20,0.00) {$v_{4}$};
\node[below=2pt,font=\small] at (9.00,0.00) {$v_{5}$};
%top vertex labels (default is empty)
\node[above=2pt,font=\small] at (0.00,0.00) {};
\node[above=2pt,font=\small] at (1.80,0.00) {};
\node[above=2pt,font=\small] at (3.60,0.00) {};
\node[above=2pt,font=\small] at (5.40,0.00) {};
\node[above=2pt,font=\small] at (7.20,0.00) {};
\node[above=2pt,font=\small] at (9.00,0.00) {};
\end{scope}
\draw[->, thick] (4.500000,-1.000000) to (4.500000,-2.000000);
%second scope
\begin{scope}[yshift=-3cm]
%endpoints
\node[fill=red,endpoint] (v0) at (0.000000,0.000000){};
\node[fill=red,endpoint] (v3) at (9.000000,0.000000){};
%internal vertices
\node[fill=red,vertex] (v1) at (3.00,0.00){};
\node[fill=red,vertex] (v2) at (6.00,0.00){};
%edges
\draw[red] (v0)--(v1);
\draw[red] (v1)--(v2);
\draw[red] (v2)--(v3);
%bottom edge labels (default is e_1,...,e_{k+1})
\node[below,font=\small] at (1.50,0.00) {$e'_{1}$};
\node[below,font=\small] at (4.50,0.00) {$e'_{2}$};
\node[below,font=\small] at (7.50,0.00) {$e'_{3}$};
%top edge labels (default is empty)
\node[above,font=\small] at (1.50,0.00) {};
\node[above,font=\small] at (4.50,0.00) {};
\node[above,font=\small] at (7.50,0.00) {};
%bottom vertex labels (default is v0,...,v{k+1})
\node[below=2pt,font=\small] at (0.00,0.00) {$v_{0}$};
\node[below=2pt,font=\small] at (3.00,0.00) {$v'_{1}$};
\node[below=2pt,font=\small] at (6.00,0.00) {$v'_{2}$};
\node[below=2pt,font=\small] at (9.00,0.00) {$v_{5}$};
%top vertex labels (default is empty)
\node[above=2pt,font=\small] at (0.00,0.00) {};
\node[above=2pt,font=\small] at (3.00,0.00) {};
\node[above=2pt,font=\small] at (6.00,0.00) {};
\node[above=2pt,font=\small] at (9.00,0.00) {};
\end{scope}
\end{tikzpicture}

\caption{Visualization of Case 3 from the proof of \Cref{lem:rrule2_cor_forward}. In the reduced instance $(G',s,x',w')$ (bottom), the solution uses the vertices $v_1',v_2'$ and the edges $e_1', e_2', e_3'$. The vertices and edges highlighted in red are touched by the considered solutions.} 
\label{fig:reduction_rule_deg2_case3}
\end{figure}

        \item[Case 4] $E(T)\cap \{e_1,\ldots, e_{k+1}\}=\{e_1,e_2,\ldots,e_i\}\cup \{e_{j+1},\ldots,e_{k+1}\} \text{ for some $i<j$}$: Let $T':=T|_{G'}+ \{e_1',e_3'\}$. See \Cref{fig:reduction_rule_deg2_case4} for illustration. We have:
        \begin{align*}
            x'(T')-w'(T')&=\left(x(T)-\sum_{\ell=1}^{i}x(v_{\ell})-\sum_{\ell=j}^kx(v_\ell)+x_1+x_2\right)-
            \\&-\left(w(T)-\sum_{\ell=1}^iw(e_\ell)-\sum_{\ell=j}^kw(e_{\ell+1})+w_1+w_3\right)=\\&=x(T)-w(T)-\prefix_i-\suffix_j+x_1+x_2-w_1-w_3=\\&=x(T)-w(T)-\prefix_i-\suffix_j+\combined >0
        \end{align*}
        Note that $x(T)-w(T)>0$ because $T$ is a solution, $x_1+x_2-w_1-w_3 = \combined$ by \Cref{claim:rr2_ineq}   and $\combined\geq \prefix_i+\suffix_j$ because $i<j$, thus $-\prefix_i-\suffix_j+\combined\geq 0$.
        
    \end{description}
%Figure_4
\begin{figure}[H]
    \centering
    \begin{tikzpicture}[
        vertex/.style={circle,draw,inner sep=2pt},
        endpoint/.style={rectangle,draw,inner sep=3pt}
]%first scope
\begin{scope}[yshift=0cm]
%endpoints
\node[fill=red,endpoint] (v0) at (0.000000,0.000000){};
\node[fill=red,endpoint] (v5) at (9.000000,0.000000){};
%internal vertices
\node[fill=red,vertex] (v1) at (1.80,0.00){};
\node[fill=red,vertex] (v2) at (3.60,0.00){};
\node[fill=red,vertex] (v3) at (5.40,0.00){};
\node[fill=red,vertex] (v4) at (7.20,0.00){};
%edges
\draw[red] (v0)--(v1);
\draw[red] (v1)--(v2);
\draw (v2)--(v3);
\draw[red] (v3)--(v4);
\draw[red] (v4)--(v5);
%bottom edge labels (default is e_1,...,e_{k+1})
\node[below,font=\small] at (0.90,0.00) {$e_{1}$};
\node[below,font=\small] at (2.70,0.00) {$e_{2}$};
\node[below,font=\small] at (4.50,0.00) {$e_{3}$};
\node[below,font=\small] at (6.30,0.00) {$e_{4}$};
\node[below,font=\small] at (8.10,0.00) {$e_{5}$};
%top edge labels (default is empty)
\node[above,font=\small] at (0.90,0.00) {};
\node[above,font=\small] at (2.70,0.00) {};
\node[above,font=\small] at (4.50,0.00) {};
\node[above,font=\small] at (6.30,0.00) {};
\node[above,font=\small] at (8.10,0.00) {};
%bottom vertex labels (default is v0,...,v{k+1})
\node[below=2pt,font=\small] at (0.00,0.00) {$v_{0}$};
\node[below=2pt,font=\small] at (1.80,0.00) {$v_{1}$};
\node[below=2pt,font=\small] at (3.60,0.00) {$v_{2}$};
\node[below=2pt,font=\small] at (5.40,0.00) {$v_{3}$};
\node[below=2pt,font=\small] at (7.20,0.00) {$v_{4}$};
\node[below=2pt,font=\small] at (9.00,0.00) {$v_{5}$};
%top vertex labels (default is empty)
\node[above=2pt,font=\small] at (0.00,0.00) {};
\node[above=2pt,font=\small] at (1.80,0.00) {};
\node[above=2pt,font=\small] at (3.60,0.00) {};
\node[above=2pt,font=\small] at (5.40,0.00) {};
\node[above=2pt,font=\small] at (7.20,0.00) {};
\node[above=2pt,font=\small] at (9.00,0.00) {};
\end{scope}
\draw[->, thick] (4.500000,-1.000000) to (4.500000,-2.000000);
%second scope
\begin{scope}[yshift=-3cm]
%endpoints
\node[fill=red,endpoint] (v0) at (0.000000,0.000000){};
\node[fill=red,endpoint] (v3) at (9.000000,0.000000){};
%internal vertices
\node[fill=red,vertex] (v1) at (3.00,0.00){};
\node[fill=red,vertex] (v2) at (6.00,0.00){};
%edges
\draw[red] (v0)--(v1);
\draw (v1)--(v2);
\draw[red] (v2)--(v3);
%bottom edge labels (default is e_1,...,e_{k+1})
\node[below,font=\small] at (1.50,0.00) {$e'_{1}$};
\node[below,font=\small] at (4.50,0.00) {$e'_{2}$};
\node[below,font=\small] at (7.50,0.00) {$e'_{3}$};
%top edge labels (default is empty)
\node[above,font=\small] at (1.50,0.00) {};
\node[above,font=\small] at (4.50,0.00) {};
\node[above,font=\small] at (7.50,0.00) {};
%bottom vertex labels (default is v0,...,v{k+1})
\node[below=2pt,font=\small] at (0.00,0.00) {$v_{0}$};
\node[below=2pt,font=\small] at (3.00,0.00) {$v'_{1}$};
\node[below=2pt,font=\small] at (6.00,0.00) {$v'_{2}$};
\node[below=2pt,font=\small] at (9.00,0.00) {$v_{5}$};
%top vertex labels (default is empty)
\node[above=2pt,font=\small] at (0.00,0.00) {};
\node[above=2pt,font=\small] at (3.00,0.00) {};
\node[above=2pt,font=\small] at (6.00,0.00) {};
\node[above=2pt,font=\small] at (9.00,0.00) {};
\end{scope}
\end{tikzpicture}

\caption{Visualization of Case 4 from the proof of \Cref{lem:rrule2_cor_forward} for $i=2,j=3$. In the reduced instance $(G',s,x',w')$ (bottom), the solution uses the vertices $v_1',v_2'$ and the edges $e_1', e_3'$. The vertices and edges highlighted in red are touched by the considered solutions.}
\label{fig:reduction_rule_deg2_case4}
\end{figure}
This finishes the proof of \Cref{lem:rrule2_cor_forward}.
\end{proof} %lemma =>

\begin{lemma}\label{lem:rrule2_cor_backward}
    Let $(G,s,x,w)$ be an instance of \MSTGcore and let $(G',s,x',w')$ result from $(G,s,x,w)$ by application of \Cref{rrule:deg2}. If $(G',s,x',w')$ is a yes-instance, then $(G,s,x,w)$ is a yes-instance.
\end{lemma}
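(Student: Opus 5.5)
We mirror the proof of \Cref{lem:rrule2_cor_forward}, now translating a solution $T'$ of $(G',s,x',w')$ back into a solution $T$ of $(G,s,x,w)$.

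Let $T'$ be a solution, and among all solutions choose one maximizing $x'(T')-w'(T')$. If $T'$ avoids $v_1',v_2'$, then $T'$ is already a solution of $(G,s,x,w)$. Otherwise, since $s\notin\{v_1',v_2'\}$ and $T'$ is connected, the set $E(T')\cap\{e_1',e_2',e_3'\}$ takes one of a few shapes: $\{e_1'\}$, $\{e_1',e_2'\}$, $\{e_3'\}$, $\{e_2',e_3'\}$, $\{e_1',e_2',e_3'\}$, or $\{e_1',e_3'\}$.

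The two shapes containing a single edge leave a pendant reduced vertex. By \Cref{claim:rr2_ineq} we have $w_2\le x_2$ and $w_2\le x_1$, so that pendant vertex can be extended by $e_2'$ without decreasing $x'-w'$. This reduces $\{e_1'\}$ to $\{e_1',e_2'\}$ and $\{e_3'\}$ to $\{e_2',e_3'\}$. We only need $x'(T')-w'(T')>0$ to be preserved, and it is. So we are left with the four configurations of \Cref{lem:rrule2_cor_forward}.

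For each configuration we reverse the construction, using the maximizing index.
\begin{itemize}
\item For $\{e_1',e_2'\}$, pick $i$ with $\prefix_i=\prefix$ and set $T=T'|_G+\{e_1,\ldots,e_i\}$.
\item For $\{e_2',e_3'\}$, pick $i$ with $\suffix_i=\suffix$ and add $e_{i+1},\ldots,e_{k+1}$.
\item For all three edges, add the whole path $e_1,\ldots,e_{k+1}$.
\item For $\{e_1',e_3'\}$, pick $i<j$ with $\prefix_i+\suffix_j=\combined$ and add $e_1,\ldots,e_i$ together with $e_{j+1},\ldots,e_{k+1}$.
\end{itemize}
In each case the same telescoping computation as in the forward direction, combined with the four equalities of \Cref{claim:rr2_ineq}, gives $x(T)-w(T)=x'(T')-w'(T')>0$. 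For example, in the first case $x(T)-w(T)=x'(T')-w'(T')-(x_1+x_2-w_1-w_2)+\prefix_i=x'(T')-w'(T')$.

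The main obstacle is checking that each reconstructed $T$ is a tree. In the first two and the fourth configuration we only attach paths hanging off $v_0$ or $v_{k+1}$, so no cycle can form. The full-path configuration is the delicate one: $T'$ uses $e_1',e_2',e_3'$, so it contains both $v_0$ and $v_{k+1}$ connected only through the gadget. Since $T'$ is a tree there is no other $v_0$--$v_{k+1}$ path in $T'$, so replacing the gadget by the original path keeps $T$ a tree. The special case $v_0=v_{k+1}$ cannot occur in this configuration, because $T'$ would then contain a cycle. In the fourth configuration with $v_0=v_{k+1}$, attaching the two disjoint subpaths at the same vertex still gives a tree.
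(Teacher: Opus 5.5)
Your proof is correct and follows the paper's argument. The paper also first forces both $v_1',v_2'$ into $T'$ using $w_2\le x_1,x_2$, though it does so by taking $T'$ with $|V(T')|$ maximum rather than by your direct extension along $e_2'$; it then treats the four configurations $\{e_1',e_2'\}$, $\{e_2',e_3'\}$, $\{e_1',e_3'\}$, $\{e_1',e_2',e_3'\}$ by substituting the maximizing prefix, suffix, combined pair, or the whole path, and telescoping with the equalities of \Cref{claim:rr2_ineq}, exactly as you do. Your explicit check that the reconstructed $T$ is a tree, including the case $v_0=v_{k+1}$, supplies a detail the paper leaves implicit.
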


\begin{proof}
        Let $T'$ be a solution for $(G',s,x',w')$. Similarly as in the proof of \Cref{lem:rrule2_cor_forward}, if $T'$ avoids the newly created vertices $v_1'$ and $v_2'$, then clearly $T:=T'$ is a solution for $(G,s,x,w)$. Thus, let us assume that $V(T')\cap \{v_1',v_2'\}\neq \emptyset$ and moreover, assume that $|V(T')|$ is maximum.

        \begin{claim}\label{claim:both_v1_v2_inT}
            $\{v_1',v_2'\}\subseteq V(T')$
        \end{claim}
        \begin{proof} 
         For the sake of contradiction and without loss of generality, suppose that $V(T')\cap \{v_1',v_2'\}=\{v_1'\}$ (the other case is symmetric). We claim that in this case $T'':=T'+e_2'$ is also a solution to $(G',s,x',w')$, thus contradicting the maximality of $T'$. Observe that
        \[
        x'(T'')-w'(T'')=x'(S')+x_2-w'(T')-w_2>0
        \]
        because $x'(S')-w'(T')>0$ and $x_2-w_2\geq 0$ by \Cref{claim:rr2_ineq}. 
        \end{proof}
        
        Thus, we may assume that $V(T')\cap \{v_1',v_2'\}=\{v_1',v_2'\}$. We split the analysis based on the intersection $E(T')\cap \{e_1',e_2',e_3'\}$.

        \begin{claim}\label{claim:atleast_two_edges_inT}
            $|E(T')\cap \{e_1',e_2',e_3'\}|\geq 2$
        \end{claim}
        \begin{proof}
            We know by \Cref{claim:both_v1_v2_inT} that both $v_1'$ and $v_2'$ are necessarily in $T'$. Let $E'=E(T')\cap \{e_1',e_2',e_3'\}$. If there are no edges in $E'$ it would imply that both $v_1'$ and $v_2'$ are isolated. If $|E'|=1$, then in the case $E'=\{e_2'\}$, $\{v_1',v_2'\}$ form a single connected component, thus $T'$ is disconnected. Finally, if $E'=\{e_1'\}$ or $E'=\{e_3'\}$, then either $v_2'$ or $v_1'$ are isolated in $T'$, thus $T'$ is disconnected. Note that since $T'$ is a tree, it cannot be disconnected.
        \end{proof}

        By \Cref{claim:atleast_two_edges_inT} we can assume $|E(T')\cap \{e_1',e_2',e_3'\}|\geq 2$. There are thus $4$ cases to consider. Before we delve into them, let us introduce some notation. Let $i_\prefix\in[k]$ be an arbitrary index such that $\prefix_{i_\prefix}=\prefix$, similarly, let $i_\suffix\in[k]$ be an arbitrary index such that $\suffix_{i_{\suffix}}=\suffix$ and $(i_\combined,j_\combined)\in \{(i,j)\mid i<j,i,j\in [k]\}$ such that $\prefix_{i_\combined}+\suffix_{j_{\combined}}=\combined$.
            \begin{description}
                \item[Case 1] $E(T')\cap \{e_1',e_2',e_3'\}=\{e_1',e_2'\}$: Let $T=T'|_{G}+ \{e_1,e_2,\ldots,e_{i_\prefix}\}$. See \Cref{fig:label_backwardirr_case1} for illustration. We have
                \begin{align*}
                x(T)-w(T)&=\left(x(T')-x_1-x_2+\sum_{\ell=1}^{i_\prefix}x(v_\ell)\right)-\left(w(T')-w_1-w_2+\sum_{\ell=1}^{i_\prefix}w(e_\ell)\right)=\\&=x(T')-w(T')>0    
                \end{align*}
                because $-x_1-x_2+w_1+w_2=-\prefix$ by \Cref{claim:rr2_ineq} and $\prefix=\prefix_{i_\prefix}$. 
%Figure_case_1_claim_
\begin{figure}[H]
    \centering
    \begin{tikzpicture}[
	vertex/.style={circle,draw,inner sep=2pt},
	endpoint/.style={rectangle,draw,inner sep=3pt}
]%first scope
\begin{scope}[yshift=0cm]
%endpoints
\node[fill=red,endpoint] (v0) at (0.000000,0.000000){};
\node[endpoint] (v3) at (9.000000,0.000000){};
%internal vertices
\node[fill=red,vertex] (v1) at (3.00,0.00){};
\node[fill=red,vertex] (v2) at (6.00,0.00){};
%edges
\draw[red] (v0)--(v1);
\draw[red] (v1)--(v2);
\draw (v2)--(v3);
%bottom edge labels (default is e_1,...,e_{k+1})
\node[below,font=\small] at (1.50,0.00) {$e'_{1}$};
\node[below,font=\small] at (4.50,0.00) {$e'_{2}$};
\node[below,font=\small] at (7.50,0.00) {$e'_{3}$};
%top edge labels (default is empty)
\node[above,font=\small] at (1.50,0.00) {};
\node[above,font=\small] at (4.50,0.00) {};
\node[above,font=\small] at (7.50,0.00) {};
%bottom vertex labels (default is v0,...,v{k+1})
\node[below=2pt,font=\small] at (0.00,0.00) {$v_{0}$};
\node[below=2pt,font=\small] at (3.00,0.00) {$v'_{1}$};
\node[below=2pt,font=\small] at (6.00,0.00) {$v'_{2}$};
\node[below=2pt,font=\small] at (9.00,0.00) {$v_{k+1}$};
%top vertex labels (default is empty)
\node[above=2pt,font=\small] at (0.00,0.00) {};
\node[above=2pt,font=\small] at (3.00,0.00) {};
\node[above=2pt,font=\small] at (6.00,0.00) {};
\node[above=2pt,font=\small] at (9.00,0.00) {};
\end{scope}
\draw[->, thick] (4.500000,-1.000000) to (4.500000,-2.000000);
%second scope
\begin{scope}[yshift=-3cm]
%endpoints
\node[fill=red,endpoint] (v0) at (0.000000,0.000000){};
\node[endpoint] (v6) at (9.000000,0.000000){};
%internal vertices
\node[fill=red,vertex] (v1) at (1.50,0.00){};
\node[fill=red,vertex] (v2) at (3.00,0.00){};
\node[] (v3) at (4.50,0.00){\textcolor{red}{$\cdots$}};
\node[fill=red,vertex] (v4) at (6.00,0.00){};
\node[] (v5) at (7.50,0.00){$\cdots$};
%edges
\draw[red] (v0)--(v1);
\draw[red] (v1)--(v2);
\draw[red] (v2)--(v3);
\draw[red] (v3)--(v4);
\draw (v4)--(v5);
\draw (v5)--(v6);
%bottom edge labels (default is e_1,...,e_{k+1})
\node[below,font=\small] at (0.75,0.00) {$e_{1}$};
\node[below,font=\small] at (2.25,0.00) {$e_{2}$};
\node[below,font=\small] at (3.75,0.00) {$e_{3}$};
\node[below,font=\small] at (5.25,0.00) {$e_{{i_{\prefix}}}$};
\node[below,font=\small] at (6.75,0.00) {$e_{i_{\prefix}+1}$};
\node[below,font=\small] at (8.25,0.00) {$e_{k}$};
%top edge labels (default is empty)
\node[above,font=\small] at (0.75,0.00) {};
\node[above,font=\small] at (2.25,0.00) {};
\node[above,font=\small] at (3.75,0.00) {};
\node[above,font=\small] at (5.25,0.00) {};
\node[above,font=\small] at (6.75,0.00) {};
\node[above,font=\small] at (8.25,0.00) {};
%bottom vertex labels (default is v0,...,v{k+1})
\node[below=2pt,font=\small] at (0.00,0.00) {$v_{0}$};
\node[below=2pt,font=\small] at (1.50,0.00) {$v_{1}$};
\node[below=2pt,font=\small] at (3.00,0.00) {$v_{2}$};
\node[below=2pt,font=\small] at (4.50,0.00) {};
\node[below=2pt,font=\small] at (6.00,0.00) {$v_{i_{\prefix}}$};
\node[below=2pt,font=\small] at (7.50,0.00) {};
\node[below=2pt,font=\small] at (9.00,0.00) {$v_{k+1}$};
%top vertex labels (default is empty)
\node[above=2pt,font=\small] at (0.00,0.00) {};
\node[above=2pt,font=\small] at (1.50,0.00) {};
\node[above=2pt,font=\small] at (3.00,0.00) {};
\node[above=2pt,font=\small] at (4.50,0.00) {};
\node[above=2pt,font=\small] at (6.00,0.00) {};
\node[above=2pt,font=\small] at (7.50,0.00) {};
\node[above=2pt,font=\small] at (9.00,0.00) {};
\end{scope}
\end{tikzpicture}
\caption{Visualization of Case 1 from the proof of \Cref{lem:rrule2_cor_backward}. In the original instance $(G,s,x,w)$ (bottom), the solution uses the best prefix of the path. The vertices and edges highlighted in red are touched by the solutions. Note that the vertex $v_{k+1}$ may or may not be in the solutions.}
\label{fig:label_backwardirr_case1}
\end{figure}
                \item[Case 2] $E(T')\cap \{e_1',e_2',e_3'\}=\{e_2',e_3'\}$:
                Let $T=T'|_{G}+ \{e_{i_\suffix+1},e_{i_\suffix+2},\ldots,e_{k+1}\}$. See \Cref{fig:label_backwardirr_case2} for illustration. We have
                \begin{align*}
                        x(T)-w(T)&=\left(x(T')-x_1-x_2+\sum_{\ell=i_\suffix}^kx(v_\ell)\right)-\left(w(T')-w_2-w_3+\sum_{\ell = i_\suffix}^kw(e_{\ell+1})\right)=\\&=x(T')-w(T')>0
                \end{align*}
                because $-x_1-x_2+w_2+w_3=-\suffix$ by \Cref{claim:rr2_ineq} and $\suffix=\suffix_{i_\suffix}$. 
%Figure_case_2_claim_12
\begin{figure}[H]
    \centering
    \begin{tikzpicture}[
	vertex/.style={circle,draw,inner sep=2pt},
	endpoint/.style={rectangle,draw,inner sep=3pt}
]%first scope
\begin{scope}[yshift=0cm]
%endpoints
\node[endpoint] (v0) at (0.000000,0.000000){};
\node[endpoint,fill=red] (v3) at (9.000000,0.000000){};
%internal vertices
\node[vertex,fill=red] (v1) at (3.00,0.00){};
\node[vertex,fill=red] (v2) at (6.00,0.00){};
%edges
\draw (v0)--(v1);
\draw[red] (v1)--(v2);
\draw[red] (v2)--(v3);
%bottom edge labels (default is e_1,...,e_{k+1})
\node[below,font=\small] at (1.50,0.00) {$e'_{1}$};
\node[below,font=\small] at (4.50,0.00) {$e'_{2}$};
\node[below,font=\small] at (7.50,0.00) {$e'_{3}$};
%top edge labels (default is empty)
\node[above,font=\small] at (1.50,0.00) {};
\node[above,font=\small] at (4.50,0.00) {};
\node[above,font=\small] at (7.50,0.00) {};
%bottom vertex labels (default is v0,...,v{k+1})
\node[below=2pt,font=\small] at (0.00,0.00) {$v_{0}$};
\node[below=2pt,font=\small] at (3.00,0.00) {$v'_{1}$};
\node[below=2pt,font=\small] at (6.00,0.00) {$v'_{2}$};
\node[below=2pt,font=\small] at (9.00,0.00) {$v_{k+1}$};
%top vertex labels (default is empty)
\node[above=2pt,font=\small] at (0.00,0.00) {};
\node[above=2pt,font=\small] at (3.00,0.00) {};
\node[above=2pt,font=\small] at (6.00,0.00) {};
\node[above=2pt,font=\small] at (9.00,0.00) {};
\end{scope}
\draw[->, thick] (4.500000,-1.000000) to (4.500000,-2.000000);
%second scope
\begin{scope}[yshift=-3cm]
%endpoints
\node[endpoint] (v0) at (0.000000,0.000000){};
\node[endpoint,fill=red] (v6) at (9.000000,0.000000){};
%internal vertices
\node[] (v1) at (1.50,0.00){$\cdots$};
\node[vertex,fill=red] (v2) at (3.00,0.00){};
\node[] (v3) at (4.50,0.00){\textcolor{red}{$\cdots$}};
\node[vertex,fill=red] (v4) at (6.00,0.00){};
\node[vertex,fill=red] (v5) at (7.50,0.00){};
%edges
\draw (v0)--(v1);
\draw (v1)--(v2);
\draw[red] (v2)--(v3);
\draw[red] (v3)--(v4);
\draw[red] (v4)--(v5);
\draw[red] (v5)--(v6);
%bottom edge labels (default is e_1,...,e_{k+1})
\node[below,font=\small] at (0.75,0.00) {$e_{1}$};
\node[below,font=\small] at (2.25,0.00) {$e_{i_{\suffix}}$};
\node[below,font=\small] at (3.75,0.00) {$e_{i_{\suffix}+1}$};
\node[below,font=\small] at (5.25,0.00) {$e_{k-1}$};
\node[below,font=\small] at (6.75,0.00) {$e_{k}$};
\node[below,font=\small] at (8.25,0.00) {$e_{k+1}$};
%top edge labels (default is empty)
\node[above,font=\small] at (0.75,0.00) {};
\node[above,font=\small] at (2.25,0.00) {};
\node[above,font=\small] at (3.75,0.00) {};
\node[above,font=\small] at (5.25,0.00) {};
\node[above,font=\small] at (6.75,0.00) {};
\node[above,font=\small] at (8.25,0.00) {};
%bottom vertex labels (default is v0,...,v{k+1})
\node[below=2pt,font=\small] at (0.00,0.00) {$v_{0}$};
\node[below=2pt,font=\small] at (1.50,0.00) {};
\node[below=2pt,font=\small] at (3.00,0.00) {$v_{i_{\suffix}}$};
\node[below=2pt,font=\small] at (4.50,0.00) {};
\node[below=2pt,font=\small] at (6.00,0.00) {$v_{k-1}$};
\node[below=2pt,font=\small] at (7.50,0.00) {$v_{k}$};
\node[below=2pt,font=\small] at (9.00,0.00) {$v_{k+1}$};
%top vertex labels (default is empty)
\node[above=2pt,font=\small] at (0.00,0.00) {};
\node[above=2pt,font=\small] at (1.50,0.00) {};
\node[above=2pt,font=\small] at (3.00,0.00) {};
\node[above=2pt,font=\small] at (4.50,0.00) {};
\node[above=2pt,font=\small] at (6.00,0.00) {};
\node[above=2pt,font=\small] at (7.50,0.00) {};
\node[above=2pt,font=\small] at (9.00,0.00) {};
\end{scope}
\end{tikzpicture}

\caption{Visualization of Case 2 from the proof of \Cref{lem:rrule2_cor_backward}. In the original instance $(G,s,x,w)$ (bottom), the solution uses the best suffix of the path. The vertices and edges highlighted in red are touched by the solutions. The vertex $v_0$ may or may not be touched by the solutions.}
\label{fig:label_backwardirr_case2}
\end{figure}

                \item[Case 3] $E(T')\cap \{e_1',e_2',e_3'\}=\{e_1',e_3'\}$: Let $T=T'|_G+ \{e_1,e_2,\ldots,e_{i_\combined}\}+\{e_{j_\combined+1},e_{j_\combined+2},\ldots,e_{k+1}\}$.
                Recall that $i_\combined < j_\combined$. See \Cref{fig:label_backwardirr_case3} for illustration.
                We have
                \begin{align*}
                    x(T)-w(T)&=\left(x(T')-x_1-x_2+\sum_{\ell=1}^{i_\combined}x(v_{\ell})+\sum_{\ell=j_\combined}^kx(v_\ell)\right)-\\&-\left(w(T')-w_1-w_3+\sum_{\ell=1}^{i_\combined}w(e_{\ell})+\sum_{\ell=j_\combined}^kw(e_{\ell+1})\right)=\\&=x(T')-w(T')>0
                \end{align*}
                because $-x_1-x_2+w_1+w_3=-\combined$ by \Cref{claim:rr2_ineq} and $\combined=\prefix_{i_\combined}+\suffix_{j_\combined}$.
%Figure_case_3_claim12
\begin{figure}[H]
    \centering
     \begin{tikzpicture}[
	vertex/.style={circle,draw,inner sep=2pt},
	endpoint/.style={rectangle,draw,inner sep=3pt}
]%first scope
\begin{scope}[yshift=0cm]
%endpoints
\node[endpoint,fill=red] (v0) at (0.000000,0.000000){};
\node[endpoint,fill=red] (v3) at (9.000000,0.000000){};
%internal vertices
\node[vertex,fill=red] (v1) at (3.00,0.00){};
\node[vertex,fill=red] (v2) at (6.00,0.00){};
%edges
\draw[red] (v0)--(v1);
\draw (v1)--(v2);
\draw[red] (v2)--(v3);
%bottom edge labels (default is e_1,...,e_{k+1})
\node[below,font=\small] at (1.50,0.00) {$e'_{1}$};
\node[below,font=\small] at (4.50,0.00) {$e'_{2}$};
\node[below,font=\small] at (7.50,0.00) {$e'_{3}$};
%top edge labels (default is empty)
\node[above,font=\small] at (1.50,0.00) {};
\node[above,font=\small] at (4.50,0.00) {};
\node[above,font=\small] at (7.50,0.00) {};
%bottom vertex labels (default is v0,...,v{k+1})
\node[below=2pt,font=\small] at (0.00,0.00) {$v_{0}$};
\node[below=2pt,font=\small] at (3.00,0.00) {$v'_{1}$};
\node[below=2pt,font=\small] at (6.00,0.00) {$v'_{2}$};
\node[below=2pt,font=\small] at (9.00,0.00) {$v_{k+1}$};
%top vertex labels (default is empty)
\node[above=2pt,font=\small] at (0.00,0.00) {};
\node[above=2pt,font=\small] at (3.00,0.00) {};
\node[above=2pt,font=\small] at (6.00,0.00) {};
\node[above=2pt,font=\small] at (9.00,0.00) {};
\end{scope}
\draw[->, thick] (4.500000,-1.000000) to (4.500000,-2.000000);
%second scope
\begin{scope}[yshift=-3cm]
%endpoints
\node[endpoint,fill=red] (v0) at (0.000000,0.000000){};
\node[endpoint,fill=red] (v6) at (9.000000,0.000000){};
%internal vertices
\node[] (v1) at (1.50,0.00){\textcolor{red}{$\cdots$}};
\node[vertex,fill=red] (v2) at (3.00,0.00){};
\node[] (v3) at (4.50,0.00){$\cdots$};
\node[vertex,fill=red] (v4) at (6.00,0.00){};
\node[] (v5) at (7.50,0.00){\textcolor{red}{$\cdots$}};
%edges
\draw[red] (v0)--(v1);
\draw[red] (v1)--(v2);
\draw (v2)--(v3);
\draw (v3)--(v4);
\draw[red] (v4)--(v5);
\draw[red] (v5)--(v6);
%bottom edge labels (default is e_1,...,e_{k+1})
\node[below,font=\small] at (0.75,0.00) {$e_{1}$};
\node[below,font=\small] at (2.25,0.00) {$e_{i_{\combined}}$};
\node[below,font=\small] at (3.75,0.00) {$e_{i_{\combined}+1}$};
\node[below,font=\small] at (5.25,0.00) {$e_{j_{\combined}}$};
\node[below,font=\small] at (6.75,0.00) {$e_{j_{\combined}+1}$};
\node[below,font=\small] at (8.25,0.00) {$e_{k+1}$};
%top edge labels (default is empty)
\node[above,font=\small] at (0.75,0.00) {};
\node[above,font=\small] at (2.25,0.00) {};
\node[above,font=\small] at (3.75,0.00) {};
\node[above,font=\small] at (5.25,0.00) {};
\node[above,font=\small] at (6.75,0.00) {};
\node[above,font=\small] at (8.25,0.00) {};
%bottom vertex labels (default is v0,...,v{k+1})
\node[below=2pt,font=\small] at (0.00,0.00) {$v_{0}$};
\node[below=2pt,font=\small] at (1.50,0.00) {};
\node[below=2pt,font=\small] at (3.00,0.00) {$v_{i_{\combined}}$};
\node[below=2pt,font=\small] at (4.50,0.00) {};
\node[below=2pt,font=\small] at (6.00,0.00) {$v_{j_{\combined}}$};
\node[below=2pt,font=\small] at (7.50,0.00) {};
\node[below=2pt,font=\small] at (9.00,0.00) {$v_{k+1}$};
%top vertex labels (default is empty)
\node[above=2pt,font=\small] at (0.00,0.00) {};
\node[above=2pt,font=\small] at (1.50,0.00) {};
\node[above=2pt,font=\small] at (3.00,0.00) {};
\node[above=2pt,font=\small] at (4.50,0.00) {};
\node[above=2pt,font=\small] at (6.00,0.00) {};
\node[above=2pt,font=\small] at (7.50,0.00) {};
\node[above=2pt,font=\small] at (9.00,0.00) {};
\end{scope}
\end{tikzpicture}

\caption{Visualization of Case 3 from the proof of \Cref{lem:rrule2_cor_backward}. In the original instance $(G,s,x,w)$ (bottom), the solution uses the best prefix and suffix combined according to the indices $i_{\combined}, j_{\combined}$. The vertices and edges highlighted in red are touched by the considered solutions.
}
\label{fig:label_backwardirr_case3}
\end{figure}

                \item[Case 4] $E(T')\cap \{e_1',e_2',e_3'\}=\{e_1',e_2',e_3'\}$: Let $T=T'|_G+ \{e_1,e_2,\ldots,e_{k+1}\}$. See \Cref{fig:label_backwardirr_case4} for illustration. We have
                \begin{align*}
                    x(T)-w(T)&=\left(x(T')-x_1-x_2+\sum_{\ell=1}^kx(v_\ell)\right)-\left(w(T')-w_1-w_2-w_3+\sum_{\ell=1}^{k+1}w(e_{\ell})\right)=\\&=x(T')-w(T')>0
                \end{align*}
                because $\tau = \sum_{\ell=1}^kx(v_\ell)-\sum_{\ell=1}^{k+1}w(e_{\ell})$ and $-x_1-x_2+w_1+w_2+w_3=-\total$ by \Cref{claim:rr2_ineq}. 
            \end{description}
%Figure_case_4_claim_12

\begin{figure}[H]
    \centering
     \begin{tikzpicture}[
	vertex/.style={circle,draw,inner sep=2pt},
	endpoint/.style={rectangle,draw,inner sep=3pt}
]%first scope
\begin{scope}[yshift=0cm]
%endpoints
\node[endpoint,fill=red] (v0) at (0.000000,0.000000){};
\node[endpoint,fill=red] (v3) at (9.000000,0.000000){};
%internal vertices
\node[vertex,fill=red] (v1) at (3.00,0.00){};
\node[vertex,fill=red] (v2) at (6.00,0.00){};
%edges
\draw[red] (v0)--(v1);
\draw[red] (v1)--(v2);
\draw[red] (v2)--(v3);
%bottom edge labels (default is e_1,...,e_{k+1})
\node[below,font=\small] at (1.50,0.00) {$e'_{1}$};
\node[below,font=\small] at (4.50,0.00) {$e'_{2}$};
\node[below,font=\small] at (7.50,0.00) {$e'_{3}$};
%top edge labels (default is empty)
\node[above,font=\small] at (1.50,0.00) {};
\node[above,font=\small] at (4.50,0.00) {};
\node[above,font=\small] at (7.50,0.00) {};
%bottom vertex labels (default is v0,...,v{k+1})
\node[below=2pt,font=\small] at (0.00,0.00) {$v_{0}$};
\node[below=2pt,font=\small] at (3.00,0.00) {$v'_{1}$};
\node[below=2pt,font=\small] at (6.00,0.00) {$v'_{2}$};
\node[below=2pt,font=\small] at (9.00,0.00) {$v_{k+1}$};
%top vertex labels (default is empty)
\node[above=2pt,font=\small] at (0.00,0.00) {};
\node[above=2pt,font=\small] at (3.00,0.00) {};
\node[above=2pt,font=\small] at (6.00,0.00) {};
\node[above=2pt,font=\small] at (9.00,0.00) {};
\end{scope}
\draw[->, thick] (4.500000,-1.000000) to (4.500000,-2.000000);
%second scope
\begin{scope}[yshift=-3cm]
%endpoints
\node[endpoint,fill=red] (v0) at (0.000000,0.000000){};
\node[endpoint,fill=red] (v6) at (9.000000,0.000000){};
%internal vertices
\node[vertex,fill=red] (v1) at (1.50,0.00){};
\node[vertex,fill=red] (v2) at (3.00,0.00){};
\node (v3) at (4.50,0.00){\textcolor{red}{$\cdots$}};
\node[vertex,fill=red] (v4) at (6.00,0.00){};
\node[vertex,fill=red] (v5) at (7.50,0.00){};
%edges
\draw[red] (v0)--(v1);
\draw[red] (v1)--(v2);
\draw[red] (v2)--(v3);
\draw[red] (v3)--(v4);
\draw[red] (v4)--(v5);
\draw[red] (v5)--(v6);
%bottom edge labels (default is e_1,...,e_{k+1})
\node[below,font=\small] at (0.75,0.00) {$e_{1}$};
\node[below,font=\small] at (2.25,0.00) {$e_{2}$};
\node[below,font=\small] at (3.75,0.00) {$e_{3}$};
\node[below,font=\small] at (5.25,0.00) {$e_{k-1}$};
\node[below,font=\small] at (6.75,0.00) {$e_{k}$};
\node[below,font=\small] at (8.25,0.00) {$e_{k+1}$};
%top edge labels (default is empty)
\node[above,font=\small] at (0.75,0.00) {};
\node[above,font=\small] at (2.25,0.00) {};
\node[above,font=\small] at (3.75,0.00) {};
\node[above,font=\small] at (5.25,0.00) {};
\node[above,font=\small] at (6.75,0.00) {};
\node[above,font=\small] at (8.25,0.00) {};
%bottom vertex labels (default is v0,...,v{k+1})
\node[below=2pt,font=\small] at (0.00,0.00) {$v_{0}$};
\node[below=2pt,font=\small] at (1.50,0.00) {$v_{1}$};
\node[below=2pt,font=\small] at (3.00,0.00) {$v_{2}$};
\node[below=2pt,font=\small] at (4.50,0.00) {};
\node[below=2pt,font=\small] at (6.00,0.00) {$v_{k-1}$};
\node[below=2pt,font=\small] at (7.50,0.00) {$v_{k}$};
\node[below=2pt,font=\small] at (9.00,0.00) {$v_{k+1}$};
%top vertex labels (default is empty)
\node[above=2pt,font=\small] at (0.00,0.00) {};
\node[above=2pt,font=\small] at (1.50,0.00) {};
\node[above=2pt,font=\small] at (3.00,0.00) {};
\node[above=2pt,font=\small] at (4.50,0.00) {};
\node[above=2pt,font=\small] at (6.00,0.00) {};
\node[above=2pt,font=\small] at (7.50,0.00) {};
\node[above=2pt,font=\small] at (9.00,0.00) {};
\end{scope}
\end{tikzpicture}

\caption{Visualization of Case 4 from the proof of \Cref{lem:rrule2_cor_backward}. In the original instance $(G,s,x,w)$ (bottom), the solution uses the whole path. The vertices and edges highlighted in red are touched by the considered solutions.}
\label{fig:label_backwardirr_case4}
\end{figure}
        This finishes the proof of \Cref{lem:rrule2_cor_backward}.
    \end{proof}%lemma <=

}%toappendix, lemma correctness rrule 2 (contains sub-claim and two sub-lemmas)

\begin{apprestatable}{lemma}{lemboundfen}\label{lem:bound_fen}
    Let $(G,s,x,w)$ be an instance of \MSTGcore reduced with respect to \Cref{rrule:leaves,rrule:deg2}. Then $G$ has at most $12\fen(G)$ vertices and at most $13\fen(G)$ edges.
\end{apprestatable}

\toappendix{
In the proof of \Cref{lem:bound_fen} we utilize a folklore bound on the number of vertices of degree at least $3$ in terms of the number of leaves of a tree.

\begin{observation}[\cite{DvorakKOPSS25_ARXIV}]\label{obs:bound_on_vertices_of_atleast3}
    Let $H$ be a tree and $\ell$ the number of leaves of $H$. Then the number of vertices of degree at least $3$ in $H$ is at most $\ell-2$.
\end{observation}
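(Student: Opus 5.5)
The plan is a double-counting argument in a tree, where the count of leaves is bounded by the number of vertices of degree at least $3$. The statement to prove is the folklore fact immediately above. Let $H$ be a tree with $n$ vertices, $\ell$ leaves, $n_2$ vertices of degree exactly $2$, and $n_{\geq 3}$ vertices of degree at least $3$. We may assume $n\geq 2$: for $n=1$ the single vertex has degree $0$, so $n_{\geq 3}=0$, and the claim reads $0\leq \ell-2$. This fails for $\ell=0$, so the trivial one-vertex tree has to be excluded, or treated by convention. I would note this explicitly and restrict to $n\geq 2$, where every vertex has degree at least $1$ and $n=\ell+n_2+n_{\geq 3}$.

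The key step is the handshake lemma applied to $H$, which has $n-1$ edges:
\[
2(n-1)=\sum_{v\in V(H)}\deg(v)\geq \ell + 2n_2 + 3n_{\geq 3}.
\]
Substituting $n=\ell+n_2+n_{\geq 3}$ on the left-hand side gives
\[
2\ell+2n_2+2n_{\geq 3}-2\geq \ell+2n_2+3n_{\geq 3}.
\]
The $n_2$ terms cancel, and this rearranges to $n_{\geq 3}\leq \ell-2$, as claimed.

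There is no real obstacle here; the only delicate point is the degenerate case above. The bound needs every vertex to have degree at least $1$, which holds exactly when $n\geq 2$. I will therefore state the observation for trees with at least two vertices. This is also the only way it is used in the proof of \Cref{lem:bound_fen}, where it is applied to a spanning tree of a reduced connected graph that is not a single vertex.
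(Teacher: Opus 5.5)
Your handshake-lemma count is correct and is the standard argument for this fact; the paper gives no proof of its own and only cites the observation as folklore from \cite{DvorakKOPSS25_ARXIV}. You are also right that the statement tacitly needs $|V(H)|\ge 2$, since the one-vertex tree has $\ell=0$. This does not affect \Cref{lem:bound_fen}: there the observation is applied to the tree $H^*$, obtained from $H=G-F$ by bypassing degree-$2$ vertices, and $H^*$ keeps all leaves of $H$, so it has at least two vertices whenever $|V(G)|\ge 2$.
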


\sv{\lemboundfen*}
\begin{proof}[Proof of \Cref{lem:bound_fen}]
The proof follows the ideas of Nichterlein et al.~\cite[Theorem 10]{NichterleinNUW13}.
Let $F$ be a feedback edge set for $G$ and let $f:=|F|=\fen(G)$. By definition, $H=G-F$ is a tree, since we assume $G$ to be connected. Let $H^*$ result from $H$ by bypassing all vertices having degree $2$ in both $G$ and $H$. Here, \emph{bypassing a vertex} $v$ of degree $2$ with neighbors $N_G(v)=\{u,w\}$ means deleting $v$ and adding an edge $\{u,w\}$. Notice that $\deg_H(v)=\deg_{H^*}(v)$ for all vertices $v$, i.e., all vertices have the same degree in $H$ and $H^*$. We analyze the number of vertices in $H^*$. Since we applied \Cref{rrule:leaves} exhaustively, every vertex of degree $1$ in $H^*$ is either the supply vertex $s$ or is incident to an edge of $F$. All degree $2$ vertices of $H$ were bypassed in $H^*$, thus each vertex of degree $2$ in $H^*$ is incident to an edge of $F$. In total, the number of vertices of degree at most $2$ in $H^*$ is thus $|\bigcup F \cup \{s\}| \leq |\bigcup F|+1\leq 2f+1$. In particular, $H^*$ has at most $2f+1$ leaves, thus by \Cref{obs:bound_on_vertices_of_atleast3} the number of vertices of degree at least $3$ in $H^*$ is at most $2f-1$. Hence $|V(H^*)|\leq 2f+1+2f-1=4f$. We proceed to bound the number of vertices of $H$. Since $(G,s,x,w)$ was reduced with respect to \Cref{rrule:deg2}, each edge $\{u,v\}\in E(H^*)$ resulted as a bypass of a $u$-$v$ path of length at most $3$ in $H$, thus each edge of $H^*$ corresponds to at most $2$ vertices of $H$. Since $H^*$ is a forest, we have $|E(H^*)|\leq |V(H^*)|$, thus $|V(G)|=|V(H)|\leq |V(H^*)|+2\cdot |E(H^*)|\leq 4f+2\cdot |E(H^*)| \leq 4f+2\cdot 4f= 12f$. To bound the number of edges of the graph $G$, note that $f\geq |E(G)| - |V(G)|$, thus $|E(G)|\leq f+|V(G)|\leq 13f$ and this finishes the proof of the lemma.
\end{proof}
}%toappendix

By applying the same approach as in \Cref{thm:snd_kernel_ft}, we obtain \Cref{thm:fen_kernel}.
\toappendix{
\thmfenkernel*
\begin{proof}
    Let $(G,s,x,w)$ be the input instance. We apply exhaustively \Cref{rrule:leaves,rrule:deg2} and obtain an instance $(G',s,x',w')$ with $O(\fen(G'))$ vertices and edges. This follows from~\Cref{lem:bound_fen}. Observe that $\fen(G')=\fen(G)$ because the reduction rules do not change the parameter. We then apply the algorithm from \Cref{lem:alg_reducing_weights} to obtain an equivalent instance with bit-size $O(\fen(G)^4)$.
\end{proof}
}

\appsection{Vertex Cover Number}{sec:vertex_cover} 
In this section, we study the parameter vertex cover number. Note that \MSTGcore is \FPT parameterized by the vertex cover number, because it is \FPT by treewidth~(\Cref{thm:tw_fpt}). In this section we focus on kernelization. 

From the negative side, we show that in general graphs, the problem does not admit a polynomial compression (hence a polynomial kernel) unless $\NP\subseteq \coNP/_{\poly}$~(\Cref{thm:no_vc_kernel}). This result is obtained in \Cref{subsec:kernel_lb} using a polynomial parameter transformation from the \textsc{Red-Blue Dominating Set} problem.

From the positive side, we show that there is a polynomial kernel in the class of planar graphs~(\Cref{thm:planar_vc_kernel}). To obtain the kernel we utilize \Cref{rrule:leaves} from \Cref{sec:fen}. The structural property that allows for a kernel parameterized by the vertex cover number in planar graphs is the fact that planar graphs do not contain the complete bipartite graph $K_{3,3}$ as a subgraph. This entails that the number of vertices of degree $\geq 3$ outside the vertex cover is roughly $O(\vcn^3)$. While \Cref{rrule:leaves} easily disposes of degree-$1$ vertices, vertices of degree $2$ present a greater challenge. We introduce \Cref{rrule:deg2_planar_vc}, which bounds the number of degree-$2$ vertices sharing the same pair of endpoints. Ultimately, this reduction reduces the number of remaining degree-$2$ vertices outside the vertex cover to $O(\vcn^2)$.

\appsubsection{Kernelization lower bound}{subsec:kernel_lb}
In the \textsc{Red-Blue Dominating Set}, the input is a bipartite graph $G=(R\cup B,E)$ and a positive integer~$k$. The task is to find a set $S\subseteq R,|S|\leq k$ such that $N_G(S)=B$.
%\begin{center}
%\begin{tabular}{|r|l|}
%\hline
%                    & \textsc{Red-Blue Dominating Set} \\\hline
%     \textsc{Input:} & Bipartite graph $G=(R\cup B,E)$, positive integer $k$ \\\hline
%     \textsc{Question:}& Is there $S\subseteq R$,$|S|\leq k$ such that $N_G(S)=B$? \\\hline
%\end{tabular}
%\end{center}
\begin{theorem}[\cite{DomLS09_no_kernel_rbds}]\label{thm:rbds_no_kernel}
    Unless $\NP\subseteq \coNP/_{\poly}$,
    {\sc Red-Blue Dominating Set} parameterized by $|B|$ does not admit a polynomial compression.    
\end{theorem}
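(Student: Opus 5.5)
Since \Cref{thm:rbds_no_kernel} is cited from Dom, Lokshtanov and Saurabh~\cite{DomLS09_no_kernel_rbds}, I would reprove it with the OR-cross-composition framework of Bodlaender, Jansen and Kratsch. That framework says the following. Suppose an \NP-hard language OR-cross-composes into a parameterized problem, meaning: given $t$ instances, we build in polynomial time one instance whose parameter is bounded by a polynomial in $\max_i|\mathcal{I}_i|+\log t$, and which is a yes-instance if and only if some input instance is. Then that parameterized problem has no polynomial compression unless $\NP\subseteq\coNP/_{\poly}$. I would compose \textsc{Red-Blue Dominating Set} into itself, parameterized by $|B|$.

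First, the polynomial equivalence relation. Two instances are equivalent when they have the same $|B|=b$ and the same budget $k$; malformed instances go into one extra class. We may assume $k\le b$, because any instance with $k\ge b$ is trivially decidable. After renaming we may also assume all $t$ instances $(G_i=(R_i\cup B,E_i),k)$ share the same blue set $B$. Pad $t$ to a power of two, so that each index $i$ has a binary representation $\mathrm{bit}_1(i),\ldots,\mathrm{bit}_{\log t}(i)$.

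Second, the construction, using the colors-and-IDs technique. The new blue set is $B'=B\cup\{u_{j,p,c}\mid j\in[k],p\in[\log t],c\in\{0,1\}\}$. For every instance $i$, every $r\in R_i\cup\{\bot\}$ (with $\bot$ a dummy vertex that has no neighbours in $B$), and every color $j\in[k]$, add a red vertex $(r,i,j)$. Its neighbours are:
\begin{itemize}
    \item $N_{G_i}(r)\subseteq B$;
    \item $u_{j,p,\mathrm{bit}_p(i)}$ for all $p$;
    \item $u_{j+1,p,1-\mathrm{bit}_p(i)}$ for all $p$, with colors taken cyclically modulo $k$.
\end{itemize}
The budget stays $k$. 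The new parameter is $|B'|=b+2k\log t\le b+2b\log t$, which is polynomial in $b+\log t$, as required.

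Third, correctness. Forward direction: if $G_i$ has a solution $\{r_1,\ldots,r_{k'}\}$ with $k'\le k$, pad it with $\bot$ to exactly $k$ vertices and take $\{(r_j,i,j)\mid j\in[k]\}$. This dominates $B$, and for each pair $(j,p)$ the ID vertices $u_{j,p,\mathrm{bit}_p(i)}$ and $u_{j,p,1-\mathrm{bit}_p(i)}$ are covered by colors $j$ and $j-1$ respectively.

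The backward direction is the main obstacle. I would argue by counting. There are $2k\log t$ ID vertices, and each red vertex covers exactly $2\log t$ of them. Hence a solution of size at most $k$ must consist of exactly $k$ red vertices whose ID neighbourhoods partition the ID vertices. Each ID vertex $u_{j,p,c}$ can only be covered by red vertices of color $j$ or $j-1$. For a fixed $j$, the $2\log t$ vertices $u_{j,\cdot,\cdot}$ must therefore be covered using colors $j$ and $j-1$, and the $\log t$ vertices $u_{j+1,\cdot,\cdot}$ reached by color $j$ must likewise be shared. Combining this with disjointness, I would show that exactly one red vertex of each color is chosen. The color-$j$ vertex, with index $i_j$, covers $u_{j,p,\mathrm{bit}_p(i_j)}$. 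The color-$(j-1)$ vertex must then cover the complementary $u_{j,p,1-\mathrm{bit}_p(i_{j-1})}$ exactly, which forces $\mathrm{bit}_p(i_{j-1})=\mathrm{bit}_p(i_j)$ for every $p$. Going around the cycle, all chosen vertices come from a single instance $i$. Their underlying non-dummy vertices are at most $k$ vertices of $R_i$ dominating $B$, so $G_i$ is a yes-instance. The delicate part is making the counting argument rigorous, in particular excluding two vertices of the same color when some other color is absent; disjointness of the partition is what rules this out.
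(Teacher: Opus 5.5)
Your overall strategy, an OR-cross-composition of \textsc{Red-Blue Dominating Set} into itself using colors and IDs, is a legitimate way to reprove this cited result. The equivalence relation, the forward direction and the parameter bound $b+2k\log t$ are all fine. The gap is the step you flagged as delicate, and it fails as stated: disjointness does \emph{not} force exactly one chosen vertex per color when $k$ is even.

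Write $n_j$ for the number of chosen red vertices of color $j$. The ID layer $\{u_{j,p,c}\}_{p,c}$ has $2\log t$ vertices. It is met only by the $\mathrm{bit}$-parts of color-$j$ vertices and the complement-parts of color-$(j-1)$ vertices, each contributing $\log t$ vertices. So your exact-partition property gives $n_j+n_{j-1}=2$ for every $j$, cyclically.
\begin{itemize}
\item Around an odd cycle this forces $n_j=1$.
\item For even $k$ the alternating pattern $n_j\in\{0,2\}$ is also consistent. Take two vertices $(r,i,j)$ and $(r',i',j)$ with $\mathrm{bit}_p(i')=1-\mathrm{bit}_p(i)$ for all $p$. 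They cover layers $j$ and $j+1$ exactly and disjointly, yet they come from different instances.
\end{itemize}

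Here is a concrete failure. Take $k=2$, $t=2$ and $B=\{b_1,b_2,b_3,b_4\}$.
\begin{itemize}
\item Instance $1$ has red vertices $a,c_1,c_2$ with $N(a)=\{b_1,b_2\}$, $N(c_1)=\{b_3\}$, $N(c_2)=\{b_4\}$.
\item Instance $2$ has red vertices $a',d_1,d_2$ with $N(a')=\{b_3,b_4\}$, $N(d_1)=\{b_1\}$, $N(d_2)=\{b_2\}$.
\end{itemize}
Both are non-trivial no-instances: $k<b$, there are no isolated blue vertices, and each needs three red vertices. Yet $\{(a,1,1),(a',2,1)\}$ has size $2$ and dominates all of $B'$. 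So the composed instance mixes instances, and the backward direction breaks.

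The missing ingredient is a gadget that actually enforces the coloring. The simplest repair is to add, for each $j\in[k]$, a new blue vertex $z_j$ adjacent to every red vertex of color $j$.
\begin{itemize}
\item This increases the parameter only by $k\le b$.
\item It forces at least one chosen vertex per color, hence exactly one under budget $k$.
\item Your consistency argument $\mathrm{bit}_p(i_{j-1})=\mathrm{bit}_p(i_j)$ for all $p$ then goes through verbatim, and the counting argument becomes unnecessary.
\end{itemize}
Alternatively, normalise every instance of an equivalence class to odd $k$: add a private blue--red pair and increment $k$ whenever $k$ is even. Then your cyclic counting does force $n_j=1$.

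For context, the paper gives no proof of this statement; it only imports the result from Dom, Lokshtanov and Saurabh.
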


\begin{apprestatable}{lemma}{lemrbdomsetmstg}{\label{lem:rbdomset_mstg}}
    There is a polynomial parameter transformation from \textsc{Red-Blue Dominating Set} parameterized by $|B|$ to \MSTGcore parameterized by the vertex cover number. The resulting instance $(G',s,x,w)$ of \MSTGcore satisfies $x(G')=w(\MST(G'))$.
\end{apprestatable}

\toappendix{
\sv{\lemrbdomsetmstg*}
\begin{proof}
\definecolor{setblue}{HTML}{BBDEFB}
\definecolor{setred}{HTML}{FFCDD2}
\definecolor{setgreen}{HTML}{C8E6C9}
\definecolor{setgray}{HTML}{E0E0E0}
\definecolor{darkgreen}{HTML}{2E7D32}
\definecolor{darkpurple}{HTML}{6A1B9A}
\begin{figure}
\centering
\begin{tikzpicture}[
    vertex/.style={circle, draw, minimum size=7mm, thick},
    font=\small,
]

    % --- Vertices ---
    \node[vertex, fill=setgreen] (s) at (0,3) {$s$};
    \node[vertex, fill=setgray] (g) [below=of s] {$g$};
    \node[vertex, fill=setgreen!40] (nu) [below=of g] {$\nu$};
    
    % Set R
    \node[vertex, fill=setred] (r2) [below=of nu] {$v_{r_{2}}$};
    \node[vertex, fill=setred] (r1) [left=of r2] {$v_{r_{1}}$};
    \node[vertex, fill=setred] (r3) [right=of r2] {$v_{r_{3}}$};
    
    % Set B
    \node[vertex, fill=setblue] (b2) [below=of r2] {$v_{b_{2}}$};
    \node[vertex, fill=setblue] (b1) [left=of b2] {$v_{b_{1}}$};
    \node[vertex, fill=setblue] (b3) [right=of b2] {$v_{b_{3}}$};

     % Subgraph G
    \path[fill=setgray!55, fill opacity=0.35, rounded corners=2mm]
        ($(r1.north west)+(-0.45,0.75)$)
        rectangle
        ($(b3.south east)+(0.45,-0.35)$);
    \draw[gray, thick, dashed, rounded corners=2mm]
        ($(r1.north west)+(-0.45,0.75)$)
        rectangle
        ($(b3.south east)+(0.45,-0.35)$);
    \node[anchor=north west, color=gray!70!black]
        at ($(r1.north west)+(-0.25,0.55)$) {$G$};

    % --- Left Column: Node Definitions (Aligned with Nodes) ---
    \node[left=6cm of s, anchor=west] {supply node: $x(s) = 0$};
    \node[left=6cm of g, anchor=west] {guard vertex: $x(g) = 0$};
    \node[left=6cm of nu, anchor=west] {steiner vertex: $x(\nu) = 0$};
    \node[left=6cm of r2, anchor=west] {set $R$: $x(v) = k$};
    \node[left=6cm of b2, anchor=west] {set $B$: $x(v) = 2k+4$};

    % --- Right Column: Edge Weights (Aligned with Edge Gaps) ---

    \node[](gnu) at($(g)!0.5!(nu)$) {};
    % Midpoint g to nu
    \node[right=2.43cm of gnu, anchor=west, color=darkpurple]
        {$g \leftrightarrow R \cup \{\nu\}$: $w = k + 1$};
        
    % Midpoint nu to R
    \node[](nuR) at ($(nu)!0.5!(r2)$){};
    \node[right=2.43cm of nuR, anchor=west, color=darkgreen]
        {$\nu \leftrightarrow R$: $w = k$};
        
    % Midpoint R to B
    \node[](RB) at ($(r2)!0.5!(b2)$){};
    \node[right=2.43cm of RB, anchor=west] 
        {$R \leftrightarrow B$: $w = k + 2$};

    %Midpoint s to g
    \node[](sg) at($(s)!0.5!(g)$){};
    \node[right=2.43cm of sg, anchor=west]{$s\leftrightarrow g$: $w = |B|(k+2) - (k+1)$};

    % --- Edges ---
    % s to g (ultra-thick black)
    \draw[line width=2.5pt] (s) -- (g);

    % g to nu (purple, weight k+1)
    \draw[thick, darkpurple] (g) -- (nu);

    % g to R (purple, weight k+1)
    \draw[thick, darkpurple] (g) to[bend right=30] (r1);
    \draw[thick, darkpurple] (g) to[bend right=30] (r2); 
    \draw[thick, darkpurple] (g) to[bend left=30] (r3);

    % nu to R (green, weight k)
    \foreach \i in {1,2,3} {
        \draw[thick, darkgreen] (nu) -- (r\i);
    }

    % B to R (black, weight k+2)
    \draw[thick] (b1) -- (r1);
    \draw[thick] (b1) -- (r2);
    
    \draw[thick] (b2) -- (r1);
    \draw[thick] (b2) -- (r3);
    
    \draw[thick] (b2) -- (r2);
    \draw[thick] (b3) -- (r3);

\end{tikzpicture}

    \caption{Construction of the \MSTGcore instance from a \textsc{Red-Blue Dominating Set} instance (\Cref{lem:rbdomset_mstg}). Vertex $x$-values are listed on the left, while edge weights $w$ are indicated by color-coded labels on the right. The black edges between sets $R$ and $B$ correspond to the edges of the graph $G$ from the instance of \textsc{Red-Blue Dominating Set}.}
    \label{fig:rb_ds_reduction}
\end{figure}

    Let $(G=(R\cup B,E),k)$ be an instance of \textsc{Red-Blue Dominating Set}. If $B$ contains an isolated vertex, we output a trivial no-instance as such instance of \textsc{Red-Blue Dominating Set} has no solution. There is no vertex in $R$ to cover the isolated vertex in $B$. Assume that $B$ does not contain an isolated vertex. We construct an instance $(G',s,x,w)$ of \MSTGcore as follows (see \Cref{fig:rb_ds_reduction}). The reduction here is the same as in \Cref{thm:nph_twoapex}. Here, $G$ plays the role of $\widehat{G}$. For clarity and completeness, we repeat the description and all the claims. Start with the bipartite graph $G$. Add a vertex $g$ (the \emph{guard}) connected to all of $R$, then, add the vertex $\nu$ (the \emph{steiner vertex}) and connect it to $R\cup \{g\}$. Finally, add the supply vertex $s$ and connect it to $g$. We now specify the weights $w$ and the allocation $x$. All vertices $v$ outside $R\cup B$ satisfy $x(v)=0$. The vertices $v_r\in R$ satisfy $x(v_r)=k$, and $x(v_b)=2k+4$ for $v_b\in B$. The weights of the edges are as follows:
    \begin{align*}
        &w(\{s,g\})=|B|\cdot (k+2)-(k+1),\\
        \forall v\in R\cup \{\nu\}:&w(\{g,v\})=k+1,\\
        \forall v_r\in R:&w(\{\nu,v_r\})=k,\\
        \forall \{v_r,v_b\}\in E(G):&w(\{v_r,v_b\})=k+2.
    \end{align*}
    This finishes the description of the instance $(G',s,x,w)$.

    \begin{claim}\label{claim:rbds_spanning_tree}
       $x(G')=w(\MST(G'))$ 
    \end{claim}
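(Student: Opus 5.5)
We need to prove the claim $x(G')=w(\MST(G'))$ in the construction of \Cref{lem:rbdomset_mstg}. The plan mirrors \Cref{claim:vc_spanning_tree}: compute both sides explicitly and compare.

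First, the allocation side. Only vertices of $R\cup B$ carry nonzero values, so
\[
x(G')=|R|\cdot k+|B|\cdot(2k+4).
\]

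Next, the minimum spanning tree. We run Kruskal's algorithm on $G'$ and track its choices by weight class.
\begin{enumerate}
\item The cheapest edges are the $|R|$ edges $\{\nu,v_r\}$ of weight $k$. They form a star, so all of them are taken, merging $R\cup\{\nu\}$ into one component.
\item Among the weight-$(k+1)$ edges, which all join $g$ to $R\cup\{\nu\}$, exactly one is accepted; after it, $g$ is in that component and every other such edge closes a cycle.
\item Among the weight-$(k+2)$ edges between $R$ and $B$, each $v_b\in B$ contributes exactly one. Every $v_b$ has at least one neighbor in $R$ because isolated vertices of $B$ were excluded beforehand, and once one edge at $v_b$ is taken, the others close cycles. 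This step contributes $|B|$ edges.
\item Finally, the bridge $\{s,g\}$ must be included.
\end{enumerate}
Kruskal's algorithm yields a minimum spanning tree, and the total weight is
\[
|R|\,k+(k+1)+|B|(k+2)+|B|(k+2)-(k+1).
\]

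Simplifying, the two $(k+1)$ terms cancel and the result equals $|R|k+|B|(2k+4)=x(G')$.

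The only point needing care is the step where each $B$-vertex gets exactly one incident edge. This relies on the no-isolated-vertex preprocessing, and on $G'$ being connected so that its MST is finite. Both hold here, so no real obstacle is expected.
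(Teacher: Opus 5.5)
Your proposal is correct and follows the paper's proof essentially verbatim: both compute $x(G')=|R|k+|B|(2k+4)$ and trace Kruskal's algorithm through the weight classes $k$, $k+1$, $k+2$ plus the edge $\{s,g\}$, after which the two $(k+1)$ terms cancel. Your explicit appeal to the excluded isolated $B$-vertices, and to $\{s,g\}$ being a bridge (so its position in Kruskal's order is irrelevant, e.g.\ when $|B|=1$ and its weight is $1$), only spells out what the paper leaves implicit.
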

    \begin{proof}
    Observe that $x(G')=|B|\cdot (2k+4)+|R|\cdot k$. The minimum spanning tree of $G'$ looks as follows. Consider running the Kruskal's algorithm for finding the minimum spanning tree on $G'$. First, all edges from $\nu$ to $R$ will be added. Then one of the edges with weight $k+1$ connecting the created component to the guard $g$. We can assume that the edge is $\{\nu,g\}$. Then, $|B|$ edges from $R$ to $B$ of weight $k+2$ are added and finally the edge $\{s,g\}$ is added. The total weight of the minimum spanning tree of $G'$ is thus:
    \[
    w(\MST(G'))=|R|\cdot k+(k+1)+|B|\cdot (k+2)+|B|\cdot(k+2)-(k+1)
    \]
    which is the same as $x(G')$.
    \end{proof}

    \begin{claim}\label{claim:rbds_first_dir}
        If $(G,k)$ is a yes-instance of \textsc{Red-Blue Dominating Set}, then $(G',s,x,w)$ is a yes-instance of \MSTGcore.
    \end{claim}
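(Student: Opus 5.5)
Since $(G,k)$ is a yes-instance of \textsc{Red-Blue Dominating Set}, fix $S\subseteq R$ with $|S|\leq k$ and $N_G(S)=B$. We build an explicit coalition tree $T$ in $G'$, exactly as in \Cref{claim:vc_first_dir}. Its vertex set is $\{s,g\}\cup S\cup B$. Its edges are the following:
\begin{itemize}
    \item the edge $\{s,g\}$;
    \item the edges $\{g,v_r\}$ for every $v_r\in S$;
    \item for every $v_b\in B$, one edge $\{v_r,v_b\}$ to an arbitrarily chosen neighbour $v_r\in S$.
\end{itemize}
Such a neighbour exists because $N_G(S)=B$. The graph $T$ is a tree: $s$ is a leaf at $g$, the set $S$ forms a star centred at $g$, and each vertex of $B$ is attached as a leaf to exactly one vertex of $S$. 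The steiner vertex $\nu$ is not used.

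It remains to compare $x(T)$ with $w(T)$. The only vertices of $T$ with nonzero value are those of $S$ and $B$, so
\[
x(T)=|B|\cdot(2k+4)+|S|\cdot k .
\]
The edge weights add up to
\[
w(T)=\bigl(|B|(k+2)-(k+1)\bigr)+|S|(k+1)+|B|(k+2).
\]
Subtracting gives $x(T)-w(T)=(k+1)-|S|\geq 1>0$, using $|S|\leq k$. Since $s\in V(T)$, the tree $T$ witnesses that $(G',s,x,w)$ is a yes-instance.

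No step here is a real obstacle. The one point to check is that every blue vertex can be hung off a chosen red vertex, and this is exactly the domination condition. The rest is the same arithmetic as in the proof of \Cref{thm:nph_twoapex}.
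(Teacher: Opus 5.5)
Your proposal is correct and follows essentially the same argument as the paper: you build the same tree (the edge $\{s,g\}$, a star from $g$ to $S$, and each blue vertex hung on a dominating red neighbour), and you get $x(T)-w(T)=(k+1)-|S|\geq 1$ by the same computation. Your explicit check that $T$ is acyclic and avoids $\nu$ is a small addition that the paper leaves implicit.
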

    \begin{proof}
    Suppose $(G,k)$ is yes-instance of \textsc{Red-Blue dominating Set} and let $S\subseteq R$ be a solution of $(G,k)$. We construct a solution $T$ in $(G',s,x,w)$ as follows. Include the vertices $s$ and $g$ and connect $g$ to $R$ using the edges $\{g,v_r\},v_r\in S$. Finally, add all vertices of $B$ as leaves. Since $N_G(S)=B$, there is an available neighbor in $S$ for $v_b\in B$. We have
    \begin{align*}
        x(T)&=|B|\cdot (2k+4)+|S|\cdot k,\\
        w(T)&=|B|\cdot (k+2) + |S|\cdot (k+1)+ |B|\cdot (k+2)-(k+1) .
    \end{align*}
    It follows that $x(T)-w(T)=-|S|+(k+1)\geq 1>0$, because $|S|\leq k$. Thus, $(G',s,x,w)$ is a yes-instance.
    \end{proof}

    \begin{claim}\label{claim:rbds_second_dir}
        If $(G',s,x,w)$ is a yes-instance of \MSTGcore, then $(G,k)$ is a yes-instance of \textsc{Red-Blue Dominating Set}.
    \end{claim}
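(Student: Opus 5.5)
The plan is to replay the backward direction of \Cref{thm:nph_twoapex} (\Cref{claim:vc_second_dir}) almost verbatim. There, $\widehat{G}$ had only two properties that were used: it is bipartite with sides $R$ and $B$, and no vertex of $B$ is isolated. Both hold for $G$ here.

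Let $T$ be a solution of $(G',s,x,w)$ maximizing $x(T)-w(T)$. Write $V_R=V(T)\cap R$ and $V_B=V(T)\cap B$.

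\textbf{Step 1 (Bird allocation).} Define the Bird allocation $y$ from the minimum spanning tree described in \Cref{claim:rbds_spanning_tree}:
\begin{itemize}
\item $y(v_b)=k+2$ for $v_b\in B$,
\item $y(v_r)=k$ for $v_r\in R$,
\item $y(\nu)=k+1$,
\item $y(g)=|B|(k+2)-(k+1)$.
\end{itemize}
Since $y$ lies in the core, $y(T)\le w(T)<x(T)$.

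\textbf{Step 2 ($g\in V(T)$ and $\nu\notin V(T)$).} The vertex $g$ is in $T$ because it is the only neighbour of $s$. If $\nu\in V(T)$, then $x(T)-y(T)=(|V_B|-|B|)(k+2)\le 0$, which contradicts Step 1.

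\textbf{Step 3 (vertices of $V_B$ are leaves).} Suppose some $v_b\in V_B$ has $\deg_T(v_b)\ge 2$. Pick a neighbour $v_r$ of $v_b$ whose removal-side does not contain $g$. Swap the edge $\{v_r,v_b\}$, of weight $k+2$, for $\{g,v_r\}$, of weight $k+1$. This lowers the weight by $1$, contradicting the maximality of $x(T)-w(T)$.

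\textbf{Step 4 (counting).} Every $v_r\in V_R$ is now adjacent to $g$, since $\nu\notin V(T)$ and $B$-vertices are leaves, so the only route to $g$ is the direct edge. This gives
\[
x(T)-w(T)=|V_B|(k+2)-|V_R|-|B|(k+2)+(k+1)>0.
\]
This is exactly \eqref{eqn:vc}, so the arguments of \Cref{claim:vc_vbb,claim:vc_vrk} yield $|V_B|=|B|$ and $|V_R|\le k$.

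\textbf{Step 5 (conclusion).} Each $v_b\in B$ is a leaf of $T$ attached to some vertex of $V_R$, because its only neighbours in $G'$ lie in $R$. Hence $N_G(V_R)=B$, and $V_R$ is a red-blue dominating set of size at most $k$.

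\textbf{Main obstacle.} The delicate step is Step 3, the leaf-exchange argument. One has to choose $v_r$ correctly so that the swap keeps $T$ a tree: take the neighbour of $v_b$ lying on the component of $T\setminus\{v_r,v_b\}$ not containing $g$. Such a neighbour exists because $v_b$ has at least two $T$-neighbours and at most one of them lies on the path from $v_b$ to $g$. Everything else is bookkeeping copied from the earlier proof.
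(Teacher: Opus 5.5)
Your proposal is correct and follows the paper's proof essentially step for step. The paper takes the same route: a solution maximizing $x(T)-w(T)$, Bird's core allocation $y$ to exclude $\nu$, the exchange of $\{v_r,v_b\}$ for $\{g,v_r\}$ to make every vertex of $V_B$ a leaf, and then the same inequality yielding $|V_B|=|B|$ and $|V_R|\le k$ — the paper simply repeats the claims of the $2$-apex reduction here, just as you propose.
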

    \begin{proof}
        Suppose that there is a solution $T$ for $(G',s,x,w)$, i.e., $x(T)-w(T)>0$ and $s\in V(T)$. Suppose moreover that $x(T)-w(T)$ is maximum possible. Consider the allocation $y\colon V(G')\to \mathbb{Z}$ given by the greedy algorithm computing the core allocation from the minimum spanning tree from the proof of \Cref{claim:rbds_spanning_tree}: $y(v)$ is given by the weight of the first edge on the path from $v$ to $s$ in a minimum spanning tree, and $y(s)=0$ (see~\cite{bird1976cost}). The allocation $y$ is given by:
        \begin{align*}
            \forall v_b\in B:&y(v_b)=k+2, \\
            \forall v_r\in R:&y(v_r)=k, \\
            &y(\nu)=w(\{g,\nu\}) = k+1, \\
            &y(g)=w(\{s,g\})=|B|\cdot (k+2)-(k+1).
        \end{align*}

        Since $y$ is in the core, we have $y(T)\leq w(T)$. In particular, $x(T)>y(T)$, thus $x(T)-y(T)>0$. Let $V_B=V(T)\cap B$ and $V_R=V(T)\cap R$.

        \begin{claim}\label{claim:rbds_guard_steiner}
            $g\in V(T)$ and $\nu \notin V(T)$.
        \end{claim}
        \begin{proof}
            Notice that $g$ is the only vertex connected to $s$, so $g\in V(T)$. For the sake of contradiction assume that $\nu \in V(T)$. Then $x(T)= |V_B|\cdot (2k+4)+|V_R|\cdot k$. On the other hand, $y(T)=|V_B|\cdot (k+2)+|V_R|\cdot k+(k+1)+|B|\cdot (k+2)-(k+1)$. Notice that $x(T)-y(T)=(|V_B|-|B|)(k+2)\leq 0$ because $|V_B|\leq |B|$. A contradiction, because $x(T)-y(T)>0$.
        \end{proof}

        \begin{claim}\label{claim:rbds_leaves}
            All vertices in $V_B$ are leaves in $T$.
        \end{claim}
        \begin{proof}
            For the sake of contradiction, suppose that there is a vertex $v_b\in V_B$ with $\deg_T(v_B)\geq 2$. Let $v_r\in V_R$ be a neighbor of $v_b$ in $T$ such that the tree $T\setminus e$ where $e=\{v_r,v_b\}$ contains no path from $g$ to $v_r$. Note that such $v_r$ must exist as otherwise the paths together with the edges incident to $v_b$ would create a cycle in $T$, which is impossible, because $T$ is a tree. Since $g$ and $v_r$ lie in distinct connected components of $T\setminus e$, we can connect them via the edge $\{g,v_r\}$, creating a new tree $T'$. Notice that $w(T')=w(T)-(k+2)+(k+1)=w(T)-1$, but this contradicts the maximality of $x(T)-w(T)$.
        \end{proof}

        By \Cref{claim:rbds_guard_steiner,claim:rbds_leaves}, every vertex of $V_B$ is connected via exactly one edge to a vertex of $V_R$ and every vertex of $V_R$ is connected to $g$ via an edge of weight $k+1$. It follows that
        \begin{align*}
            x(T)&= |V_B|\cdot (2k+4)+|V_R|\cdot k \\
            w(T)&=|V_B|\cdot (k+2)+|V_R|\cdot(k+1)+|B|(k+2)-(k+1)
        \end{align*}
        The assumption $x(T)>w(T)$ is equivalent to $x(T)-w(T)>0$. Substituting the expressions for $x(T)$ and $y(T)$ into this inequality yields
        \begin{align}\label{eqn:rbds}
            |V_B|\cdot (k+2)-|V_R|-|B|\cdot(k+2)+(k+1)>0 
        \end{align}

        \begin{claim}
            $|V_B|=|B|$.
        \end{claim}
        \begin{proof}
        It is sufficient to prove both inequalities $|V_B| \leq |B|$ and $|V_B| \geq |B|$. For the first inequality, we note that $|V_B| \leq |B|$ holds because $V_B \subseteq B $. Now for proving $|V_B| \geq |B|$, for the sake of contradiction we can assume that $|V_B| < |B|$. Since both numbers are integers, this is equivalent to $|V_B| \leq |B| -1$. By plugging into (\ref{eqn:rbds}) we get:
        \[
        (|B|-1)\cdot (k+2)-|V_R|-|B|\cdot(k+2)+(k+1)>0
        \]
        which entails $-1 > |V_R|$, which is impossible.
        \end{proof}

        \begin{claim}
            $|V_R|\leq k$.
        \end{claim}
        \begin{proof}
        We note that by plugging $|V_B| = |B|$ into \Cref{eqn:rbds} we obtain that $(k+1)>|V_R|$, i.e., $|V_R|\leq k$, as claimed.
        \end{proof}
    Since $T$ is connected and each $v_b\in V_B$ has a neighbor in $V_R$, it follows that $N_G(V_R)\supseteq V_B=B$, thus $V_R$ is a red-blue dominating set of size at most $k$ for $(G,k)$. This finishes the proof of \Cref{claim:rbds_second_dir}.
    \end{proof}

    The correctness of the construction follows from \Cref{claim:rbds_first_dir,claim:rbds_second_dir}. Note that the reduction can be computed in polynomial time. It remains to argue about the parameters. Note that the set $B\cup \{\nu,g\}$ is a vertex cover of $G'$, thus $\vcn(G')\leq |B|+2$. Thus, the reduction is a polynomial parameter transformation. This finishes the proof of \Cref{lem:rbdomset_mstg}.
\end{proof}
}%toappendix

As a direct corollary of \Cref{thm:rbds_no_kernel} and \Cref{lem:rbdomset_mstg} we obtain the following theorem.

\begin{theorem}\label{thm:no_vc_kernel}
    Unless $\NP\subseteq \coNP/_{\poly}$, \MSTGcore parameterized by the vertex cover number does not admit a polynomial compression.
\end{theorem}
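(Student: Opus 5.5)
The theorem follows by composing the two results just established. First, assume for contradiction that \MSTGcore parameterized by the vertex cover number admits a polynomial compression, i.e., a polynomial-time algorithm mapping an instance $(G',s,x,w)$ with parameter $\vcn(G')$ to an instance of some fixed language $L$ of size bounded by a polynomial in $\vcn(G')$.

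Second, given an instance $(G=(R\cup B,E),k)$ of \textsc{Red-Blue Dominating Set} parameterized by $|B|$, apply the polynomial parameter transformation of \Cref{lem:rbdomset_mstg}. It runs in polynomial time, outputs an equivalent instance $(G',s,x,w)$, and guarantees $\vcn(G')\le |B|+2$ (as $B\cup\{\nu,g\}$ is a vertex cover). In the degenerate case where $B$ contains an isolated vertex the transformation outputs a fixed trivial no-instance, whose vertex cover number is constant. Either way the parameter is bounded polynomially in $|B|$.

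Third, feed $(G',s,x,w)$ to the assumed compression. This yields an instance of $L$ of size $\poly(\vcn(G'))\le\poly(|B|+2)=\poly(|B|)$, equivalent to the original \textsc{Red-Blue Dominating Set} instance, all in polynomial time. That is a polynomial compression for \textsc{Red-Blue Dominating Set} parameterized by $|B|$, which by \Cref{thm:rbds_no_kernel} implies $\NP\subseteq\coNP/_{\poly}$. This is the standard closure of polynomial compressions under polynomial parameter transformations (cf.~\cite{cygan2015parameterized}).

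Essentially all the work sits in \Cref{lem:rbdomset_mstg}, which is already proved, so no step here is a real obstacle. The only points to check are that equivalence is preserved and that the trivial no-instance case also has bounded parameter. Since a polynomial kernel is in particular a polynomial compression, the nonexistence of a polynomial kernel follows as well.
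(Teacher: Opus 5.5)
Your proposal is correct and matches the paper exactly: the paper obtains this theorem as a direct corollary of composing the polynomial parameter transformation of \Cref{lem:rbdomset_mstg} (which gives $\vcn(G')\le |B|+2$) with the compression lower bound of \Cref{thm:rbds_no_kernel}. You simply spell out the standard closure of polynomial compressions under polynomial parameter transformations, including the trivial no-instance case, which the paper leaves implicit.
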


\appsubsection{Polynomial kernel for planar graphs}{subsec:kernel_planar}
The ingredients to obtain a polynomial kernel in planar graphs are as follows. We make use of \Cref{rrule:leaves} that deals with vertices of degree $1$ and we introduce a new reduction rule that deals with many vertices of degree two with two fixed endpoints. Fix two vertices $u,v$ and let $v_1,v_2,\ldots,v_k$ be all the vertices with $N(v_i)=\{u,v\}$. Notice that at most one vertex among $v_1,\ldots,v_k$ will have degree $2$ in a possible solution. Let $e_i^v,e_i^v$ be the two edges incident to $v_i$ (with other endpoint $v$ and $u$, respectively).
Using $v_i$ as a degree-$2$ vertex has cost $c_i:=x(v_i)-w(e_i^v)-w(e_i^u)$. All the remaining vertices will be attached as leaves either to $u$ or $v$ (or none). If $v_i$ is used as a leaf, then the gain is either $x(v_i)-w(e_i^u)$ or $x(v_i)-w(e_i^v)$. If both of these quantities are negative, we may as well not use $v_i$ at all and the gain is $0$. Hence using $v_i$ as a leaf has cost $\ell_i:=\max\{x(v_i)-w(e_i^v),x(v_i)-w(e_i^u),0\}$. Note that the behavior of the vertex is determined by the relative order of the numbers $x(v_i),w(e_i^u),w(e_i^v)$. Since vertices of the same type will have uniform behavior, we can merge all of them into one super vertex $v_{\Sigma}$.

We now describe the situation more formally. The \emph{type} of $v_i$ is an ordered triple $(\varphi,\prec_1,\prec_2)$ where $\varphi$ is a bijection $\varphi\colon \{1,2,3\}\to \{v_i,e_i^v,e_i^v\}$ and $\prec_j\in \{<,=\}$ for $j\in\{1,2\}$. A vertex $v_i$ is of type $(\varphi,\prec_1,\prec_2)$ if $\xi(\varphi(1))\prec_1\xi(\varphi(2))\prec_2\xi(\varphi(3))$ where $\xi$ is either $x$ or $w$ based on whether $\varphi(\cdot)$ is a vertex or an edge. Formally $\xi\colon V\cup E\to \mathbb{Q}^+_0$, $\xi=w\cup x$.
Notice that every vertex is of exactly one type and there are at most $3!\cdot 2^2=24$ distinct types of vertices for every fixed $u,v$. We are now ready to formulate \Cref{rrule:deg2_planar_vc}.

\begin{rrule}\label{rrule:deg2_planar_vc}
    Let $(G,s,x,w)$ be an instance of \MSTGcore. Let $u,v\in V(G)$ be two vertices. Let $v_1,v_2,\ldots,v_k (k>2)$ be vertices avoiding $s$ and satisfying $N_G(v_i)=\{u,v\}$. Suppose that all of $v_i$ are of the same type. Let $i^*=\operatorname{argmax}_{i\in[k]}\{c_{i}+\sum_{j\neq i}\ell_j\}$. Merge the vertices in $\{v_1,v_2,\ldots,v_k\}\setminus \{v_i^*\}$ into a new vertex $v_{\Sigma}$, creating a new graph $G'$. All vertices and edges from the original instance have their values and weights unchanged. Let
    \begin{align*}
        x'(v_{\Sigma})=\sum_{i\neq i^*}x(v_i), \quad&& w'(\{v_{\Sigma},u\})=\sum_{i\neq i^*}w(e_i^u),\quad&& w'(\{v_{\Sigma},v\})=\sum_{i\neq i^*}w(e_i^v).
    \end{align*}
    Output the instance $(G',s,x',w')$.
\end{rrule}

\begin{apprestatable}{lemma}{lemdegtwoplanarvccorrectness}
    \Cref{rrule:deg2_planar_vc} is correct.
\end{apprestatable}
\toappendix{
\sv{\lemdegtwoplanarvccorrectness*}
\begin{proof}
    Let $(G,s,x,w)$ be an instance of \MSTGcore and let $(G',s,x',w')$ result from $(G,s,x,w)$ by application of \Cref{rrule:deg2_planar_vc}. We show that $(G,s,x,w)$ is a yes-instance if and only if $(G',s,x',w')$ is a yes-instance.

    $\Rightarrow$: Let $T$ be a solution for $(G,s,x,w)$. If $T$ avoids the vertices $v_1,\ldots,v_k$ then it is a solution for $(G',s,x',w')$. Otherwise, assume that $V(T)\cap \{v_1,v_2,\ldots,v_k\}\neq \emptyset$.
    \begin{description}
        \item[Case 1]: All vertices $v_i\in \{v_1,v_2,\ldots,v_k\}\cap V(T)$ are leaves of $T$. Since $T$ is connected, a vertex $v_i$ can only be a leaf if at least one of its neighbors ($u$ or $v$) are in $T$. We split this into three subcases.
        \begin{description}
            \item[Subcase 1.1.] $u\in V(T), v\notin V(T)$. The vertices $v_i$ can only be attached to $u$. For each $v_i$, including it as a leaf from $u$ has gain $x(v_i)-w(e_i^u)$ which has the same sign for every $i$, since all $v_i$ are of the same type. Hence we either include all of them (in case the gain is nonnegative) or none of them.
            \item[Subcase 1.2.] $v\in V(T), u\notin V(T)$. This is symmetric to subcase 1.1., the vertices can only be attached to $v$.
            \item[Subcase 1.3.] $u\in V(T), v\in V(T)$. Each $v_i$ can be attached to either $u$ or $v$. The maximum leaf gain for any $v_i$ is $\ell_i = \max\{x(v_i)-w(e_i^u), x(v_i)-w(e_i^v), 0\}$. Because all vertices share the same type, the choice that achieves this maximum is uniform across all $i \in [k]$. That is, either they should all be leaves at $u$, all be leaves at $v$, or all be excluded.
        \end{description}

        We obtain a solution $T'$ in $(G',s,x',w')$ by including the super vertex $v_{\Sigma}$ and possibly $v_{i^*}$ as leaves from the same vertices (either $u$ or $v$) as in $T$. As argued above, in all cases, either all vertices should be leaves from $u$, or leaves from $v$, or not touched at all.
        
        \item[Case 2]: At least one vertex $v_{i'}\in \{v_1,v_2,\ldots,v_k\}\cap V(T)$ has degree $2$ in $T$ for some $i'$. Note that exactly one such vertex can have degree $2$, as two or more would create a cycle in $T$. Consequently, both $u$ and $v$ must belong to $V(T)$, and any remaining vertex $v_i$ ($i\neq i'$) can only be a leaf or excluded.

        The total gain contribution of this configuration is precisely captured by the term $c_{i'}+\sum_{i\neq i'}\ell_i$. By definition, $i^*$ maximizes this exact algebraic expression. Thus, if $i'\neq i^*$, we can exchange $v_{i'}$ with $v_{i^*}$ in $T$ (making $v_{i^*}$ the degree-2 vertex and $v_{i'}$ a leaf/excluded matching the behavior of the other vertices) without decreasing the objective value $x(T) - w(T)$.

        Assuming without loss of generality that $i' = i^*$, all remaining vertices $v_i,i \neq i^*$ behave uniformly as leaves at $u$, leaves at $v$, or are excluded (per Subcase 1.3.). We form $T'$ by retaining $v_{i^*}$ as a degree-2 vertex and adding the super-vertex $v_{\Sigma}$ to $G'$ as a leaf at $u$, a leaf at $v$, or omitting it entirely, matching the optimal choice of the remaining vertices.
        
    \end{description}

    $\Leftarrow$: Let $T'$ be a solution for $(G',s,x',w')$. If $T'$ does not touch the newly created vertices, then $T:=T'$ is a solution in $(G,s,x,w)$. Otherwise, we create a solution $T$ of $(G,s,x,w)$ as follows. If $T'$ contains the vertex $v_{\Sigma}$, include all vertices $v_i$ with $i\neq i^*$ in $T$ with all its incident edges. Similarly with $v_{i^*}$ with corresponding incident edges. If $v_{\Sigma}$ has degree $2$ in $T'$, we would create a cycle, hence we delete all but one edge of the form $\{v_i,u\}$ where $i\in [k]\setminus \{i^*\}$. Note that since all the weights are nonnegative, $x(T)-w(T)\geq x'(T')-w'(T')$, hence $T$ is a solution for $(G,s,x,w)$, as we wanted to show.
    
\end{proof}
}%toappendix

\begin{apprestatable}{lemma}{lemvckernelbound}\label{lem:vc_kernel_bound}
    Let $(G,s,x,w)$ be an instance of \MSTGcore reduced with respect to \Cref{rrule:leaves,rrule:deg2_planar_vc}. If $G$ is planar, then $G$ has at most $O(\vcn(G)^3)$ vertices and edges.
\end{apprestatable}
\toappendix{
\sv{\lemvckernelbound*}
\begin{proof}    
    Let $S$ be a vertex cover of $G$ of size $\vcn(G)$. Let $S'=S\cup \{s\}$.
    We bound the number  of vertices outside of $S'$. Note that $G\setminus S'$ is, by definition, edgeless, hence $N_G(v)\subseteq S'$ for every $v\in V(G)\setminus S'$. Notice that no vertices of degree $1$ outside $S'$ can exist because we exhaustively applied \Cref{rrule:leaves}. For vertices of degree $2$, notice that for every choice of $\{u,v\}\subseteq S'$ there are at most $2$ vertices of each type (as otherwise we could apply \Cref{rrule:deg2_planar_vc}). Since there are at most $24$ distinct types, there are at most $48$ such vertices for any choice of $\{u,v\}$. Hence in total, there are at most $48\binom{|S'|}{2}$ vertices of degree $2$ outside $S'$. Finally, consider vertices of degree $\geq 3$ outside $S'$. Notice that there cannot be more than $2\binom{|S'|}{3}$ of them, because for any three vertices $\{u_1,u_2,u_3\}\subseteq S'$ there can be at most two vertices $v$ outside $S'$ with $N_G(v)\supseteq \{u_1,u_2,u_3\}$ as otherwise $G$ has $K_{3,3}$ as a subgraph, which is impossible, because $G$ is planar. Note that $|S'| \leq \vcn+1$. Hence, outside $S'$, we have at most $1+48\binom{|S'|}{2}+2\binom{|S'|}{3}\leq O(|S'|^3)\leq O(\vcn^3)$ vertices, hence the total number of vertices of $G'$ is at most $|S'|+O(\vcn^3)=O(\vcn^3)$ vertices. Since the graph is planar, it also has $O(\vcn^3)$ edges and this proves the theorem.
\end{proof}
}%toappendix

By applying the same approach as in \Cref{thm:snd_kernel_ft,thm:fen_kernel}, we obtain \Cref{thm:planar_vc_kernel}.

\begin{apprestatable}{theorem}{thmplanarvckernel}\label{thm:planar_vc_kernel}
\MSTGcore admits a polynomial kernel parameterized by the vertex cover number if the input graph is planar.
\end{apprestatable}

\toappendix{
\sv{\thmplanarvckernel*}
\begin{proof}
    Let $(G,s,x,w)$ be the input instance and suppose that $G$ is planar. We apply exhaustively \Cref{rrule:leaves,rrule:deg2_planar_vc} and obtain an instance $(G',s,x',w')$ with $O(\vcn(G')^3)$ vertices and edges. This follows from \Cref{lem:vc_kernel_bound}. Observe that the reduction rules do not increase the vertex cover, and they preserve planarity, hence $\vcn(G')\leq \vcn(G)$. By \Cref{lem:vc_kernel_bound} the reduced instance has $O(\vcn(G)^3)$ vertices. We then apply the algorithm from \Cref{lem:alg_reducing_weights} to obtain an equivalent instance with bit-size $O(\vcn(G)^{12})$.
\end{proof}
}

\section{Conclusion}\label{sec:conclusion}
We studied the computational complexity of deciding core membership for minimum-cost spanning tree games, formalized as the \MSTGcore problem. We strengthened previous \NP-hardness results by proving that the problem remains hard even when the underlying graph is close to being planar. On the algorithmic side, we presented positive results within the framework of parameterized complexity. Specifically, we showed that the problem is \FPT parameterized by the support size of the allocation, signed neighborhood diversity, and treewidth. Furthermore, we obtained polynomial kernels for the signed neighborhood diversity, the feedback edge number and, on planar graphs, for the vertex cover number. Complementing these positive results, we proved a kernelization lower bound for the vertex cover number on general graphs. We conclude by proposing three open questions for future research:
\begin{enumerate}
    \item \textbf{Does \MSTGcore remain \NP-hard on planar graphs even when the efficiency condition holds?} Because we found no direct algorithmic application of the efficiency condition, we conjecture that the problem remains hard even in this setting.
    \item \textbf{What is the parameterized complexity of \MSTGcore with respect to signed variants of modular-width and clique-width?}
    \item \textbf{Does \MSTGcore admit a polynomial kernel parameterized by the feedback vertex set number in planar graphs?}
\end{enumerate}

%\begin{credits}
%\subsubsection{\ackname} A bold run-in heading in small font size at the end of the paper is
%used for general acknowledgments, for example: This study was funded
%by X (grant number Y).

%\subsubsection{\discintname}
%It is now necessary to declare any competing interests or to specifically
%state that the authors have no competing interests. Please place the
%statement with a bold run-in heading in small font size beneath the
%(optional) acknowledgments\footnote{If EquinOCS, our proceedings submission
%system, is used, then the disclaimer can be provided directly in the system.},
%for example: The authors have no competing interests to declare that are
%relevant to the content of this article. Or: Author A has received research
%grants from Company W. Author B has received a speaker honorarium from
%Company X and owns stock in Company Y. Author C is a member of committee Z.
%\end{credits}
%
% ---- Bibliography ----
%
% BibTeX users should specify bibliography style 'splncs04'.
% References will then be sorted and formatted in the correct style.
%
\clearpage
 \bibliographystyle{splncs04}
 \bibliography{references}

\clearpage
\appendix
\appendixText

\end{document}